\documentclass[draftclsnofoot,onecolumn]{IEEEtran}
\usepackage{amsmath,amsfonts,amssymb,amsthm}
\usepackage{array}
\usepackage{textcomp}
\usepackage{stfloats}
\usepackage{url}
\usepackage{verbatim}
\usepackage{pdfpages}
\usepackage{cite}
\newtheorem{theorem}{Theorem}
\newtheorem{lemma}{Lemma}
\newtheorem{definition}{Definition}

\newtheorem{remark}{Remark}
\usepackage{algpseudocode}
\usepackage{algorithmicx,algorithm}

\usepackage{array}
\usepackage{subfigure} 

\usepackage{booktabs}
\usepackage{multirow}
\usepackage{makecell}
\usepackage{diagbox}
\usepackage{color}

\usepackage{url}

\usepackage{flushend}

\begin{document}

\title{On the Optimal Integer-Forcing Precoding: A Geometric Perspective and a Polynomial-Time Algorithm}

\author{Junren Qin, Fan Jiang, Tao Yang, Shanxiang Lyu, Rongke Liu, \IEEEmembership{Senior Member, IEEE}, Shi Jin, \IEEEmembership{Fellow, IEEE}
\thanks{
	This work was supported in part by Mobile Information Networks National Science and Technology Major Project under Grant No.2025ZD1305200, in part by Low-Altitude Airspace Strategic Program Portfolio under Grant Z25306105.
	
	Junren Qin is with the School of Electronics Information Engineering, Beihang University, Beijing 100191, China, the Pengcheng Laboratory (PCL), Shenzhen 518066, China, and the Shenzhen Institute of Beihang University, Shenzhen, China (e-mail: junrenqin@buaa.edu.cn).
	
	Fan Jiang is with the Pengcheng Laboratory (PCL), Shenzhen 518066, China (e-mail: jiangf02@pcl.ac.cn).
	
	Shanxiang Lyu is with the College of Cyber Security, Jinan University, Guangzhou 510632, China (E-mail: lsx07@jnu.edu.cn).
	
	Tao Yang are with the School of Electronics Information Engineering, Beihang University, Beijing 100191, China (e-mail: tyang@buaa.edu.cn;).
	
	Rongke Liu is with the Shenzhen Institute of Beihang University, Shenzhen, China, and is also with the School of Electronics Information Engineering, Beihang University, Beijing 100191, China (e-mail: rongke\_liu@buaa.edu.cn).
	
	Shi Jin is with the School of Information Science and Engineering, Southeast University, Nanjing 210096, China (e-mail: jinshi@seu.edu.cn).
	
	Corresponding author: Rongke Liu, Tao Yang, Fan Jiang.
	}
}



\maketitle

\begin{abstract}
The joint optimization of the integer matrix $\mathbf{A}$ and the power scaling matrix $\mathbf{D}$ is central to achieving the capacity-approaching performance of Integer-Forcing (IF) precoding. This problem, however, is known to be NP-hard, presenting a fundamental computational bottleneck. In this paper, we reveal that the solution space of this problem admits a intrinsic geometric structure: it can be partitioned into a finite number of conical regions, each associated with a distinct full-rank integer matrix $\mathbf{A}$. Leveraging this decomposition, we transform the NP-hard problem into a search over these regions and propose the Multi-Cone Nested Stochastic Pattern Search (MCN-SPS) algorithm. Our main theoretical result is that MCN-SPS finds a near-optimal solution with a computational complexity of $\mathcal{O}\left(K^4\log K\log_2(r_0)\right)$, which is polynomial in the number of users $K$. Numerical simulations corroborate the theoretical analysis and demonstrate the algorithm's efficacy.
\end{abstract}

\begin{IEEEkeywords}
Overload MIMO, Integer-forcing precoding, Nonlinear optimization, Lattice basis reduction, Hybrid optimizer.
\end{IEEEkeywords}

\section{Introduction}
\IEEEPARstart{B}{enifited} from creating spatial diversity and performing spatial multiplexing, multiple-input multiple-output (MIMO) demonstrates powerful performance, and becomes the core technology of modern communication \cite{hampton2013MIMO}. Developed through the massive MIMO (m-MIMO) \cite{hassibi2003muchmassivemimo} in fifth-generation (5G) and the Extremely large-scale MIMO (XL-MIMO) \cite{wang2023XLMIMO} in sixth-generation (6G), the reliability of communication is enhanced by the channel hardening effect \cite{hochwald2004multiple} aroused when the number of antennas at base station (BS) $N$ is larger than the number of single antenna user equipments (UEs) $K$. Towards the massive communication and ubiquitous connectivity applications in 6G, the demand for increased connectivity and user density has become a key performance indicator, and the increasing trend of these parameters is expected to continue in near future \cite{rajatheva2020white}. This trend rise an attention on the overload MIMO system, in which $K\geq N$ and the occurrence of channel hardening effects is impeded by the limited spatial freedom.  

Considering the downlink, the messages are delivered from BS to UEs, which enables all messages of UEs to broadcast with full bandwidth. The capacity region of MIMO broadcasting (BC) channel can be established from the capacity region of MIMO multiple-access channel via the uplink-downlink duality, and can be achieved so far by dirty-paper coding (DPC) \cite{costa1983writing,yu2004sum}. Unfortunately, since the high implementation complexity, DPC is widely regarded as a theoretical tool. To decrease the complexity under a tradeoff on performance, Tomlinson-Harashima Precoding (THP) \cite{sun2014tomlinson} employs the decision feedback equalization (DFE) into the transmitter side. Furthermore, vector-perturbation (VP) \cite{hochwald2005vector,avner2015vector} precoding regards the interference cancellation process of DFE as the solution of closest vector problem (CVP) in lattice theory \cite{micciancio2002complexity}. To minimize transmission power, VP employs an accurate sphere decoder (SD) combined Lenstra-Lenstra-Lov$\acute{a}$sz (LLL) algorithm \cite{nguyen2010lll} and sphere decoding algorithm \cite{vikalo2005sphere}. These methods are all realized under the inspiration of interference canceling, which cannot be directly applied to the overload MIMO. As a large-scale UEs are distributed into less BS in the channel, the multi-user inference (MUI) is unable to eliminate completely, making it defective to the idea of interference canceling. Meanwhile, the lack of spatial dimension in the overload MIMO decrease the accuracy of successive interference cancellation (SIC) in THP, and lose the effectiveness of LLL algorithm in VP. These lead to a weak performance on THP and a high complexity on VP.

On the other hand, the linear precoding rectify MUI by multiplying a precoding matrix related to channel matrix with the signal or message, and becoming a large class due to its easy-to-implement characteristics. The popular methods in this category are zero-forcing (ZF) precoding and regularized ZF (RZF) precoding \cite{zhang2021local,chen2022generalized,wang2024efficient}. For the precoding matrix, ZF employs the inverse (or pseudo inverse) of channel matrix, and RZF conducted a normalization matrix by minimum mean square error (MMSE) filtering. When $N\gg K$ in MMIMO or XL-MIMO, both ZF and RZF can approach the channel capacity \cite{hampton2013MIMO,albreem2021overview}, but the situation is quite the opposite in the overload MIMO. Regarding the ZF, since the channel matrix is full column-rank, the multiplication between channel matrix and its pseudo inverse outputs the least norm and least square solution, and the gap between the outputs and identity matrix are enlarged during the ratio $K/N$ expansion. Meanwhile, in the pseudo inverse of channel matrix, the expansion of condition number under the overload MIMO cause the increased additive noise contained in the outputs of precoder. Although these flaws are attenuated in RZF by the MMSE filter, the effects caused by overload MIMO are still non-negligible and an adaptable solution for overload MIMO is required.

Recently, based on the lattice-reduction-aided (LRA) precoding \cite{stern2016advanced} and compute-and-forward (CF) \cite{hong2012reverse}, the integer-forcing (IF) precoding \cite{silva2017integer} provides a effective methods to restrain MUI without a significant increased complexity. Rather than separating users' codeword by the equalization and beamforming, IF precoder applies inverse linear transforming to correspond the integer-linear combination (ILC) to the desired message of that user \cite{silva2017integer}. As this reason, IF precoder can universally achieve the MIMO capacity to within a constant gap \cite{ordentlich2014precoded}. Regarding the overload MIMO, the ILC provides a degree of freedom to offset the effects caused by the dimensional flaw, which make IF precoding exhibiting a good adaptability to overloaded MIMO. In the IF precoding, the performance relates to the selection of a proper linear precoding matrix $\mathbf{P}$. To achieve this optimization for general signal-to-noise ratio (SNR), \cite{silva2017integer} shows that the optimal selection of $\mathbf{P}$ is to choose a proper (unimodular) integer-valued matrix $\mathbf{A}$ and the beamforming matrix $\mathbf{D}$ such that $\mathbf{H}\mathbf{P}\approx c\mathbf{D}\mathbf{A}$, and becomes equivalent when SNR$\rightarrow\infty$. As a results, the research on IF precoder designing mainly focuses on the joint optimization between $\mathbf{A}$ and $\mathbf{D}$. For the special case that $K=2$, Silva \textit{et. al} \cite{silva2017integer} provide the optimal pair $(\mathbf{A},\mathbf{D})$. in high SNR. In general case, He \textit{et. al} \cite{he2018uplink} propose an iterative algorithm based on uplink-downlink duality to calculate the optimal $(\mathbf{A},\mathbf{D})$ accurately. Qiu \textit{et. al} \cite{qiu2024lattice} obtain the optimal $(\mathbf{A},\mathbf{D})$ via the Particle Swarm Optimization (PSO) algorithm \cite{houssein2021major}. Considering the requirements of low-complexity and sub-optimal optimization in general case, Venturelli \textit{et. al} \cite{venturelli2020optimization} propose an calculation-constraint-relaxed algorithm form integer-valued full-rank matrix $\mathbf{A}$ to complex-valued upper-unitriangular matrix $\mathbf{A}$.


Nevertheless, the aforementioned methods face three critical tradeoffs:
\begin{itemize}
	\item  \textbf{Optimality vs. Feasibility:} Iterative algorithms based on uplink-downlink duality may yield power allocations incompatible with practical constraints like a common shaping lattice \cite{he2018uplink}. 
	\item \textbf{Global vs. Local Search:} Heuristic methods like Particle Swarm Optimization (PSO) are prone to getting trapped in local optima and incur high computational costs \cite{trelea2003particle,qiu2024lattice}.
	\item  \textbf{Accuracy vs. Complexity:} Relaxation-based methods simplify the problem by constraining $\mathbf{A}$ to a relaxed form (e.g., complex upper-unitriangular), but this often leads to performance loss \cite{venturelli2020optimization}.
\end{itemize}
Thus, designing an algorithm that efficiently navigates this joint optimization to strike a superior complexity-accuracy trade-off remains an open and significant challenge.

The contributions of this article, along with the highlights, are summarized as follows.
\begin{itemize}
	\item \textbf{Geometric Problem Reformulation: }We establish a generalized optimization model that reveals the underlying geometric structure of the solution space. By treating $\mathbf{A}$ as a function of $\mathbf{D}$, we map the problem to a search within a $(K-1)$-dimensional space $\varOmega$. We prove that $\varOmega$ can be partitioned into a finite number of conical regions, each corresponding to a distinct, fixed integer matrix $\mathbf{A}$. A key insight is that any point within such a region can be uniquely represented by the direction of a ray from the origin, which drastically simplifies the subsequent search. This decomposition transforms the continuous joint optimization into a structured search over discrete regions.
	\item \textbf{Polynomial-Time Algorithm Design: }MCN-SPS adopts a stochastic, nested search strategy that iteratively explores candidate conical regions. At each iteration, it constructs a hypersphere centered at the current point and generates multiple sampling points via randomly directed rays. Each sampling point is projected back to the constraint set $\varOmega$ and then fed into an efficient alternating optimization (AO) subroutine to obtain a local optimum. By comparing the sum rates of these local optima with the current point, MCN-SPS either shifts its search center to a better solution or contracts the search radius, thereby systematically refining its exploration. Crucially, within the AO subroutine, we develop a direction-guided method for $\mathbf{D}$-optimization that employs a contraction mapping in the Hilbert metric space, ensuring fast convergence. By discretizing the continuous search space into these structured regions and leveraging efficient local optimization, MCN-SPS achieves both low time complexity and high accuracy in solving the $(\mathbf{A},\mathbf{D})$ optimization problem.
	\item \textbf{Theoretical and Numerical Validation: }We rigorously analyze the computational complexity of MCN-SPS, proving it to be $\mathcal{O}\left(K^4\log K\log_2(r_0)\right)$ with LLL algorithm, which is polynomial in $K$. Extensive simulations corroborate this analysis, demonstrating that MCN-SPS achieves a lower computational cost than the PSO-based benchmark. Moreover, it delivers superior sum-rate performance, outperforming all compared methods, especially in overloaded MIMO scenarios.
\end{itemize}

\noindent\textbf{Notation}: Matrices and column vectors are denoted by uppercase and lowercase boldface letters. Specially, for a matrix $\mathbf{X}$, we denote its $i$-th column vector as $\mathbf{x}_i$, and represents the $i$-th row vector as $\mathbf{x}_i^{\top}$. We represent the set by uppercase squiggles letters. For any $x>0$, let $\mathrm{log}^{+}(x)\triangleq\mathrm{max}\{0,\mathrm{log}(x)\}$. Considering the multiplying for matrix $\mathbf{A}$ and $\mathbf{B}$, we denote the dot product as $\mathbf{A}\mathbf{B}$, and the Hadamard product as $\mathbf{A}\circ\mathbf{B}$. Regarding two vector $\mathbf{x}$ and $\mathbf{y}$, we define $\mathbf{x}\oslash\mathbf{y}$ the element-wise division which outputs a vector $[x_1/y_1,\cdots,x_n/y_n]^{\top}$. Considering multiplication in symbol calculation, the term $\otimes$ is represented as modulo-$p$ multiplication. The operation $\mathrm{diag}(\cdot)$ represents obtaining a vector contained the main diagonal elements of the inputted matrix.
%
%
%
%
%
%
%
%
%
%

\section{Preliminaries}
\subsection{System Model}
Regarding a single-cell system consisting of a BS with $N$ antenna and $K$ single-antenna UEs, the BS broadcasts signals to the UEs, and each UE acquires their correspond message in the received signal from BS. In each broadcasting, considering $k\in\{1,\cdots,K\},~n\in\{1,\cdots,N\}$, the coded-modulated signal matrix $\mathbf{X}=[\mathbf{x}_1,\cdots,\mathbf{x}_N]$, where $\mathbf{x}_n=[x_{1,n},\cdots,x_{K,n}]^{\top}$, is modeled by the signal-level matrix $\mathbf{P}$ with the power constraint 
\begin{equation}\label{Eq. power constraint}
	\mathrm{Tr}(\mathbf{P}\mathbf{P}^{\top})\leq\rho, 
\end{equation}
where $\rho$ represents the transmit power at BS. Each precoded signals $\mathbf{s}[n]$ from the $N$ antennas of BS is given by
\begin{equation}
	\mathbf{s}[n]=\mathbf{P}[n]\mathbf{x}[n], \label{Eq. precoding}
\end{equation}
where $\mathbf{x}[n]=\mathbf{x}_n=[x_{1,n},\cdots,x_{K,n}]^{\top}$ denotes the $K$ signals at the $t$-th instant (or sub-carrier) of the block. 

Regarding the transmission in channel, each $\mathbf{s}[n]$ is affected by channel coefficient vector $\mathbf{h}_i^{\top}[n]$ and the additive white Gaussian noise (AWGN) $z_i[n]$ with zero mean and unit variance $\sigma_{z}^2=1$, where $i$ represents the index of UE. For a real-valued model, the base-band equivalent received signal of the $i$-th UE $y_i[n]$ is calculated by 
\begin{equation}
	y_i[n]=\mathbf{h}_i^{\top}[n]\mathbf{s}[n]+z_i[n]. \label{Eq. received signal}
\end{equation}
Considering the channel coefficient matrix $\mathbf{H}^{K\times N}[n]=[\mathbf{h}_1^{\top}[n],\cdots,\mathbf{h}_K^{\top}[n]]^{\top}$, the received signals of the $K$ UEs are
\begin{equation}
	\mathbf{y}[n]=\mathbf{H}[n]\mathbf{s}[n]+\mathbf{z}[n], 
\end{equation}
where $\mathbf{z}[n]$ denotes the AWGN vector. The transmission signal-to-noise ratio (SNR) is given by $\rho$. In this paper, we present a real-valued model, and the complex-valued model with $\check{\mathbf{H}}\in\mathbb{C}$ can be represented by a real-valued model of doubled dimension, i.e., 
\begin{equation}
	\left[\begin{matrix} \Re\{\mathbf{Y}\}\\\Im\{\mathbf{Y}\} \end{matrix}\right]
	=\left[\begin{matrix} \Re\{\check{\mathbf{H}}\}&-\Im\{\check{\mathbf{H}}\}\\\Im\{\check{\mathbf{H}}\}&\Re\{\check{\mathbf{H}}\} \end{matrix}\right]
	\left[\begin{matrix} \Re\{\mathbf{s}\}\\\Im\{\mathbf{s}\} \end{matrix}\right]
	+\left[\begin{matrix} \Re\{\mathbf{z}\}\\\Im\{\mathbf{z}\} \end{matrix}\right],
\end{equation}
as treated in \cite{hong2012reverse,silva2017integer}.

In practice, the channel matrix $\mathbf{H}$ has to be estimated at the receiver for retrieving the transmitted data symbol vector, and imperfect channel estimates arise in any practical estimation scheme \cite{wang2007performance}. In this paper, we consider the conventional channel estimations on minimum mean-square error (MMSE) \cite{QUANTIZED_FEEDBACK} and maximum likelihood (ML) \cite{jiang2018accurate}. In the MMSE channel estimation, the channel state information (CSI) with channel estimation error is modeled as \cite{QUANTIZED_FEEDBACK}
\begin{equation}\label{Eq. MMSE model}
	\mathbf{H}=\hat{\mathbf{H}}+\mathbf{E},
\end{equation}
where $\hat{\mathbf{H}}$ and $\mathbf{E}\sim\mathcal{N}(0,\sigma_{e}^2\mathbf{I})$ denote the estimated channel and its estimation error respectively. Under the principle of minimum observation-error distance, $\hat{\mathbf{H}}$ is independent to $\mathbf{E}$. Regarding ML channel estimation, the channel estimate can be given as \cite{jiang2018accurate}
\begin{equation}\label{Eq. ML model}
	\hat{\mathbf{H}}=\mathbf{H}+\mathbf{E},
\end{equation}
where the practical channel $\mathbf{H}\sim \mathcal{N}(0,\sigma_{h}^2)\mathbf{I}$ is independent to $\mathbf{E}\sim\mathcal{N}(0,\sigma_{e}^2\mathbf{I})$. In ML case, $\sigma_{e}^2$ can be calculated by \cite{wang2007performance}
\begin{equation}
	\sigma_{e}^2=\frac{K\sigma_{z}^2}{N_{\text{train}}E_{\text{train}}},
\end{equation}
where $N_{\text{train}}$ and $E_{\text{train}}$ respectively denote the length of the training sequence and the power of the training symbols.

\subsection{Integer-Forcing (IF) Precoding}
\begin{figure*}[t!]
	\centering
	\includegraphics[width=.9\textwidth]{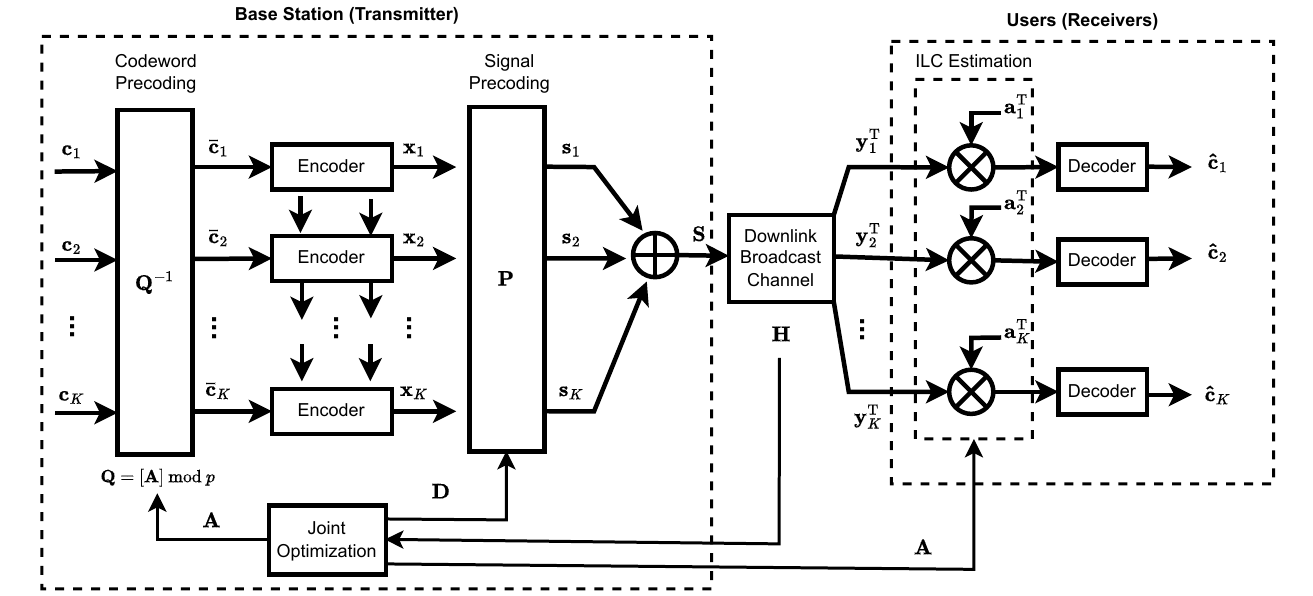}
	\caption{Block diagram of the IF precoding architecture \cite{he2018uplink,qiu2024lattice}}
	\label{Fig. IF Flowchart}
\end{figure*}

Differ to the conventional linear precoding, IF precoding employs the users' codeword to create an integer-valued effective channel matrix. By applying inverse linear transforming to correspond the ILC to the desired message of that user, IF precoder can universally achieve the MIMO capacity to within a constant gap \cite{silva2017integer,ordentlich2014precoded}. According to \cite{he2018uplink,qiu2024lattice}, the flowchart of IF precoding system is depicted in Fig. \ref{Fig. IF Flowchart}. The whole procedure consists of the transmitter, receivers, and joint optimization.

\subsubsection{Transmitter} Before transmitting, a full rank integer-valued matrix $\mathbf{A}\in\mathbb{Z}^{K\times K}$ and a positive diagonal matrix $\mathbf{D}=\mathrm{diag}(d_1,\cdots,d_K)$ are selected by the joint optimization. Moreover, $\mathbf{D}$ satisfies
\begin{equation}
	\det(\mathbf{D})=1.
	\label{Eq. D constraint}
\end{equation}
In BS, $K$ modulo-$p$ UE-corresponded messages $\mathbf{c}_1,\mathbf{c}_2,\cdots,\mathbf{c}_K\in\mathbb{Z}_p$ are modified by multiplying a matrix $\mathbf{Q}^{-1}$ satisfied $\mathbf{Q}^{-1}\otimes\mathbf{Q}=\mathbf{I}$, where
\begin{equation}
	\mathbf{Q}\equiv\lfloor\mathbf{A}\rceil~(\mathrm{mod}~p)
	\label{Eq. CLP}
\end{equation}
over $\mathbb{Z}_p$. The outputs $\bar{\mathbf{c}}_k=\mathbf{Q}^{-1}\otimes \mathbf{c}_k$, $k\in\{1,\cdots,K\}$, are mapped into $K$ \textit{codeword-level precoded (CLP)} $p$-PAM symbols $\mathbf{x}_1,\cdots,\mathbf{x}_K$ by
\begin{align}
	\mathbf{x}_k&=\frac{1}{\gamma}\left(\bar{\mathbf{c}}_k-\frac{p-1}{2}[1,\cdots,1]^{\top}\right)
	\in\frac{1}{\gamma}\left\{\frac{1-p}{2},\cdots,\frac{p-1}{2}\right\}^K.
	\label{Eq. p-PAM}
\end{align} 

Then, a precoder $\mathbf{P}$ handles these CLP streams to the \textit{Signal-level precoded (SLP)} steams $\mathbf{s}_1,\cdots,\mathbf{s}_K$ by Eq. (\ref{Eq. precoding}), which are then transmitted via the $N$ antennas of BS. According to \cite{silva2017integer}, similar to ZF and RZF,  $\mathbf{P}$ can be obtained by
\begin{equation}
	\mathbf{P}=\frac{1}{\eta}\mathbf{H}^{\top}(\mathbf{H}\mathbf{H}^{\top})^{-1}\mathbf{D}\mathbf{A} \label{Eq. DIF}
\end{equation}
for diagonally-scaled exact IF (DIF), and
\begin{equation}
	\mathbf{P}=\frac{1}{\eta}\mathbf{H}^{\top}(\frac{K}{\rho}\mathbf{I}+\mathbf{H}\mathbf{H}^{\top})^{-1}\mathbf{D}\mathbf{A} \label{Eq. RIF}
\end{equation}
for regularized IF (RIF). In Eq. (\ref{Eq. DIF}) and Eq. (\ref{Eq. RIF}), each entry $d_k,~k\in\{1,\cdots,K\}$ denotes the power allocated to the $k$-th CLP stream, and $\eta$ is a scalar to ensure that the power constraint Eq. (\ref{Eq. power constraint}) is met. Regarding the full column rank channel matrix $\mathbf{H}$ in overload MIMO, RIF is suitable since $\mathbf{H}\mathbf{H}^{\top}$ is not a full-rank matrix but $\frac{K}{\rho}\mathbf{I}+\mathbf{H}\mathbf{H}^{\top}$ is.

\subsubsection{Receivers} According to Eq. (\ref{Eq. p-PAM}), the received signal of UE $k$ in Eq. (\ref{Eq. received signal}) can be written as
\begin{align}
	y_k&=\mathbf{h}_k^{\top}\mathbf{s}_k+z_k
	=\mathbf{h}_k^{\top}\mathbf{P}\mathbf{x}_k+z_k
	=\frac{1}{\gamma}\tilde{\mathbf{h}}_k^{\top}\bar{\mathbf{c}}_k+z_k
	-\underbrace{\frac{p-1}{2\gamma}\sum_{l=1}^{K}\tilde{h}_{k,l}}
	_{\text{not related to the signal}},
	\label{Eq. receiver IF signal}
\end{align}
where $\tilde{\mathbf{h}}_k^{\top}=\mathbf{h}_k^{\top}\mathbf{P}=[\tilde{h}_{k,1},\cdots,\tilde{h}_{k,N}]$. Since $\bar{\mathbf{c}}_k$ is an integer-valued vector, $\tilde{\mathbf{h}}_k^{\top}\bar{\mathbf{c}}_k$ can be regarded as the $k$-th entry of lattice point with basis $[\tilde{\mathbf{h}}_1^{\top},\cdots,\tilde{\mathbf{h}}_K^{\top}]^{\top}$, 
and $y_k$ as a dithering on the $i$-th lattice point which can be compensated easily.

At the receivers, receiver $k$ computes the ILC of $\bar{\mathbf{c}}_k$, written as
\begin{equation}
	\hat{s}_{k,n}=\mathbf{a}_k^{\top}\otimes\bar{\mathbf{c}}_k,~k\in\{1,\cdots,K\},~n\in\{1,\cdots,N\},
	\label{Eq. ILC solve}
\end{equation}
based on the noisy observation of the lattice points in Eq. (\ref{Eq. receiver IF signal}). According to Eq. (\ref{Eq. CLP}), Eq. (\ref{Eq. ILC solve}) can be rewritten as
\begin{align}
	\hat{s}_{k,n}=\mathbf{a}_k^{\top}\otimes\bar{\mathbf{c}}_k
	=\mathbf{a}_k^{\top}\otimes\mathbf{A}^{-1}\otimes\mathbf{c}_k=\hat{\mathbf{c}}_k.
\end{align}
Then, the computed steam of UE-$k$ $\hat{\mathbf{s}}_k$ is shifted into the estimated message $\hat{\mathbf{c}}_1,\cdots,\hat{\mathbf{c}}_K$.

\subsubsection{Joint Optimization} The achievable rate of IF precoding can be obtained simply by
\begin{equation}
	R_i\left(\mathbf{A}\right)<	\triangleq\log_2^{+}\left(
	\frac{1}{\mathbf{a}_i\left(\mathbf{I}-\frac{\rho}{\rho\|\tilde{\mathbf{h}}_i\|^2+1}\tilde{\mathbf{h}}_i^{\top}\tilde{\mathbf{h}}_i\right)\mathbf{a}_i^{\top}}
	\right),
\end{equation}
where $\tilde{\mathbf{h}}_i^{\top}=\mathbf{h}_i^{\top}\mathbf{P}$ and we use the function $\log^{+}(x)=\max(x,0)$ to ensure that the rate remains non-negative.

For the high SNR regime \cite{qiu2024lattice}, IF precoding is optimal and can achieve a sum rate given by
\begin{align}
	R(\mathbf{A},\mathbf{D})&=\frac{1}{2}\sum_{i=1}^{K}\log_2^{+}\left(\frac{d_i^2\rho}{\mathrm{Tr}\left(\mathbf{A}^{\top}\mathbf{D}^{\top}\mathbf{M}\mathbf{D}\mathbf{A}\right)}\right)
	=\frac{K}{2}\log_2^{+}\left(\frac{\rho\left(\det(\mathbf{D})\right)^2}{\mathrm{Tr}\left(\mathbf{A}^{\top}\mathbf{D}^{\top}\mathbf{M}\mathbf{D}\mathbf{A}\right)}\right),
	\label{Eq. high SNR sum rate}
\end{align}
where
\begin{equation}\label{Eq. DIF M}
	\mathbf{M}=\left(\mathbf{H}\mathbf{H}^{\top}\right)^{-1}
\end{equation}
in DIF precoding, and 
\begin{equation}\label{Eq. RIF M}
	\mathbf{M}=\left(\frac{K}{\rho}\mathbf{I}+\mathbf{H}\mathbf{H}^{\top}\right)^{-1}
\end{equation}
in RIF precoding. The target of $(\mathbf{A},\mathbf{D})$ joint optimization is to select a pair of $(\mathbf{A},\mathbf{D})$ to maximize Eq. (\ref{Eq. high SNR sum rate}). Under the constraint in Eq. (\ref{Eq. D constraint}) and the Monotonic property of $\log(\cdot)$ function, this problem can be expressed to
\begin{align}\label{Eq. optimization problem}
	\underset{\mathbf{A},\mathbf{D}}{\arg\min}~~~&\mathrm{Tr}\left(\mathbf{A}^{\top}\mathbf{D}^{\top}\mathbf{M}\mathbf{D}\mathbf{A}\right),\\
	\text{subject to}~~~&\mathbf{A}\in\mathbb{Z}^{K\times K},~\mathrm{rank}(\mathbf{A})=K,~
	\mathbf{D}\in\mathbb{R}_{+}^{K\times N},~|\det(\mathbf{D})|=1. \notag
\end{align}

Considering a fixed $\mathbf{D}$, the optimization of $\mathbf{A}$ is equivalent to minimize
\begin{equation}
	\mathrm{Tr}\left(\mathbf{A}^{\top}\mathbf{D}^{\top}\mathbf{M}\mathbf{D}\mathbf{A}\right)
	=\sum_{l=1}^{K}\|\mathbf{M}^{\frac{1}{2}}\mathbf{D}\mathbf{a}_l\|^2,
	\label{Eq. A optimization}
\end{equation}
where $\mathbf{a}_l$ denotes the $l$-th column of $\mathbf{A}$. Regarding $\mathbf{M}^{\frac{1}{2}}\mathbf{D}$ as a lattice basis, to minimize Eq. (\ref{Eq. A optimization}) is to solve the well-known shortest independent vector problem (SIVP) \cite{liu2018basing} in lattice theory.

\subsection{Lattice Basis Reduction}
This paper will use some concepts from lattices to arrive at simpler description of the algorithm and more elegant analysis. Hereby we review some basic concepts of lattices and lattice reduction. An $M$-dimensional lattice in $\mathbb{R}^M$ is defined as  
\begin{equation}
	\Lambda=\left\lbrace \mathbf{Gz}|\mathbf{z}\in \mathbb{Z}^M\right\rbrace,
\end{equation}
where the full-ranked matrix $\mathbf{G}\in\mathbb{R}^{M\times M}$ represents the generator matrix of $\Lambda$. In a lattice $\Lambda$, SIVP is one of the critical problem to describe the structure of lattice, which is defined in Definition \ref{Def. SIVP}.
\begin{definition}[SIVP \cite{nguyen2010lll}]\label{Def. SIVP}
	Given a lattice $\Lambda\in\mathbb{R}^M$, find $M$ linearly independent vectors $\mathbf{v}_1,\cdots,\mathbf{v}_M\in \Lambda$, such that
	\begin{equation}
		\|\mathbf{v}_m\|\leq\lambda_M(\Lambda),~m\in\{1,\cdots,M\}
	\end{equation}
\end{definition}
In Definition \ref{Def. SIVP}, $\lambda_M(\Lambda)$ represents the $M$-th successive minima which is given by Definition \ref{Def. Successive Minima}.
\begin{definition}[Successive Minima \cite{nguyen2010lll}]\label{Def. Successive Minima}
	Considering a lattice $\Lambda\in\mathbb{R}^M$ and $\forall i\in\{1,\cdots,\mathrm{dim}(\Lambda)\}$, the $i$-th minimum $\lambda_i(\Lambda)$ is defined as the minimum of $\max_{1\leq j\leq i}\|\mathbf{v}_j\|$ over all $i$ linearly independent lattice vector $\mathbf{v}_1,\cdots,\mathbf{v}_i\in \Lambda$
\end{definition}
To solve SIVP, many existing methods \cite{LLL,Lyu2017boost,vikalo2005sphere,wen2018efficient} have been proposed. In this paper, we focus on the conventional LLL algorithm \cite{LLL} due to its polynomial complexity. The corresponded metric of LLL algorithm, referred to as LLL reduced basis, is defined in Definition \ref{Def. LLL-reduced basis}.
\begin{definition}[LLL reduced \cite{Lyu2017boost}]\label{Def. LLL-reduced basis}
	Let $\mathbf{R}$ be the R matrix of a QR decomposition on $\mathbf{G}$, with elements $r_{i,j}$'s. A basis $\mathbf{G}$ is called LLL reduced if the following two conditions hold:
	\begin{itemize}
		\item [i)] $|r_{i,j}/r_{i,i}|\leq\frac{1}{2}$, $1\leq i<j\leq N$. (Size Reduction conditions)
		\item[ii)] $\delta\|\pi^{\perp}_{\mathbf{G}}(\mathbf{g}_i)\|^2\leq\|\pi^{\perp}_{\mathbf{G}}(\mathbf{g}_{i+1})\|^2$, $1\leq i\leq N-1$. (Lov$\acute{a}$sz conditions)
	\end{itemize}
\end{definition}
In Definition \ref{Def. LLL-reduced basis}, $\delta\in\left(\frac{1}{4},1\right]$ is called the Lov$\acute{a}$sz constant, and an equivalently rewritten form of Lov$\acute{a}$sz conditions is
\begin{equation}\label{Eq. equivalently lovasz condition}
	\|\mathbf{g}_{i+1}^{*}\|^2\leq\left(\delta-\mu^2_{i+1}\right)\|\mathbf{g}_{i}^{*}\|^2.
\end{equation}
If $\mathbf{G}$ is LLL reduced, it satisfies \cite{LLL}
\begin{equation}
	\|\mathbf{g}_i\|\leq\beta^{N-1}\lambda_i\left(\Lambda_{\mathbf{G}}\right),~i\in\{1,\cdots,N\},
\end{equation}
in which 
\begin{equation}
	\beta=\frac{1}{\sqrt{\delta-\frac{1}{4}}}\in\left(\frac{2}{\sqrt{3}},+\infty\right).
\end{equation}

\begin{algorithm}[t]
	\caption{The LLL algorithm} 
	\label{Alg. LLL}

	\hspace*{0.02in} {\bf Input:} 
	Lattice basis matrix $\mathbf{G}$, Lov$\acute{a}$sz constant $\delta$\\
	\hspace*{0.02in} {\bf Output:} 
	LLL-reduced basis matrix $\mathbf{G}^{'}$, Transforming unimodular matrix $\mathbf{A}$
	\begin{algorithmic}[1]
		\State $[\mathbf{Q},\mathbf{R}]\gets\text{qr}(\mathbf{G})$ \Comment{The QR Decomposition of $\mathbf{G}$}
		\State $[M,N]\gets\text{size}(\mathbf{G})$
		\State $i\gets 2$
		\While{$i\leq N$}
		\For{$k=i-1:-1:1$} 
		\State $\mu_{k,i}=\lfloor{r_{k,i}}/{r_{i,i}}\rceil$\Comment{$r_{i,j}$ is the $(i,j)$ element of $\mathbf{R}$}
		\If{$\mu\neq0$}\Comment{Operation i}
		\State $\mathbf{r}_{i}\gets\mathbf{r}_{i}-\mu_{k,i}\mathbf{r}_{k}$\Comment{$\mathbf{r}_{i}$ is the $i$-th column of $\mathbf{R}$}
		\EndIf
		\EndFor
		\If{$\delta|r_{i-1,i-1}|^2>|r_{i,i}|^2+|r_{i-1,i}|^2$}
		\State $\alpha=r_{i-1,i}/\sqrt{r_{i-1,i}^2+r_{i,i}^2}$
		\State $\beta=r_{i,i}/\sqrt{r_{i-1,i}^2+r_{i,i}^2}$
		\State Swap $\mathbf{r}_{i-1}$ and $\mathbf{r}_{i}$\Comment{Operation ii}
		\State $\mathbf{V}\gets\left[\begin{matrix}
			\alpha &\beta\\
			-\beta &\alpha
		\end{matrix}\right]$
		\State $\left[\begin{matrix}
			 \mathbf{r}_{i-1}^{\top}\\
			\mathbf{r}_{i}^{\top}
		\end{matrix}\right]\gets\mathbf{V}\left[\begin{matrix}
		\mathbf{r}_{i-1}^{\top}\\
		\mathbf{r}_{i}^{\top}
		\end{matrix}\right]$\Comment{$\mathbf{r}_{i}^{\top}$ is the $i$-th row of $\mathbf{R}$}
		\State $\left[\mathbf{q}_{i-1},\mathbf{q}_{i}\right]\gets\left[\mathbf{q}_{i-1},\mathbf{q}_{i}\right]\mathbf{V}^{\top}$\Comment{$\mathbf{q}_{i}$ is the $i$-th column of $\mathbf{Q}$}
		\State $i\gets\max(i-1,2)$
		\Else
		\State $i\gets i+1$
		\EndIf
		\EndWhile
		\State $\mathbf{G}^{'}\gets\mathbf{Q}\mathbf{R}$
		\State $\mathbf{A}\gets\lfloor\mathbf{G}^{-1}\mathbf{G}^{'}\rceil$ \Comment{$\lfloor\cdot\rceil$ round for all elements}
	\end{algorithmic}
\end{algorithm}

In the LLL algorithm, summarized in Alg. \ref{Alg. LLL}, the operation i and ii can be regarded as two transforming matrix $\mathbf{T}^{(i)}_{k,i}$ and $\mathbf{T}^{(ii)}_{i}$ respectively satisfy
\begin{align}
	\mathbf{T}^{(i)}_{k,i}=\left[
	\begin{matrix}
		1 &-\mu_{k,i}\\
		0 &1
	\end{matrix}
	\right],
	\mathbf{T}^{(ii)}_{i}=\left[
	\begin{matrix}
		0 &1\\
		1 &0
	\end{matrix}
	\right],
\end{align}
where $\mathbf{T}^{(i)}_{k,i}$ is implemented on a pair of vector $[\mathbf{r}_{k},\mathbf{r}_{i}]$ and $\mathbf{T}^{(ii)}_{i}$ on $[\mathbf{r}_{i-1},\mathbf{r}_{i}]$ . In the LLL algorithm processing, the transforming matrix between the original and LLL reduced basis is a unimodular matrix which is derived by multiplying all $\mathbf{T}^{(i)}_{k,i}$ and $\mathbf{T}^{(ii)}_{i}$ continuously under their order in the process \cite{dirk2011latticereduction}.

\subsection{The Contraction Mapping Approach to Perron-Frobenius Theory}
In this paper, we employs the Perron-Frobenius Theory in the complete metric space of Hilbert metric to guide the algorithm design. Hereby, we review some knowledge about the contraction mapping approach \cite{birkhoff1957extensions,kohlberg1982contraction,carroll2004birkhoff} and nonlinear Perron-Frobenius theory \cite{lemmens2012nonlinear}. 

In a complete metric space $(\mathcal{X},d)$, a linear transformation $\mathbf{T}$ is a contraction when it satisfy
\begin{equation}
	\exists \gamma\in\left(0,1\right), d\left(\mathbf{T}\mathbf{x},\mathbf{T}\mathbf{y}\right)\leq\gamma d\left(\mathbf{x},\mathbf{y}\right),\forall\mathbf{x},\mathbf{y}\in\mathcal{X}. 
\end{equation}
For a contraction $\mathbf{T}$, there exists an $\mathbf{x}_0\in\mathcal{X}$ such that $\mathbf{T}^q\mathbf{x}\mapsto\mathbf{x}_0$ for all $\mathbf{x}\in\mathcal{X}$, where $\mathbf{x}_0$ is referred to as the fixed point for $\mathbf{T}$ in $(\mathcal{X},d)$. In direction, the well-known Perron Theorem \cite{kohlberg1982contraction} tells that if $\mathbf{T}$ is a positive linear transformation on $\mathbb{R}^N$, then it has
\begin{equation}
	\exists \mathbf{x}_0,\mathbf{x}>0,~\text{such that (s.t.)},~\frac{\mathbf{T}\mathbf{x}}{\|\mathbf{T}\mathbf{x}\|}\rightarrow\frac{\mathbf{x}_0}{\|\mathbf{x}_0\|}.
\end{equation}
In \cite{birkhoff1957extensions,hopf1963inequality}, a Hilbert metric $d_H(\cdot)$ in non-Euclidean geometry is defined as
\begin{equation}
	d_H\left(\mathbf{x},\mathbf{y}\right)=\log\left(\frac{\max\left(\mathbf{x}\oslash\mathbf{y}\right)}{\min\left(\mathbf{x}\oslash\mathbf{y}\right)}\right).
\end{equation}
And it has the following characteristics \cite{carroll2004birkhoff}:
\begin{itemize}
	\item[i)] $d_H\left(\mathbf{x},\mathbf{y}\right)\geq0$, and $d_H\left(\mathbf{x},\mathbf{y}\right)=0$ if and only if $\mathbf{x}=c\mathbf{y}$, $c>0$.
	\item[ii)] $d_H\left(\mathbf{x},\mathbf{y}\right)=d_H\left(\mathbf{y},\mathbf{x}\right)$.
	\item[iii)] $d_H\left(\mathbf{x},\mathbf{z}\right)\leq d_H\left(\mathbf{x},\mathbf{y}\right)d_H\left(\mathbf{y},\mathbf{z}\right)$.
	\item[iv)] $d_H\left(\mathbf{x},\mathbf{z}\right)\leq d_H\left(\mathbf{x},\mathbf{y}\right)+d_H\left(\mathbf{y},\mathbf{z}\right)$.
	\item[v)] $\forall a,b>0$, $d_H\left(a\mathbf{x},b\mathbf{y}\right)=d_H\left(\mathbf{x},\mathbf{y}\right)$.
\end{itemize}
Regarding the Hilbert metric, the Perron Theorem still works, referring to as the Birkhoff-Hopf theorem, denoted by
\begin{align}
	&d_H\left(\mathbf{T}\mathbf{x},\mathbf{T}\mathbf{y}\right)\leq\gamma d_H\left(\mathbf{x},\mathbf{y}\right),\forall\mathbf{x},\mathbf{y}\in\mathcal{X},
\end{align}
where $\exists \gamma\in\left(0,1\right), \text{and }\mathbf{T}\text{ is non-positive}$.
The contraction ratio $\kappa(\mathbf{T})$ of a linear operator $\mathbf{T}$ is defined in \cite{lemmens2012nonlinear} as
\begin{align}
	&\kappa(\mathbf{T})\triangleq
	\inf\left\{\gamma\geq0|d_H\left(\mathbf{T}\mathbf{x},\mathbf{T}\mathbf{y}\right)\leq\gamma d_H\left(\mathbf{x},\mathbf{y}\right),\forall\mathbf{x},\mathbf{y}\in\mathcal{X}\right\}.
\end{align}
In \cite{birkhoff1957extensions}, it is proved that
\begin{equation}
	\kappa(\mathbf{T})=\tanh\left(\frac{\Delta(\mathbf{T})}{4}\right),
\end{equation}
where the projective diameter $\Delta(\mathbf{T})$ of $\mathbf{T}$ is defined by
\begin{equation}
	\Delta(\mathbf{T})\triangleq\sup\left\{d_H\left(\mathbf{T}\mathbf{x},\mathbf{T}\mathbf{y}\right)|\mathbf{x},\mathbf{y}\in\mathcal{X}\right\}.
\end{equation}

\section{Geometric Structure of the Solution Space}
Since $\mathbf{D}=\mathrm{diag}(d_1,\cdots,d_K)$ is a positive diagonal matrix with $\det(\mathbf{D})=1$, the set of all possible $\mathbf{D}$ is the positive diagonal subgroup of special linear group $\mathrm{SLD}_{K}^{+}(\mathbb{R})$. To express $\mathbf{D}$ in simple, we define its main diagonal as a vector $\mathbf{d}$ by
\begin{equation}\label{Eq. d definition}
	\mathbf{d}
	=\left[d_1,\cdots,d_K\right]^{\top},
\end{equation}
where $\mathbf{1}_K=\left[1,\cdots,1\right]^{\top}$. Since the elements $\{d_k\}_{k=1}^K$ are all positive, following the constraint in Eq. (\ref{Eq. D constraint}), $\mathbf{d}$ satisfies $\prod_{k=1}^{K}d_k=1$, or $\sum_{k=1}^{K}\log(d_k)=0$. Regarding $\{d_k\}_{k=1}^K$ as the positive region coordinates of a $K$ dimensional space $\mathcal{D}$, the set of all possible $\mathbf{d}$ for Eq. (\ref{Eq. optimization problem}) is a $K-1$ dimensional subspace $\varOmega$ represented by
\begin{equation}
	\varOmega=\left\{\mathbf{d}~|\prod_{k=1}^{K}d_k=1,d_k>0,k\in{1,\cdots,K}\right\}.
\end{equation}
In this section, we will elaborate a generalized optimization model for IF precoding with the geometric features of $\mathcal{D}$.

\subsection{General Model of Joint Optimization}\label{Sec. general model}
According to Eq. (\ref{Eq. A optimization}), we can regard $\mathbf{A}$ as a function related to $\mathbf{d}$, referring to as $\mathbf{A}(\mathbf{d})$. To correspond $\mathcal{D}$, the expression in Eq. (\ref{Eq. optimization problem}) can be rewritten as 
\begin{equation}
	\mathrm{Tr}\left(\mathbf{A}(\mathbf{d})^{\top}\mathbf{D}^{\top}\mathbf{M}\mathbf{D}\mathbf{A}(\mathbf{d})\right)
	=\mathbf{d}^{\top}\left(\left(\mathbf{A}(\mathbf{d})\mathbf{A}(\mathbf{d})^{\top}\right)\circ\mathbf{M}\right)\mathbf{d}.
	\label{Eq. generalized optimization}
\end{equation}
Thus, the problem in (\ref{Eq. optimization problem}) is equivalent to
\begin{align}\label{Eq. new optimization problem}
	\underset{\mathbf{d}}{\arg\min}~~~&\mathbf{d}^{\top}\left(\left(\mathbf{A}(\mathbf{d})\mathbf{A}(\mathbf{d})^{\top}\right)\circ\mathbf{M}\right)\mathbf{d},\\
	\text{subject to}~~~&\mathbf{d}\in\varOmega. \notag
\end{align}
Considering the interference coupling matrix $\mathbf{G}=\left(\mathbf{A}\mathbf{A}^{\top}\right)\circ\mathbf{M}$, since $\mathbf{M}$ is known, $\mathbf{G}$ is also a function related to $\mathbf{d}$, denoted as $\mathbf{G}(\mathbf{d})$. The Lagrange function of Eq. (\ref{Eq. new optimization problem}) is represented as
\begin{equation}\label{Eq. Lagrange function in general}
	L(\mathbf{d},\alpha)=\mathbf{d}^{\top}\mathbf{G}(\mathbf{d})\mathbf{d}+\alpha\sum_{i=1}^{K}\log(d_i).
\end{equation}
Let $\triangledown L(\mathbf{d},\alpha)=\frac{\partial L(\mathbf{d})}{\partial \mathbf{d}}=\mathbf{0}_K$, where $\mathbf{0}_K=\left[0,\cdots,0\right]^{\top}$, a system of equations to obtain the optimal of Eq. (\ref{Eq. Lagrange function in general}) can be established as
\begin{equation}\label{Eq. generalized equation system}
	\left\{
	\begin{aligned}
		&2\mathbf{G}(\mathbf{d})\mathbf{d}
		+\mathbf{d}^{\top}\left(\frac{\partial \mathbf{G}(\mathbf{d})}{\partial \mathbf{d}}\right)\mathbf{d}
		+\alpha(\mathbf{1}_K\oslash\mathbf{d})=\mathbf{0}_K;\\
		&\sum_{i=1}^{K}\log(d_i)=0.
	\end{aligned}
	\right.
\end{equation}

To solve Eq. (\ref{Eq. generalized equation system}), it is important to know the mapping $\mathbf{d}\mapsto\mathbf{A}$ and $\mathbf{A}\mapsto\mathbf{d}$. By observing Eq. (\ref{Eq. A optimization}), $\mathbf{d}\mapsto\mathbf{A}$ is related to solve SIVP, which can be calculated by
\begin{equation} \label{Eq. d2A mapping}
	\mathbf{A}=\left(\mathbf{M}^{\frac{1}{2}}\mathrm{diag}(\mathbf{d})\right)^{-1}\Psi\left(\mathbf{M}^{\frac{1}{2}}\mathrm{diag}(\mathbf{d})\right),
\end{equation}
where function $\Psi(\cdot)$ represents the process on solving SIVP. 
Under a constant $\mathbf{M}$, $\mathbf{d}$ maps to $\mathbf{D}$ one-by-one, and affects $\mathbf{A}$ by scaling the basis vector in generator matrix $\mathbf{M}^{\frac{1}{2}}$. Regarding the continuity on $\mathbf{d}\mapsto\mathbf{A}$, since SIVP is NP-hard \cite{liu2018basing} and the solving algorithms (LLL, SD, etc.) are iterative, it is hard to formulate $\mathbf{A}\mapsto\mathbf{d}$ directly, but some law is easily obtained on the $\mathbf{A}$ derived by a vector and its neighborhood in $\varOmega$. To describe the similarity between two matrices, we apply the spectral norm $\|\cdot\|_{2}$ \cite{horn2012matrix} which is defined as
\begin{equation}
	\|\mathbf{D}\|_{2}\triangleq\sup_{\mathbf{x}\neq\mathbf{0}_K}\frac{\|\mathbf{D}\mathbf{x}\|_{2}}{\|\mathbf{x}\|_{2}}.
\end{equation}
$\|\cdot\|_{2}$ is a matrix norm compatible with the vector 2-norm.
Then, we have a theorem about two similar matrices.
\begin{theorem}\label{The. similar lattice reduction}
	Considering a non-singular generator matrix $\mathbf{G}_{A}$ of lattice $\Lambda_A$ and a positive diagonal matrix $\mathbf{D}_{A}$ with constraint $\det(\mathbf{D}_{A})=1$, the product $\mathbf{G}_{B}=\mathbf{G}_{A}\mathbf{D}_{A}$ is the generator matrix of lattice $\Lambda_B$, and the solution set of $\Psi(\mathbf{G}_{A})$ and $\Psi(\mathbf{G}_{B})$ are denoted as $\mathcal{S}_A$ and $\mathcal{S}_B$ respectively. When $\mathbf{D}_{A}$ is similar to $\mathbf{I}$, i.e., for a minimum value $\epsilon>0$, $\|\mathbf{D}_{A}-\mathbf{I}\|_{2}$ satisfies
	\begin{equation}\label{Eq. The. 1 condition 1}
		\|\mathbf{D}_{A}-\mathbf{I}\|_{2}<\epsilon,
	\end{equation}
	and
	\begin{equation}\label{Eq. The. 1 condition 2}
		|\lambda_K(\Lambda_A)-\lambda_K(\Lambda_B)|<\epsilon,
	\end{equation}
	so we have
	\begin{equation}
		\forall \mathbf{R}_{B}=\mathbf{G}_{B}\mathbf{U}\in \mathcal{S}_B,~\exists\mathbf{R}_{A}=\mathbf{G}_{A}\mathbf{V}\in \mathcal{S}_A,~\text{s.t.},~\mathbf{U}=\mathbf{V}.
	\end{equation}
\end{theorem}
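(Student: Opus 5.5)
The plan is a perturbation argument built on the fact that the SIVP solution set is described entirely by finitely many strict-versus-nonstrict length comparisons among lattice vectors. Write $\mathcal{S}_B\ni\mathbf{R}_B=\mathbf{G}_B\mathbf{U}$ with $\mathbf{U}$ integer and full rank, and set $\mathbf{R}_A:=\mathbf{G}_A\mathbf{U}$, that is, $\mathbf{V}=\mathbf{U}$. Since $\mathbf{G}_B=\mathbf{G}_A\mathbf{D}_A$, the columns of $\mathbf{G}_A\mathbf{U}$ are lattice vectors of $\Lambda_A$ and are linearly independent. So the only thing to check is the SIVP length condition of Definition \ref{Def. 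SIVP}: $\|\mathbf{G}_A\mathbf{u}_m\|\le\lambda_K(\Lambda_A)$ for every column $\mathbf{u}_m$.

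\textbf{Step 1 (column lengths).} For each column,
\[
\|\mathbf{G}_A\mathbf{u}_m\| = \|\mathbf{G}_B\mathbf{u}_m + \mathbf{G}_A(\mathbf{I}-\mathbf{D}_A)\mathbf{u}_m\| \le \lambda_K(\Lambda_B) + \|\mathbf{G}_A\|_2\,\epsilon\,\|\mathbf{u}_m\|,
\]
which uses condition (\ref{Eq. The. 1 condition 1}). Combining this with condition (\ref{Eq. The. 1 condition 2}) gives
\[
\|\mathbf{G}_A\mathbf{u}_m\| \le \lambda_K(\Lambda_A) + \epsilon\bigl(1 + \|\mathbf{G}_A\|_2\|\mathbf{u}_m\|\bigr).
\]
The integer vectors $\mathbf{u}_m$ are bounded: $\|\mathbf{u}_m\| \le \|\mathbf{G}_B^{-1}\|_2\,\lambda_K(\Lambda_B)$, and this bound is uniform for $\mathbf{D}_A$ near $\mathbf{I}$. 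So the error term is $O(\epsilon)$ with a constant $C$ that depends only on $\mathbf{G}_A$.

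\textbf{Step 2 (discreteness gap).} This is the main obstacle, since Step 1 only gives approximate optimality. I would use the discreteness of $\Lambda_A$. Restrict attention to the finite set $\mathcal{F}$ of integer vectors $\mathbf{z}$ with $\|\mathbf{z}\|\le \|\mathbf{G}_A^{-1}\|_2(\lambda_K(\Lambda_A)+1)$. The set of values $\{\|\mathbf{G}_A\mathbf{z}\|:\mathbf{z}\in\mathcal{F}\}$ is finite, so there is a gap
\[
g := \min\bigl\{\|\mathbf{G}_A\mathbf{z}\| - \lambda_K(\Lambda_A) : \mathbf{z}\in\mathcal{F},\ \|\mathbf{G}_A\mathbf{z}\| > \lambda_K(\Lambda_A)\bigr\} > 0.
\]
Because the statement says $\epsilon$ is a ``minimum value'', I read it as: $\epsilon$ can be chosen small enough. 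Taking $C\epsilon < g$, Step 1 forces every $\|\mathbf{G}_A\mathbf{u}_m\|$ to be $\le\lambda_K(\Lambda_A)$. Hence $\mathbf{R}_A\in\mathcal{S}_A$ with $\mathbf{V}=\mathbf{U}$.

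\textbf{Step 3 (cleanup).} I would then check that the constants in Steps 1 and 2 are independent of the particular $\mathbf{R}_B$ chosen. The bound on $\|\mathbf{u}_m\|$ is uniform, and the gap $g$ depends on $\mathbf{G}_A$ alone, so a single $\epsilon$ works for all $\mathbf{R}_B\in\mathcal{S}_B$. I would also remark that $\det(\mathbf{D}_A)=1$ is not needed for the argument; it only keeps $\mathbf{D}_A$ within $\varOmega$.

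Condition (\ref{Eq. The. 1 condition 2}) actually follows from (\ref{Eq. The. 1 condition 1}), since successive minima are Lipschitz under basis perturbation. The proof will use it directly anyway, as stated.
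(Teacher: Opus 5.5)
Your proof is correct and shares the paper's skeleton. Appendix A also sets $\mathbf{V}=\mathbf{U}$, and it uses $\mathbf{G}_B=\mathbf{G}_A\mathbf{D}_A$, the triangle inequality and both hypotheses to get $\|\mathbf{G}_A\mathbf{U}\mathbf{e}_i\|_2<\lambda_K(\Lambda_A)+O(\epsilon)$. The two arguments differ in how the $O(\epsilon)$ slack is removed.

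The paper argues by contradiction. It absorbs $\|\mathbf{G}_A\|_2\|\mathbf{U}\mathbf{e}_i\|_2$ into $\epsilon$ by treating $\|\mathbf{U}\mathbf{e}_i\|_2$ as a constant. It then picks $\epsilon$ below half the excess $\|\mathbf{G}_A\mathbf{U}\mathbf{e}_i\|_2-\lambda_K(\Lambda_A)$. But $\mathbf{U}$ depends on $\mathbf{D}_A$, and hence on $\epsilon$, so that choice is circular. It does not by itself exclude excesses that shrink to zero as $\mathbf{D}_A\to\mathbf{I}$.

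Your uniform bound on $\|\mathbf{u}_m\|$ and the discreteness gap $g$ over the finite set $\mathcal{F}$ supply exactly what is missing. Together they give a single threshold $\epsilon_0$ that depends only on $\mathbf{G}_A$ and works for every $\mathbf{R}_B\in\mathcal{S}_B$. That is the sensible reading of ``a minimum value $\epsilon$''. Your route is the rigorous version of the paper's; the paper's version is only shorter.

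Two small tidy-ups:
\begin{enumerate}
\item State explicitly that $\mathbf{u}_m\in\mathcal{F}$. This follows from Step 1 once you also impose $C\epsilon\le 1$, since then $\|\mathbf{u}_m\|\le\|\mathbf{G}_A^{-1}\|_2\|\mathbf{G}_A\mathbf{u}_m\|\le\|\mathbf{G}_A^{-1}\|_2(\lambda_K(\Lambda_A)+1)$.
\item Your final remark holds only up to a constant. The norm condition gives $|\lambda_K(\Lambda_A)-\lambda_K(\Lambda_B)|\le c\,\epsilon$ with $c$ of order $\|\mathbf{G}_A\|_2\|\mathbf{G}_A^{-1}\|_2\lambda_K(\Lambda_A)$, not $<\epsilon$. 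This is harmless, because your argument only needs $O(\epsilon)$.
\end{enumerate}
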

\begin{proof}
	See Appendix \ref{App. similar lattice reduction proof}.
\end{proof}
According to Theorem \ref{The. similar lattice reduction}, when $\mathbf{d}$ moves at a minimal scale in $\varOmega$, $\mathbf{A}$ may remain the same before and after moving only if the successive minima remain continuity. In general, it is hard to delimit when $\mathbf{A}$ is shifted during $\mathbf{d}$ changing, so we take a sample on LLL algorithm in a $2$-dimensional basis matrix $\mathbf{M}^{'}_2=\mathbf{M}_2^{\frac{1}{2}}$. We will introduce this sample in the size reduction condition and Lov$\acute{a}$sz condition.
\begin{itemize}
	\item \textit{Size reduction condition: }Considering the QR decomposition
	\begin{equation}
		\mathbf{M}_2^{'}=\mathbf{Q}_2\mathbf{R}_2=\mathbf{Q}_2
		\left[
		\begin{matrix}
				r_{1,1}&r_{1,2}\\
				0&r_{2,2}
		\end{matrix}
		\right].
	\end{equation}
	According to Alg. \ref{Alg. LLL}, the coefficient of size reduction $\mu=\lfloor\frac{r_{1,2}}{r_{1,1}}\rceil$ when $\mathbf{D}_2=\mathbf{I}_2$. For a changed $\mathbf{D}_2=\mathrm{diag}(d_1,d_2)$, the changed coefficient of size reduction $\mu^{'}$ is given by
	\begin{equation}
		\mu^{'}=\left\lfloor\frac{d_2 r_{1,2}}{d_1 r_{1,1}}\right\rceil
		\overset{d_1=\frac{1}{d_2}}{=}\left\lfloor\mu+\left\lfloor(d_2^2-1)\frac{r_{1,2}}{r_{1,1}}\right\rceil\right\rceil.
	\end{equation}
	When $\left\lfloor(d_2^2-1)\frac{r_{1,2}}{r_{1,1}}\right\rceil\neq0$, $\mu^{'}\neq\mu$, such that $\mathbf{A}_2$ has been changed.
	\item \textit{Lov$\acute{a}$sz condition: }According to Eq. (\ref{Eq. equivalently lovasz condition}), the swapping between two vector in $\mathbf{M}_2^{'}$ is related to $\mu$. Since $\mu$ has been changed under different $\mathbf{D}_2$ in size reduction, the number of iterations becomes unknown for the algorithm process in practical, which triggers a sudden changes on $\Psi(\mathbf{M}_2^{'}\mathbf{D}_2)$ and leads to a different $\mathbf{A}_2$.
\end{itemize}
Based on the above content, we learn that the mapping in Eq. (\ref{Eq. d2A mapping}) is non-surjective, which reveals the mapping $\mathbf{A}\mapsto\mathbf{d}$ is one-to-many. It explains a possible that each fixed $\mathbf{A}$ corresponds to a region $\mathcal{P}(\mathbf{A})$ of $\varOmega$, and these regions form $\varOmega$ without overlap. We can solve Eq. (\ref{Eq. generalized equation system}) by dividing $\varOmega$ into multiple region represented by different $\mathbf{A}$.

Based on Theorem \ref{The. similar lattice reduction}, we have the Theorem \ref{Pro. NP-hard} for its computational complexity.
\begin{theorem}\label{Pro. NP-hard}
	Solving Eq. (\ref{Eq. generalized equation system}) is at least NP-hard.
\end{theorem}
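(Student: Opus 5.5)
The plan is a reduction: I will show that solving the stationarity system in Eq. (\ref{Eq. generalized equation system}) requires evaluating the map $\mathbf{d}\mapsto\mathbf{A}(\mathbf{d})$ of Eq. (\ref{Eq. d2A mapping}). That map is, by construction, an exact SIVP solver $\Psi(\cdot)$ applied to $\mathbf{M}^{\frac{1}{2}}\mathrm{diag}(\mathbf{d})$, and SIVP is NP-hard \cite{liu2018basing}. Concretely, I will argue that any algorithm which outputs a solution $(\mathbf{d}^\star,\alpha^\star)$ of Eq. (\ref{Eq. generalized equation system}), together with the associated matrix $\mathbf{G}(\mathbf{d}^\star)=(\mathbf{A}(\mathbf{d}^\star)\mathbf{A}(\mathbf{d}^\star)^{\top})\circ\mathbf{M}$, can be converted in polynomial time into an SIVP solver for an arbitrary input lattice. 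The key steps follow.

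\textbf{Step 1 (embedding).} Take any non-singular generator matrix $\mathbf{B}\in\mathbb{R}^{K\times K}$. Without loss of generality $\mathbf{B}$ can be taken symmetric positive definite, since SIVP depends only on the Gram matrix $\mathbf{B}^{\top}\mathbf{B}$ and one may replace $\mathbf{B}$ by $(\mathbf{B}^{\top}\mathbf{B})^{\frac{1}{2}}$. Set $\mathbf{M}=\mathbf{B}^2$. This $\mathbf{M}$ is a valid instance, because it arises as $(\mathbf{H}\mathbf{H}^{\top})^{-1}$ for a suitable $\mathbf{H}$, e.g. $\mathbf{H}=\mathbf{B}^{-1}$ in the DIF case of Eq. (\ref{Eq. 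DIF M}).

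\textbf{Step 2 (structure of the solution set).} By Theorem \ref{The. similar lattice reduction} and the discussion that follows it, $\varOmega$ is partitioned into regions $\mathcal{P}(\mathbf{A})$ on which $\mathbf{A}(\mathbf{d})$ is constant. Inside the interior of a region, $\partial\mathbf{G}/\partial\mathbf{d}=\mathbf{0}$, so Eq. (\ref{Eq. generalized equation system}) reduces to $2\mathbf{G}\mathbf{d}+\alpha(\mathbf{1}_K\oslash\mathbf{d})=\mathbf{0}_K$ with $\mathbf{G}=(\mathbf{A}\mathbf{A}^{\top})\circ\mathbf{M}$. Hence a solution of the system certifies that $\mathbf{A}(\mathbf{d}^\star)$ is the SIVP solution for the basis $\mathbf{M}^{\frac{1}{2}}\mathrm{diag}(\mathbf{d}^\star)$.

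\textbf{Step 3 (recovering SIVP at the prescribed point).} I must ensure that the reduction recovers the SIVP solution for $\mathbf{B}$ itself, i.e. at $\mathbf{d}=\mathbf{1}_K$, and not only at the unknown point $\mathbf{d}^\star$. I would fix $\mathbf{d}=\mathbf{1}_K$ as an input to the evaluation of the system. Checking whether $\mathbf{1}_K$, with a given $\alpha$, satisfies Eq. (\ref{Eq. generalized equation system}) already requires $\mathbf{G}(\mathbf{1}_K)$, which encodes an SIVP solution of $\Lambda(\mathbf{B})$. The system is therefore not even verifiable, let alone solvable, without an SIVP oracle, which gives the "at least NP-hard" conclusion. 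As a fallback I would use a scaling trick: replace $\mathbf{M}$ by $\mathrm{diag}(\mathbf{c})\mathbf{M}\,\mathrm{diag}(\mathbf{c})$ for a chosen $\mathbf{c}\in\varOmega$, which shifts the region structure, so that the optimum $\mathbf{d}^\star$ lands in the region containing $\mathbf{1}_K$.

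\textbf{Main obstacle.} Step 3 is the hard part. Rigorously forcing the stationary point to lie in the same region as the prescribed basis is delicate, because the region boundaries are unknown and, as Theorem \ref{The. similar lattice reduction} indicates, they depend on the successive minima. I would first try the evaluation-based argument, since it needs only that computing $\mathbf{G}(\mathbf{d})$ is part of solving the system. Only if that proves too weak would I attempt the scaling construction.
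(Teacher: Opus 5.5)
\textbf{The gap is the one you flag in Step 3, and neither of your two routes closes it.}

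NP-hardness of solving Eq.~(\ref{Eq. generalized equation system}) needs a reduction \emph{from} SIVP. Given an arbitrary basis $\mathbf{B}$, you must build an instance such that \emph{any} solution $(\mathbf{d}^\star,\alpha^\star)$ returned by an oracle lets you read off an SIVP solution of $\Lambda(\mathbf{B})$ in polynomial time.

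Your evaluation-based argument goes the other way. Checking the equations at $\mathbf{d}=\mathbf{1}_K$ requires $\mathbf{G}(\mathbf{1}_K)$, i.e.\ an SIVP solution of $\Lambda(\mathbf{B})$. That only shows that one natural procedure for verifying or solving the system calls an SIVP solver, which is an upper bound, not a lower bound. Three further problems:
\begin{itemize}
\item A solver is never obliged to evaluate the system at $\mathbf{1}_K$, which is generally not a solution.
\item Verifying a candidate is a different problem from producing a solution.
\item With $\Psi$ instantiated by LLL, as later in the paper, the map $\mathbf{d}\mapsto\mathbf{A}(\mathbf{d})$ is polynomial-time, so there is no hardness to inherit.
\end{itemize}

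This is essentially the paper's own proof. It splits the task into SP1--SP4, observes that SP2 (testing whether the fixed-$\mathbf{A}$ optimum lies in $\mathcal{P}(\mathbf{A}_i)$) needs an SIVP call, and concludes $T_{\text{whole}}=\text{P}\oplus\text{NP-hard}\oplus\text{P}\oplus\text{P}=\text{NP-hard}$. It has the same directional defect, so matching it does not repair your argument.

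Step~2 is also circular. $\mathbf{A}(\mathbf{d}^\star)$ solves SIVP for $\mathbf{M}^{\frac{1}{2}}\mathrm{diag}(\mathbf{d}^\star)$ by the definition (\ref{Eq. d2A mapping}) of the map, not because of stationarity. Moreover, it is not part of $(\mathbf{d}^\star,\alpha^\star)$ unless you enlarge the oracle's output, as you do.

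\textbf{The scaling fallback provably fails, because diagonal rescaling is a symmetry of the problem.} Take $\mathbf{C}=\mathrm{diag}(\mathbf{c})$ with $\mathbf{c}\in\varOmega$, and $\mathbf{M}_c=\mathbf{C}\mathbf{M}\mathbf{C}$. Set $\mathbf{e}=\mathbf{c}\circ\mathbf{d}$. The bases $\mathbf{M}_c^{\frac{1}{2}}\mathrm{diag}(\mathbf{d})$ and $\mathbf{M}^{\frac{1}{2}}\mathrm{diag}(\mathbf{e})$ have the same Gram matrix, so the SIVP coefficient matrices can be taken equal: $\mathbf{A}_{\mathbf{M}_c}(\mathbf{d})=\mathbf{A}_{\mathbf{M}}(\mathbf{e})$. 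For any $\mathbf{X}$ we have $\mathbf{X}\circ(\mathbf{C}\mathbf{M}\mathbf{C})=\mathbf{C}(\mathbf{X}\circ\mathbf{M})\mathbf{C}$, and also $\mathbf{1}_K\oslash\mathbf{d}=\mathbf{C}(\mathbf{1}_K\oslash\mathbf{e})$. Hence the system, in the form of your Step~2, transforms as
\[
2\mathbf{G}_{\mathbf{M}_c}(\mathbf{d})\mathbf{d}+\alpha(\mathbf{1}_K\oslash\mathbf{d})=\mathbf{C}\left(2\mathbf{G}_{\mathbf{M}}(\mathbf{e})\mathbf{e}+\alpha(\mathbf{1}_K\oslash\mathbf{e})\right),\qquad \sum_{i=1}^{K}\log d_i=\sum_{i=1}^{K}\log e_i .
\]
So the solutions of the rescaled instance are exactly $\mathbf{e}^\star\oslash\mathbf{c}$, where $\mathbf{e}^\star$ ranges over the solutions of the original instance. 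The lattice whose SIVP solution the oracle implicitly supplies is therefore always $\mathbf{B}\,\mathrm{diag}(\mathbf{e}^\star)$, whatever $\mathbf{c}$ you pick. You get SIVP for $\mathbf{B}$ itself only if $\mathbf{1}_K$ already lies in $\mathcal{P}(\mathbf{A}(\mathbf{e}^\star))$ for some stationary point $\mathbf{e}^\star$, which is exactly what you cannot control.

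The missing ingredient is a genuine instance construction, not a diagonal reparametrization of $\varOmega$ and possibly in higher dimension, from which an SIVP solution of the given lattice can be decoded from whichever stationary point the oracle returns. Without it, neither your proposal nor the paper's SP1--SP4 argument establishes NP-hardness.
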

\begin{proof}
	Based on the non-surjective in mapping Eq. (\ref{Eq. d2A mapping}), To solve Eq. (\ref{Eq. generalized equation system}) is equivalent to solve four continuous subproblems:
	\begin{itemize}
		\item[SP1] Without the range constraint of a region $\mathcal{P}(\mathbf{A}_i)$, how to derive the optimal $\mathbf{D}$ of that fixed $\mathbf{A}_i$?
		\item[SP2] With the range constraint of a region $\mathcal{P}(\mathbf{A}_i)$, is that optimal $\mathbf{D}$ is within $\mathcal{P}(\mathbf{A}_i)$?
		\item[SP3] The region $\mathcal{P}(\mathbf{A}_i)$ where the optimal $\mathbf{D}$ within $\mathcal{P}(\mathbf{A}_i)$ is referred to as a candidate $\mathcal{P}$, how to select all candidate $\mathcal{P}$ from all $\mathcal{P}(\mathbf{A})$?
		\item[SP4] How to select the optimal $\mathcal{P}$ from all candidate $\mathcal{P}$?
	\end{itemize}
	Considering SP2, an effective method is to calculate the $\mathbf{A}$ by the derived optimal $\mathbf{D}$, and compare the calculated $\mathbf{A}$ with the original $\mathbf{A}_i$. The process of this method is equivalent to solve SIVP which is NP-hard \cite{liu2018basing}. Assuming the solutions of SP1, SP3 and SP4 are both polynomial, the whole computational complexity can be derived as
	\begin{align}
		T_{\text{whole}}&=T_{\text{SP1}}\oplus T_{\text{SP2}}\oplus T_{\text{SP3}}\oplus T_{\text{SP4}}
		=\text{P}\oplus\text{NP-hard}\oplus\text{P}\oplus\text{P}=\text{NP-hard}.
	\end{align}
	Thus, solving Eq. (\ref{Eq. generalized equation system}) is at least NP-hard, the theorem is proved.
\end{proof}
\begin{remark}
	The same unproven statement in Theorem \ref{Pro. NP-hard} is mentioned in \cite{silva2017integer,he2018uplink,venturelli2020optimization}. Based on Theorem \ref{Pro. NP-hard}, the closed-form solution of Eq. (\ref{Eq. generalized equation system}) is impossible to obtain unless $P=NP$, which leads us to achieve an approximate solution balanced between complexity and accuracy. In the four subproblems in Theorem \ref{Pro. NP-hard}, SP1's solution is polynomial as the mention in Sect. \ref{SSect. Model on Fixed A}. Meanwhile, solving SP4 under SP3's solution is equivalent to solve a sequencing problem which is polynomial. The SIVP solver for SP2 is well-studied in \cite{wen2018efficient,Lyu2017boost,li2025complete}. Thus, the difficulty to solve Eq. (\ref{Eq. generalized equation system}) is how to solve SP3 and SP4.
\end{remark}

\subsection{Model on Fixed $\mathbf{A}$}\label{SSect. Model on Fixed A}
Regarding a region represented by a fixed $\mathbf{A}$, the mapping $\mathbf{d}\mapsto\mathbf{A}$ can be ignored since $\mathbf{A}$ is regarded as a constant integer-value matrix. In this situation, the problem of finding a optimal $\mathbf{d}$ is denoted by
\begin{align}\label{Eq. fixed A optimization problem}
	\underset{\mathbf{d}}{\mathrm{\arg\min}}~~~&\mathbf{d}^{\top}\left(\left(\mathbf{A}\mathbf{A}^{\top}\right)\circ\mathbf{M}\right)\mathbf{d},\\
	\text{subject to}~~~&\mathbf{d}\in\varOmega, \notag
\end{align}
where the interference coupling matrix $\mathbf{G}=\left(\mathbf{A}\mathbf{A}^{\top}\right)\circ\mathbf{M}$ is constant in this situation. The following lemma for the convexity of Eq. (\ref{Eq. fixed A optimization problem}).
\begin{lemma}\label{Lem. extreme existence and uniqueness}
	Given a fixed $\mathbf{A}$,
 	Eq. (\ref{Eq. fixed A optimization problem}) is a convex problem if and only if any of the following conditions are satisfied:
 	\begin{itemize}
 		\item[i)] $N\geq K$ in DIF,
 		\item[ii)] any $\mathbf{H}^{K\times N}$ in RIF.
 	\end{itemize}
\end{lemma}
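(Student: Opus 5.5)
Our plan is to reduce the convexity of Eq. (\ref{Eq. fixed A optimization problem}) to positive (semi)definiteness of the interference coupling matrix $\mathbf{G}=(\mathbf{A}\mathbf{A}^{\top})\circ\mathbf{M}$, and then to read off this definiteness from the Schur product theorem. The objective $f(\mathbf{d})=\mathbf{d}^{\top}\mathbf{G}\mathbf{d}$ is a quadratic form with Hessian $2\mathbf{G}$, so it is convex on $\mathbb{R}^K$ exactly when $\mathbf{G}\succeq\mathbf{0}$, and strictly convex when $\mathbf{G}\succ\mathbf{0}$. The constraint set $\varOmega$ is not convex in $\mathbf{d}$ itself, so we would pass to the log-coordinates $\mathbf{t}=\log\mathbf{d}$. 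There the constraint becomes the hyperplane $\mathbf{1}_K^{\top}\mathbf{t}=0$. The objective becomes $\sum_{i,j}G_{ij}e^{t_i+t_j}$. Equivalently, we can relax to $\prod_k d_k\geq 1$, which is a convex superlevel set of a geometric-mean–type function on the positive orthant, and note that the minimum of a homogeneous quadratic with $\mathbf{G}\succ\mathbf{0}$ is attained on the boundary $\prod_k d_k=1$. We would use this relaxation, since it keeps the objective quadratic and makes convexity hinge only on $\mathbf{G}$.

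The key steps, in order, are the following. First, $\mathbf{A}$ has full rank $K$, so $\mathbf{A}\mathbf{A}^{\top}\succ\mathbf{0}$. Second, we determine when $\mathbf{M}\succ\mathbf{0}$. In RIF, $\mathbf{M}=(\frac{K}{\rho}\mathbf{I}+\mathbf{H}\mathbf{H}^{\top})^{-1}$ is the inverse of a matrix whose eigenvalues are at least $K/\rho>0$, so $\mathbf{M}\succ\mathbf{0}$ for any $\mathbf{H}$; this gives condition ii). In DIF, $\mathbf{M}=(\mathbf{H}\mathbf{H}^{\top})^{-1}$ exists and is positive definite iff $\mathrm{rank}(\mathbf{H})=K$, which for generic (Gaussian) $\mathbf{H}$ holds iff $N\geq K$; this gives condition i). Third, we apply the Schur product theorem: the Hadamard product of two positive definite matrices is positive definite. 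Hence $\mathbf{G}\succ\mathbf{0}$, the objective is strictly convex, and combined with the convex feasible set $\{\prod_k d_k\geq1,\ d_k>0\}$ the problem is convex with a unique minimizer.

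For the ``only if'' direction, suppose DIF is used with $N<K$. Then $\mathbf{H}\mathbf{H}^{\top}$ is singular, so $\mathbf{M}$ in Eq. (\ref{Eq. DIF M}) is undefined. If it is interpreted as a limit or pseudo-inverse, $\mathbf{M}$ is only positive semidefinite and has a nontrivial kernel. We would then exhibit a direction along which $\mathbf{G}$ degenerates, so that the problem loses the strict convexity and the uniqueness asserted for the convex formulation. This step is the main obstacle, because the Schur product of a positive definite and a singular positive semidefinite matrix can still be positive definite whenever the diagonal of $\mathbf{A}\mathbf{A}^{\top}$ is positive. We would first try to argue at the level of well-posedness: in DIF with $N<K$, $\mathbf{M}$ does not exist, so the problem in Eq. (\ref{Eq. fixed A optimization problem}) is not a well-defined convex program. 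Only if a sharper statement is needed would we construct an explicit $\mathbf{A}$ for which $\mathbf{G}$ is singular.
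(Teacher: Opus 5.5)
\textbf{Sufficiency.} Your sufficiency argument is the same as the paper's: full rank of $\mathbf{A}$ gives $\mathbf{A}\mathbf{A}^{\top}\succ\mathbf{0}$. You have $\mathbf{M}\succ\mathbf{0}$ for every $\mathbf{H}$ in RIF, and in DIF whenever $\mathbf{H}\mathbf{H}^{\top}$ is invertible. The Schur product theorem then makes $\mathbf{G}$ positive (semi)definite.

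Where you differ is the feasible set, and there you are more careful than the paper. The paper's proof stops at convexity of the quadratic objective. In Remark~\ref{Rem. convex problem in fixed A} it then replaces $\prod_i d_i=1$ by $\prod_i d_i\le 1$. That set is neither convex nor an exact relaxation: along $\mathbf{d}=c\mathbf{1}_K$ the objective tends to $0$ as $c\to0$.

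Your set $\{\mathbf{d}>\mathbf{0}:\prod_k d_k\ge 1\}$ is the right choice. It is a closed superlevel set of the concave geometric mean, hence convex. By degree-two homogeneity, any minimizer lies on $\prod_k d_k=1$, so the relaxation is exact.

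Keep the relaxation rather than the log-coordinates, and drop the word ``equivalently''. When $\mathbf{G}$ has negative off-diagonal entries (the M-matrix case the paper uses later), $\sum_{i,j}G_{ij}e^{t_i+t_j}$ need not be convex on $\mathbf{1}_K^{\top}\mathbf{t}=0$ once $K\ge 3$.

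\textbf{Necessity.} There is nothing in the paper to compare with here: its proof only shows that i) or ii) suffices and then declares the lemma proved. Your fallback of exhibiting an $\mathbf{A}$ with singular $\mathbf{G}$ would fail for two reasons.

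First, a singular $\mathbf{G}\succeq\mathbf{0}$ still gives a convex quadratic. Losing strict convexity or uniqueness is not losing convexity, so the construction would not prove non-convexity even if it succeeded.

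Second, $\mathbf{G}$ does not become singular. The correct strengthening of the Schur product theorem (via Oppenheim's inequality) is this: if $\mathbf{X}\succ\mathbf{0}$ and $\mathbf{Y}\succeq\mathbf{0}$ with all $y_{ii}>0$, then $\mathbf{X}\circ\mathbf{Y}\succ\mathbf{0}$. The diagonal condition falls on the singular factor $\mathbf{M}$, not on $\mathbf{A}\mathbf{A}^{\top}$, whose diagonal is automatically positive. For DIF with $N<K$ and $\mathbf{M}=(\mathbf{H}\mathbf{H}^{\top})^{\dagger}$, one has $m_{ii}>0$ whenever the $i$-th row of $\mathbf{H}$ is nonzero. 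Hence $\mathbf{G}\succ\mathbf{0}$ for every full-rank $\mathbf{A}$ and the problem is convex, so under that reading the necessity claim is false. The ``limit'' reading does not help either, since $(\lambda\mathbf{I}+\mathbf{H}\mathbf{H}^{\top})^{-1}$ diverges as $\lambda\to 0$.

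The only defensible content of the ``only if'' is your well-posedness remark: $\mathbf{M}$ in Eq.~(\ref{Eq. DIF M}) is undefined when $N<K$. State the lemma as a sufficient condition together with that remark, and drop the construction.
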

\begin{proof}
	Since the difference of $\mathbf{M}$ in DIF and RIF, we will elaborate these two situations respectively.
	\begin{itemize}
		\item \textit{DIF: }According to Eq. (\ref{Eq. DIF M}), $\mathbf{M}=(\mathbf{H}\mathbf{H}^{\top})^{-1}$.  According to the features of Gram matrix, $\mathbf{H}\mathbf{H}^{\top}$ is semi-positive definite matrix (SPDM), and $\mathbf{A}\mathbf{A}^{\top}$ is positive definite matrix (PDM) since $\mathbf{A}$ is full-rank. When $N\geq K$, $(\mathbf{H}\mathbf{H}^{\top})^{-1}$ exists and is also SPDM. Based on Schur Product Theorem, $\left(\mathbf{A}\mathbf{A}^{\top}\right)\circ\mathbf{M}$ is SPDM, which reveals that the target function in DIF is convex.
		\item \textit{RIF: }Regarding the singular value decomposition (SVD) of $\mathbf{H}\mathbf{H}^{\top}=\mathbf{Q}\mathbf{\Lambda}\mathbf{Q}^{\top}$, we have
		\begin{equation}
			\frac{K}{\rho}\mathbf{I}+\mathbf{H}\mathbf{H}^{\top}=\mathbf{Q}\left(\frac{K}{\rho}\mathbf{I}+\mathbf{\Lambda}\right)\mathbf{Q}^{\top}.
		\end{equation}
		Since $\frac{K}{\rho}>0$, $\frac{K}{\rho}\mathbf{I}+\mathbf{H}\mathbf{H}^{\top}$ is a PDM. Similar to DIF, $\mathbf{M}=\left(\frac{K}{\rho}\mathbf{I}+\mathbf{H}\mathbf{H}^{\top}\right)^{-1}$ is the PDM, and $\left(\mathbf{A}\mathbf{A}^{\top}\right)\circ\mathbf{M}$ is also the PDM. Thus, the target function in RIF is convex.
	\end{itemize}
	Combining the above two points, $\mathbf{d}^{\top}\left(\left(\mathbf{A}\mathbf{A}^{\top}\right)\circ\mathbf{M}\right)\mathbf{d}$ is convex when $N\geq K$ in DIF and any $\mathbf{H}^{K\times N}$ in RIF. The lemma has been proved.
\end{proof}

To solve Eq. (\ref{Eq. fixed A optimization problem}), we establish the Lagrange function to 
\begin{equation}
	L_{\mathrm{fix}}(\mathbf{d},\alpha)=\mathbf{d}^{\top}\mathbf{G}\mathbf{d}+\alpha\sum_{i=1}^{K}\log(d_i),
\end{equation}
and solve $\triangledown L_{\mathrm{fix}}(\mathbf{d},\alpha)=\mathbf{0}_K$ yielding to
\begin{equation}\label{Eq. fixed A equation system 1}
	\left\{
	\begin{aligned}
		&\triangledown L_{\mathrm{fix}}(\mathbf{d},\alpha)=2\mathbf{G}\mathbf{d}
		+\alpha(\mathbf{1}_K\oslash\mathbf{d})=\mathbf{0}_K;\\
		&\sum_{i=1}^{K}\log(d_i)=0.
	\end{aligned}
	\right.
\end{equation}
For $\triangledown L_{\mathrm{fix}}(\mathbf{d},\alpha)$, since $\mathbf{d}^{\top}\mathbf{0}_K=0$ and $\mathbf{d}^{\top}(\mathbf{1}_K\oslash\mathbf{d})=K$, Eq. (\ref{Eq. fixed A equation system 1}) is equivalent to 
\begin{equation}\label{Eq. fixed A equation system 2}
	\left\{
	\begin{aligned}
		&\mathbf{d}^{\top}\mathbf{G}\mathbf{d}=-\frac{\alpha K}{2};\\
		&\sum_{i=1}^{K}\log(d_i)=0.
	\end{aligned}
	\right.
\end{equation}
Based on Lemma \ref{Lem. extreme existence and uniqueness} that $\mathbf{G}$ is a PDM, the quadratic of $\mathbf{G}$ remains positive, indicating that $\alpha$ is negative. Thus, $\mathbf{d}^{\top}\mathbf{G}\mathbf{d}=\frac{|\alpha| K}{2}$ can be regarded as a hyperellipsoid in $\mathcal{D}$ conducted from $\mathbf{d}^{\top}\mathbf{G}\mathbf{d}=1$ by scaling $\frac{|\alpha|K}{2}$. Furthermore, considering a kind of decomposition that $\mathbf{G}=\mathbf{T}^{\top}\mathbf{T}$, $\mathbf{d}^{\top}\mathbf{G}\mathbf{d}=\frac{|\alpha| K}{2}$ can be formulated as
\begin{equation}\label{Eq. generalized model shifting}
	\mathbf{d}^{\top}\mathbf{G}\mathbf{d}=\frac{|\alpha| K}{2}
	\Rightarrow\|\mathbf{T}\mathbf{d}\|^2=\frac{|\alpha| K}{2}.
\end{equation}
In Eq. (\ref{Eq. fixed A optimization problem}), we want to minimize $\mathbf{d}^{\top}\mathbf{G}\mathbf{d}$ in $\mathcal{D}$. According to Eq. (\ref{Eq. generalized model shifting}), the minimum value is $\frac{|\alpha| K}{2}$ which is a scale on $\mathbf{d}^{\top}\mathbf{G}\mathbf{d}=1$. So the problem in Eq. (\ref{Eq. fixed A optimization problem}) is referred to find a minimum scaling value such that the surface of hypersphere $\|\mathbf{T}\mathbf{d}\|^2=1$ reach the origin of $\mathcal{D}$, which is equivalent to find the minimum vector $\mathbf{T}\mathbf{d}$ on the surface of hypersphere $\|\mathbf{T}\mathbf{d}\|^2=1$. Thus, a generalized optimization model in fixed $\mathbf{A}$ is established and represented by
\begin{equation}\label{Eq. fixed A optimization model}
	\left\{
	\begin{aligned}
		&\|\mathbf{T}\mathbf{d}\|^2=\frac{|\alpha| K}{2};\\
		&\sum_{i=1}^{K}\log(d_i)=0.
	\end{aligned}
	\right.
\end{equation}

\begin{figure*}[t!]
	\centering
	\subfigure[$\mathbf{T}=\mathbf{\Lambda}^{\frac{1}{2}}\mathbf{Q}^{\top}$ in SVD]{\includegraphics[width=.32\textwidth]{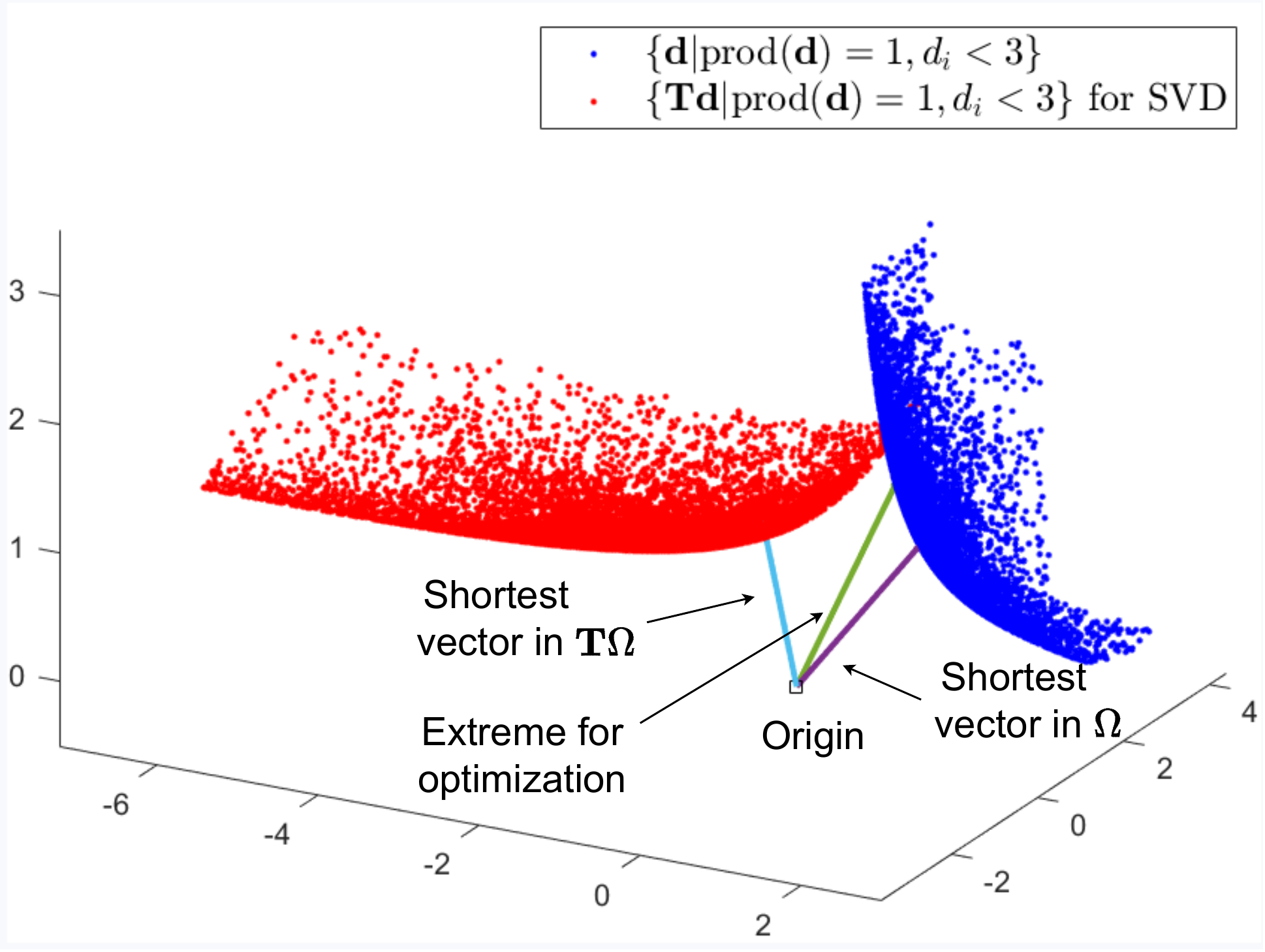}}
	\subfigure[$\mathbf{T}=\mathbf{\Sigma}^{\frac{1}{2}}\mathbf{L}^{\top}\mathbf{O}$ in LDL]{\includegraphics[width=.32\textwidth]{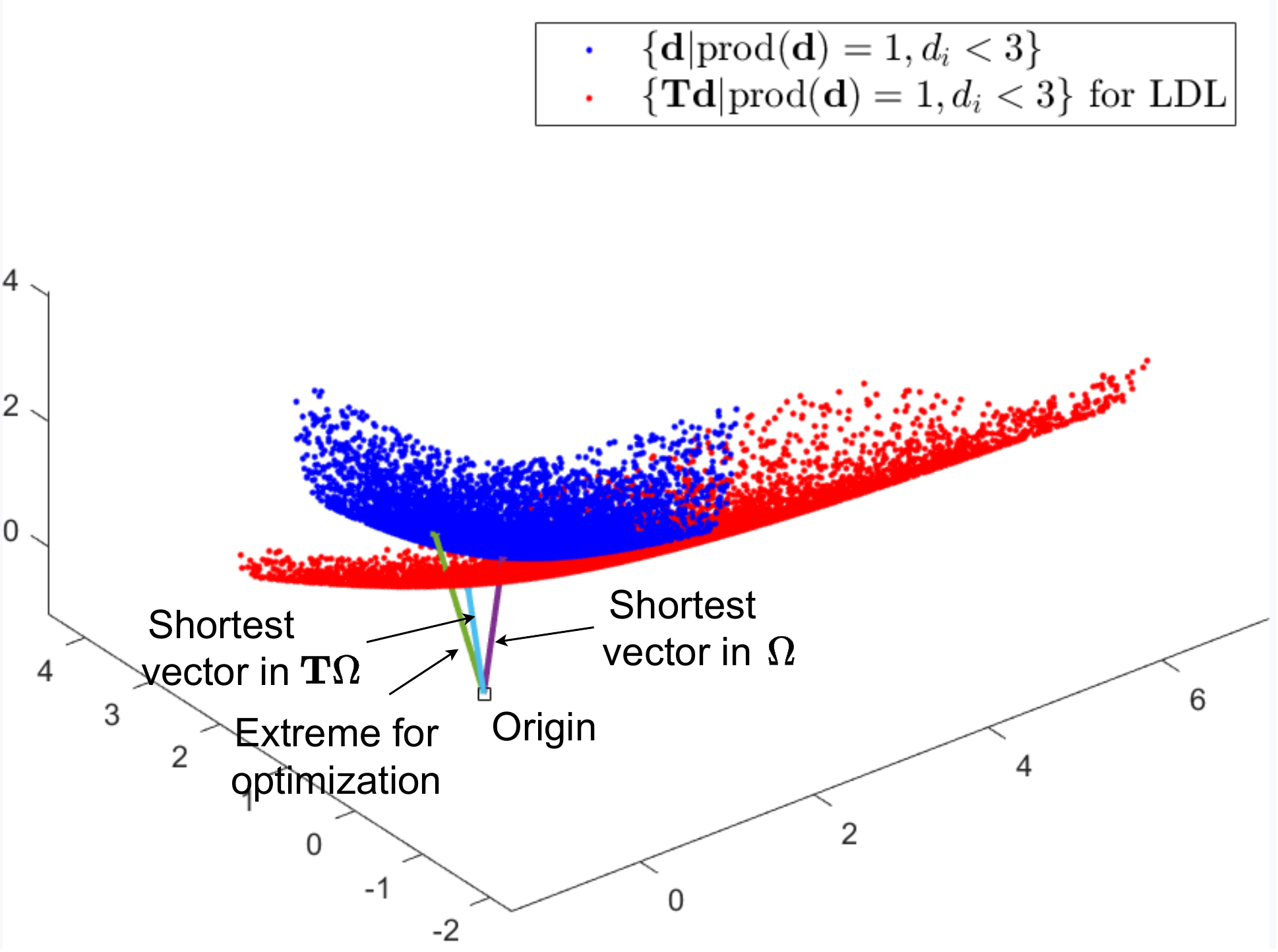}}
	\subfigure[$\mathbf{T}=\mathbf{\Sigma}^{\frac{1}{2}}$ in \cite{venturelli2020optimization}]{\includegraphics[width=.32\textwidth]{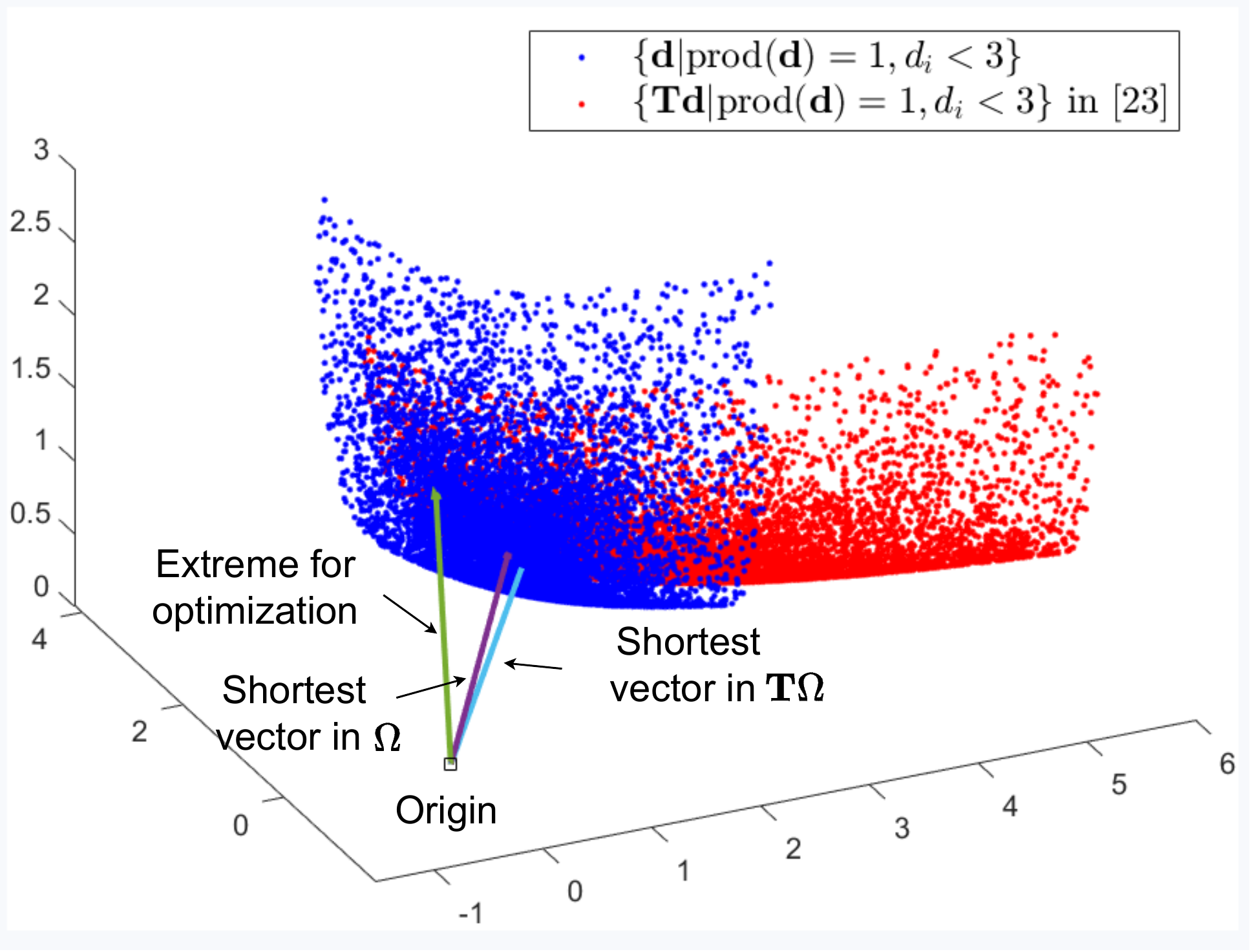}}
	\caption{$3$-dimensional sample for generalized optimization model in fixed $\mathbf{A}$}
	\label{Fig. generalized model}
\end{figure*}

Considering SVD $\mathbf{G}=\mathbf{T}^{\top}\mathbf{T}=\mathbf{Q}\mathbf{\Lambda}\mathbf{Q}^{\top}$ and Lower-Diagonal-Lower Transpose (LDL) $\mathbf{O}\mathbf{G}\mathbf{O}^{\top}=\mathbf{O}\mathbf{T}^{\top}\mathbf{T}\mathbf{O}^{\top}=\mathbf{L}\mathbf{\Sigma}\mathbf{L}^{\top}$ \cite{venturelli2020optimization}, Fig. \ref{Fig. generalized model} depicts some samples in $3$-dimension on Eq. (\ref{Eq. fixed A optimization model}), in which SNR=$15$dB and
\begin{equation}
	\mathbf{H}=\left[
	\begin{matrix}
		0.7826	&0.6097	&0.7154\\
		1.8776	&1.8774	&0.1899\\
		1.8507	&0.0694	&0.3630
	\end{matrix}
	\right].\notag
\end{equation} 

From the observations of Fig. \ref{Fig. generalized model} (a) and (b), the shape of the region transformed by $\mathbf{T}$ (colored in red) has been changed compared to the region of $\varOmega$ (colored in blue), which causes the extreme for optimization (reconstructed from the shortest vector in red region) offset from the shortest vector in $\varOmega$. The deformation on transformed region is caused by two reasons: 
\begin{itemize}
	\item [i)] The rotating caused by unitary matrix changes the spatial location of region, which causes the scaling affected by diagonal matrix partially changes to a folding effects.
	\item [ii)] the scaling affected by diagonal matrix makes the boss of $\varOmega$ tilts to along a certain direction.
\end{itemize}
Meanwhile, considering LDL and $\mathbf{L}^{\top}\mathbf{O}=\mathbf{I}$, the generalized model in Eq. (\ref{Eq. fixed A optimization model}) becomes the method proposed in \cite{venturelli2020optimization}, which is depicted in Fig. \ref{Fig. generalized model} (c). From that figure, we can observe that the consideration $\mathbf{L}^{\top}\mathbf{O}=\mathbf{I}$ ignores the rotating caused by unitary matrix, which causes an obvious deviation from the practical optimum.

\begin{remark}\label{Rem. convex problem in fixed A}
	Since the problem in Eq. (\ref{Eq. fixed A optimization problem}) is equivalent to find the minimum vector $\mathbf{T}\mathbf{d}$ on the surface of hypersphere $\|\mathbf{T}\mathbf{d}\|^2=1$, the points in that hypersphere can be also regarded as the alternative in $\mathbf{d}$ selecting. It reflects that the constraint $\prod_{i=1}^K d_i\leq1$ is the same as $\prod_{i=1}^K d_i=1$ in geometric perspective, and obtain the same solution. Due to the equivalence on $\prod_{i=1}^K d_i\leq1$ and $\sum_{i=1}^{K}\log(d_i)$, the constrain in Eq. (\ref{Eq. fixed A optimization problem}) can be shifted to convex in geometric perspective, and the problem in Eq. (\ref{Eq. fixed A optimization problem}) is convex combining with Lemma \ref{Lem. extreme existence and uniqueness}. 
\end{remark}

Comprehensive the above contents, we have Theorem \ref{Pro. solution set} for the characteristic of solution set in Eq. (\ref{Eq. generalized equation system}).
\begin{theorem}\label{Pro. solution set}
	The solution set $\mathcal{S}$ of Eq. (\ref{Eq. generalized equation system}) is a finite nonempty set in $\mathcal{D}$.
\end{theorem}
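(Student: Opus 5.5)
The plan is to prove nonemptiness and finiteness separately, organizing the argument around the partition of $\varOmega$ into the regions $\mathcal{P}(\mathbf{A})$ described after Theorem \ref{The. similar lattice reduction}. Throughout, a solution of Eq. (\ref{Eq. generalized equation system}) is read as a stationary point of the restricted problem Eq. (\ref{Eq. fixed A optimization problem}) on some region $\mathcal{P}(\mathbf{A}_i)$. The derivative $\partial\mathbf{G}(\mathbf{d})/\partial\mathbf{d}$ vanishes in the interior of each region, because $\mathbf{A}(\mathbf{d})$ is locally constant there by Theorem \ref{The. similar lattice reduction}.

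\textbf{Nonemptiness.} First I will show that the objective $f(\mathbf{d})=\mathbf{d}^{\top}\mathbf{G}(\mathbf{d})\mathbf{d}$ attains a minimum on $\varOmega$, and that the minimizer solves Eq. (\ref{Eq. generalized equation system}).

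\begin{itemize}
\item $f$ is bounded below by $0$.
\item Coercivity: if $\mathbf{d}\in\varOmega$ leaves every compact set, some $d_k\to\infty$. Since $\mathbf{A}$ is integral and full rank, its $k$-th row is nonzero in every column set, so $\mathrm{Tr}(\mathbf{A}^{\top}\mathbf{D}\mathbf{M}\mathbf{D}\mathbf{A})\ge\lambda_{\min}(\mathbf{M})\sum_l\|\mathbf{D}\mathbf{a}_l\|^2\ge\lambda_{\min}(\mathbf{M})\,d_k^2$. This uses Lemma \ref{Lem. extreme existence and uniqueness} for positive definiteness of $\mathbf{M}$ in RIF, or DIF with $N\ge K$.
\item Hence the infimum is attained on a compact sublevel set $\{f\le f(\mathbf{1}_K)\}\cap\varOmega$.
\item Only finitely many integer matrices $\mathbf{A}$ can be SIVP-optimal there. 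Each column satisfies $\|\mathbf{M}^{1/2}\mathbf{D}\mathbf{a}_l\|^2\le f(\mathbf{1}_K)$, and $\mathbf{M}^{1/2}\mathbf{D}$ is uniformly well conditioned on the compact set, so the entries of $\mathbf{A}$ are bounded.
\item $f$ is therefore a pointwise minimum of finitely many continuous quadratics $\mathbf{d}^{\top}\mathbf{G}_{\mathbf{A}_i}\mathbf{d}$. It is lower semicontinuous, so the minimum is attained.
\item At the minimizer $\mathbf{d}^\star$, the active quadratic $\mathbf{A}^\star$ also attains its own constrained minimum locally. Its Lagrange conditions, Eq. (\ref{Eq. fixed A equation system 1}), therefore hold, which gives a point of $\mathcal{S}$.
\end{itemize}

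\textbf{Finiteness.} For each fixed $\mathbf{A}_i$, Remark \ref{Rem. convex problem in fixed A} and Lemma \ref{Lem. extreme existence and uniqueness} show that Eq. (\ref{Eq. fixed A optimization problem}) is convex with a strictly convex objective, since $\mathbf{G}$ is positive definite. After the change of variables $\mathbf{x}=\log\mathbf{d}$ the constraint becomes a hyperplane, and Eq. (\ref{Eq. fixed A equation system 2}) then has exactly one solution, $\mathbf{d}^\star(\mathbf{A}_i)$. Every point of $\mathcal{S}$ equals $\mathbf{d}^\star(\mathbf{A}_i)$ for the region containing it. All such points lie in the compact sublevel set above, and only finitely many $\mathbf{A}_i$ are active there. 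So $|\mathcal{S}|$ is at most the number of regions meeting that set, which is finite.

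\textbf{Main obstacle.} The hard part is making rigorous the claim that only finitely many $\mathbf{A}$ occur, together with the handling of boundaries between regions where $\partial\mathbf{G}/\partial\mathbf{d}$ is undefined. I would first try to bound the entries of $\mathbf{A}$ via $\|\mathbf{a}_l\|\le\|(\mathbf{M}^{1/2}\mathbf{D})^{-1}\|_2\,\lambda_K\le\|(\mathbf{M}^{1/2}\mathbf{D})^{-1}\|_2\,\|\mathbf{M}^{1/2}\mathbf{D}\|_2$, uniformly over the compact set. At boundary points, I would interpret $\mathcal{S}$ as the union of the per-region stationary points, which keeps the count finite.
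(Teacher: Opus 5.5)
\textbf{The gap is in the finiteness step}, at the sentence ``All such points lie in the compact sublevel set above.'' That holds for a global minimizer of $f$. A general point of $\mathcal{S}$, however, is only a self-consistent pair $(\mathbf{A}_i,\mathbf{d}^\star(\mathbf{A}_i))$: $\mathbf{d}^\star(\mathbf{A}_i)$ minimizes its own quadratic, and $\mathbf{A}_i$ is optimal at that point. Nothing forces $f(\mathbf{d}^\star(\mathbf{A}_i))=\min_{\mathbf{d}\in\varOmega}\mathbf{d}^{\top}\mathbf{G}_{\mathbf{A}_i}\mathbf{d}$ to be at most $f(\mathbf{1}_K)$, and the paper itself expects several such candidates with different values.

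Coercivity of $f$ only confines points of small objective value. It does not bound how far toward the boundary of $\varOmega$ a self-consistent point can sit, nor how many self-consistent $\mathbf{A}_i$ exist. Your fallback in the last paragraph, bounding $\|\mathbf{a}_l\|$ through the conditioning of $\mathbf{M}^{1/2}\mathbf{D}$ over a compact set, presupposes the very compact set that is missing.

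What is needed is an a priori bound that uses both conditions jointly. For $K=2$ you can see how this goes:
\begin{itemize}
\item At $\mathbf{d}=(t,1/t)^{\top}$ with $t$ large, the optimal $\mathbf{A}$ has columns $(1,k)^{\top}$ and $(0,1)^{\top}$, where $k$ is the nearest integer to $-t^2m_{1,2}/m_{2,2}$.
\item The minimizer of the corresponding quadratic has $t_\star^2=\sqrt{(k^2+1)m_{2,2}/m_{1,1}}$.
\item Self-consistency therefore forces $|k|\approx c|k|$ with $c=|m_{1,2}|/\sqrt{m_{1,1}m_{2,2}}<1$, which bounds $k$ and hence $t_\star$.
\end{itemize}
A general-$K$ argument of this kind is the missing ingredient.

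The paper does not supply it either. Its boundedness step shows that sublevel sets of the trace are bounded, then declares $\mathcal{S}$ bounded because blowing up is ``impossible as a minimum value''; that is the same leap. Once boundedness is actually established, the paper's route and yours are interchangeable. The paper gets discreteness from integrality of $\mathbf{A}_n$ plus uniqueness of the fixed-$\mathbf{A}$ minimizer, then applies Bolzano--Weierstrass. You use finitely many admissible $\mathbf{A}$ on a compact set, with one $\mathbf{d}^\star(\mathbf{A})$ each.

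\textbf{Your nonemptiness argument is sound}, and it is better than the paper's. The paper infers nonemptiness merely from the solvability of each fixed-$\mathbf{A}$ problem, which does not show that any $\mathbf{d}^\star(\mathbf{A})$ lies in its own region. Assume $\mathbf{A}(\mathbf{d})$ is an exact trace minimizer, so that $f(\mathbf{d})=\min_{\mathbf{A}}\mathbf{d}^{\top}\mathbf{G}_{\mathbf{A}}\mathbf{d}$. Then a global minimizer $\mathbf{d}^\star$ with optimal $\mathbf{A}^\star$ satisfies $\mathbf{d}^{\star\top}\mathbf{G}_{\mathbf{A}^\star}\mathbf{d}^\star=f(\mathbf{d}^\star)\le f(\mathbf{d})\le\mathbf{d}^{\top}\mathbf{G}_{\mathbf{A}^\star}\mathbf{d}$ for every $\mathbf{d}\in\varOmega$. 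So it is a global, not merely local, minimizer of the $\mathbf{A}^\star$-quadratic.

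\textbf{A smaller correction.} After $\mathbf{x}=\log\mathbf{d}$, the objective $\sum_{i,j}g_{i,j}e^{x_i+x_j}$ need not be convex when $\mathbf{G}$ has negative off-diagonal entries. Uniqueness of $\mathbf{d}^\star(\mathbf{A})$ should instead come from homogeneity. Minimizing over $\prod_i d_i=1$ is the same as minimizing the strictly convex $\mathbf{d}^{\top}\mathbf{G}\mathbf{d}$ over the convex set $\{\mathbf{d}>\mathbf{0}:\sum_i\log d_i\ge0\}$. Every solution of the fixed-$\mathbf{A}$ Lagrange system has $\alpha=-2\mathbf{d}^{\top}\mathbf{G}\mathbf{d}/K<0$. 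It is therefore a KKT point of that convex problem, and hence its unique minimizer.
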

\begin{proof}
	See Appendix \ref{App. the charteristic of solution set}.
\end{proof}
\begin{remark}
	If we regard $\varOmega$ as a set of $\mathcal{P}(\mathbf{A})$, then searching for the optimal extremum on $\varOmega$ can be interpreted as a search over the elements of $\mathcal{S}$. According to Theorem \ref{Pro. solution set}, $\mathcal{S}$ is a proper subset of the real number space $\mathbb{R}^K$, which implies that the search space can be reduced and the time complexity lowered when $\mathcal{S}$ is known, either completely or in part. If $\mathcal{S}$ is fully known, the optimal extremum can be selected from the extrema corresponding to all $\mathcal{P}(\mathbf{A})$ associated with $\mathcal{S}$; if only partially known, a sub-optimal extremum can be obtained.
\end{remark}

\section{Optimal IF Precoder Design}
To obtain the optimal IF precoder from the proposed generalized model, we proposes a method to optimize $(\mathbf{A},\mathbf{D})$ adaptive to the channel matrix $\mathbf{H}$ and SNR, referring to as Multi-Cone Nested Stochastic Pattern Search (MCN-SPS). In this section, we introduce this method on the optimization for fixed $\mathbf{A}$ (Solving SP1), alternating optimization for $(\mathbf{A},\mathbf{D})$ (Solving SP3) and the stochastic pattern.
\subsection{Optimization for Fixed $\mathbf{A}$}
To solve Eq. (\ref{Eq. fixed A optimization model}), from Fig. \ref{Fig. generalized model}, we find that the shape of $\varOmega$ forms a cone in the holopositive region of $\mathcal{D}$. We have the following lemma for this hypothesis.
\begin{lemma} \label{Lem. Uniqueness of intersection}
	To $\varOmega$, any ray started from the origin and towards $\mathcal{D}$'s holopositive region $\mathcal{D}^{+}$ have and only have one intersection. For the space $\mathcal{B}$ transformed by $\mathbf{B}$ from $\mathcal{D}$, this characteristic holds true for $\varOmega_2=\left\{\mathbf{B}(\mathbf{1}_K\oslash\mathbf{d})|\prod_{i=1}^{K}d_i=1,d_i>0,i\in {1,\cdots,K}\right\}$, where $\mathbf{B}$ represents arbitrarily non-negative matrix.
\end{lemma}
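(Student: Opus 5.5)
The first claim reduces to a one-variable scaling argument. Fix a direction $\mathbf{v}\in\mathcal{D}^{+}$, so all $v_i>0$ (if some $v_i=0$ the ray never enters $\varOmega$). Points of the ray are $t\mathbf{v}$ with $t>0$, and
\begin{equation}
	\prod_{i=1}^{K}(t v_i)=t^{K}\prod_{i=1}^{K}v_i .
\end{equation}
The map $t\mapsto t^{K}\prod_i v_i$ is continuous and strictly increasing on $(0,\infty)$, tends to $0$ as $t\to0^{+}$, and tends to $\infty$ as $t\to\infty$. Hence there is exactly one $t^{*}=\left(\prod_i v_i\right)^{-1/K}$ with $t^{*}\mathbf{v}\in\varOmega$. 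Equivalently, in logarithmic coordinates $\sum_i\log(t v_i)=K\log t+\sum_i\log v_i$ is strictly monotone in $\log t$ and hits $0$ once. This gives existence and uniqueness of the intersection.

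For $\varOmega_2$, the plan has three steps.
\begin{itemize}
	\item[1)] The inversion $\mathbf{d}\mapsto\mathbf{1}_K\oslash\mathbf{d}$ maps $\varOmega$ bijectively onto itself, since $\prod_i(1/d_i)=1$. It also preserves rays from the origin in reversed scale: $\mathbf{1}_K\oslash(t\mathbf{v})=t^{-1}(\mathbf{1}_K\oslash\mathbf{v})$. So the set $\{\mathbf{1}_K\oslash\mathbf{d}\}$ is again $\varOmega$.
	\item[2)] Apply the linear map $\mathbf{B}$. Linearity gives $\mathbf{B}(s\mathbf{w})=s\mathbf{B}\mathbf{w}$, so the statement for $\varOmega_2$ is: every ray $\{s\mathbf{u}: s>0\}$ meets $\mathbf{B}\varOmega$ exactly once.
	\item[3)] Take $\mathbf{u}=\mathbf{B}\mathbf{w}$ with $\mathbf{w}\in\mathcal{D}^{+}$. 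By the first part, the ray through $\mathbf{w}$ meets $\varOmega$ once, at $t^{*}\mathbf{w}$. Then $t^{*}\mathbf{B}\mathbf{w}$ lies on the ray through $\mathbf{u}$ and in $\mathbf{B}\varOmega$.
\end{itemize}
The rays that matter are those in $\mathbf{B}\mathcal{D}^{+}$. Non-negativity of $\mathbf{B}$ guarantees $\mathbf{B}\mathcal{D}^{+}\subseteq\mathcal{D}^{+}\cup\{\mathbf{0}\}$, so these rays point into the positive orthant. I will read the statement as restricted to such rays.

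The main obstacle is uniqueness for $\varOmega_2$ when $\mathbf{B}$ is singular. Two different rays in $\mathcal{D}^{+}$, say through $\mathbf{w}$ and $\mathbf{w}'$, could then be mapped by $\mathbf{B}$ onto the same ray and produce two intersection points. Existence is the easy part.

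I would first try assuming $\mathbf{B}$ non-negative and invertible, or more precisely that $\mathbf{B}$ is injective on rays of $\mathcal{D}^{+}$, which covers the intended use with $\mathbf{B}=\mathbf{G}$ positive definite and entrywise non-negative. Under that hypothesis, $s\mathbf{u}=s'\mathbf{u}'$ with both points in $\mathbf{B}\varOmega$ pulls back under $\mathbf{B}^{-1}$ to two points of $\varOmega$ on a common ray. These coincide by the first part, which gives uniqueness. I would state explicitly that this injectivity is needed and note that it holds for the matrices $\mathbf{B}$ used later.
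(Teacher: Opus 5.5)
Your proof is correct, and for the uniqueness claims it follows the paper's argument. The paper uses the same three ingredients: the scaling contradiction $\prod_i(\eta d_i)=\eta^K\prod_i d_i$ for $\varOmega$, the fact that $\mathbf{d}\mapsto\mathbf{1}_K\oslash\mathbf{d}$ maps $\varOmega$ onto itself, and a pull-back through $\mathbf{B}^{-1}$ for $\varOmega_2$.

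Where you differ is existence. The paper's existence step is a loosely phrased contradiction: it considers a point $\mathbf{r}=[r_1,\cdots,r_{K-1},\beta]^{\top}$, rules out $\beta<0$, and concludes $\mathbf{r}\in\varOmega$. That does not show that the ray meets $\varOmega$. Your explicit $t^{*}=(\prod_i v_i)^{-1/K}$, pushed forward through $\mathbf{B}$, is what actually proves existence.

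The paper's $\varOmega_2$ argument writes $\mathbf{B}^{-1}$ while stating the lemma for an arbitrary non-negative $\mathbf{B}$. So the invertibility you flag is exactly its hidden assumption, and it is genuinely needed. For $K=2$ and $\mathbf{B}=\left[\begin{matrix}1&1\\1&1\end{matrix}\right]$ one gets $\varOmega_2=\{(s,s)^{\top}:s\geq 2\}$. Since $\mathbf{B}\mathcal{D}^{+}$ is exactly the ray through $(1,1)^{\top}$, that ray meets $\varOmega_2$ infinitely often. For $K\geq 2$, your condition ``injective on rays of $\mathcal{D}^{+}$'' is equivalent to invertibility. Indeed, a nonzero kernel vector $\mathbf{k}$ gives $\mathbf{B}\mathbf{x}=\mathbf{B}(\mathbf{x}+\varepsilon\mathbf{k})$ for some $\mathbf{x}\in\mathcal{D}^{+}$ not parallel to $\mathbf{k}$ and small $\varepsilon$.

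Two small slips:
\begin{itemize}
\item The inclusion $\mathbf{B}\mathcal{D}^{+}\subseteq\mathcal{D}^{+}\cup\{\mathbf{0}\}$ fails for a non-negative $\mathbf{B}$ with a zero row. Under invertibility no row vanishes, so $\mathbf{B}\mathcal{D}^{+}\subseteq\mathcal{D}^{+}$, and nothing in your argument relies on this inclusion anyway.
\item The matrix the paper actually feeds into this construction is $\mathbf{G}^{-1}$, the non-negative inverse of a positive definite M-matrix. It is not $\mathbf{G}$ itself, whose off-diagonal entries are non-positive.
\end{itemize}
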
 
\begin{proof}
	We prove Lemma \ref{Lem. Uniqueness of intersection} by using contradiction. Firstly, we assume
	\begin{equation}
		\exists\mathbf{r}=[r_1,\cdots,r_{K-1},\beta]^{\top}\in\mathcal{D}^{+} s.t., \mathbf{r}\notin\varOmega,~\mathrm{and},~\mathbf{B}\mathbf{r}\notin\varOmega_2.
	\end{equation} 
	Since $d_i>0,i\in {1,\cdots,K}$, it is impossible to find a $\mathbf{r}$ in the holopositive region of $\mathcal{D}$ with $\beta<0$. Thus, $\mathbf{r}$ must belong to $\varOmega$, and $\mathbf{B}\mathbf{r}\notin\varOmega_2$. In other word, any ray toward the holopositive region of $\mathcal{D}$ must have intersection to $\varOmega$, and so as for $\mathcal{B}$ to $\varOmega_2$.
	
	Regarding the uniqueness of intersection, since there are two propositions in this lemma, we will prove them respectively.
	\begin{itemize}
		\item \textit{$\varOmega$: }Assuming that there are two intersection for a ray to $\varOmega$, i.e.,
		\begin{align}\label{Eq. constract proposition on Omega}
			&\forall\mathbf{d}_1=[d_{1,1},\cdots,d_{1,K}]^{\top},\mathbf{d}_2=\eta\mathbf{d}_1\in\varOmega,\eta\neq1,
			~\text{s.t.},~\prod_{i=1}^{K}d_{1,i}=1,\prod_{i=1}^{K}d_{2,i}=1.
		\end{align}
		However, due to
		\begin{equation}
			\prod_{i=1}^{K}d_{2,i}=\eta^K\prod_{i=1}^{K}d_{1,i}=\eta^K,
		\end{equation}
		it is contradictory to Eq. (\ref{Eq. constract proposition on Omega}), and any intersection number more than $2$ is contradictory similarly for the same reason. Thus, there is only one intersection between any ray toward the holopositive region of $\mathcal{D}$ and $\varOmega$.
		\item \textit{$\varOmega_2$: }For any $\mathbf{d}\in\varOmega$, when $\prod_{i=1}^{K}d_i=1$, we have
		\begin{equation}
			\prod_{i=1}^{K}\frac{1}{d_i}=1
		\end{equation}
		so the set 
		\begin{equation}
			\varOmega_3=\left\{(\mathbf{1}_K\oslash\mathbf{d})|\prod_{i=1}^{K}d_i=1,d_i>0,i\in {1,\cdots,K}\right\}.
		\end{equation}
		corresponds to $\varOmega$ one-by-one. Similarly, when we assume that
		\begin{align}
			&\forall\mathbf{x}_1,\mathbf{x}_2=\eta\mathbf{x}_1\in\varOmega_2,\eta\neq1,
			\text{s.t.},\prod_{i=1}^{K}\left[\mathbf{B}^{-1}\mathbf{x}_1\right]_i=1,\prod_{i=1}^{K}\left[\mathbf{B}^{-1}\mathbf{x}_2\right]_i=1,
		\end{align} 
		it is contradictory since
		\begin{equation}
			\prod_{i=1}^{K}\left[\mathbf{B}^{-1}\mathbf{x}_2\right]_i=\eta^K\prod_{i=1}^{K}\left[\mathbf{B}^{-1}\mathbf{x}_1\right]_i=\eta^K.
		\end{equation}
		According the one-by-one corresponding, there is only one intersection between any ray toward the holopositive region of $\mathcal{B}$ and $\varOmega_2$.
	\end{itemize}
Combining aforementioned contents, the lemma has been proved.	  
\end{proof}
The Lemma \ref{Lem. Uniqueness of intersection} reveals that each $\mathbf{d}\in\varOmega$ corresponds to the ray toward the holopositive region of $\mathcal{D}$, which inspire us that to locate a $\mathbf{d}$ only require the direction of $\mathbf{d}$ rather than its length. Based on this inspiration, obtaining the optimal $\mathbf{d}$ only requires the proportion of each $d_i$. According to Eq. (\ref{Eq. fixed A equation system 1}), we have
\begin{align}\label{Eq. fixed A process 1}
	&\left\{
	\begin{aligned}
		&2\mathbf{G}\mathbf{d}
		+\alpha(\mathbf{1}_K\oslash\mathbf{d})=\mathbf{0}_K\\
		&\sum_{i=1}^{K}\log(d_i)=0
	\end{aligned}
	\right.
	\Rightarrow
	\left\{
	\begin{aligned}
		&d_1\mathbf{g}_1^{\top}\mathbf{d}=-\frac{\alpha}{2}\\
		&\cdots\\
		&d_K\mathbf{g}_K^{\top}\mathbf{d}=-\frac{\alpha}{2}\\
		&\sum_{i=1}^{K}\log(d_i)=0
	\end{aligned}
	\right.,
\end{align}
where $\mathbf{g}_i,i\in\left\{1,\cdots,K\right\}$ represents the row of $\mathbf{G}$. From Eq. (\ref{Eq. fixed A process 1}), we find  $d_1\mathbf{g}_1^{\top}\mathbf{d}=\cdots=d_K\mathbf{g}_K^{\top}\mathbf{d}=-\alpha/2$, alternatively,
\begin{equation}
	\frac{d_i}{d_j}=\frac{\mathbf{g}_j^{\top}\mathbf{d}}{\mathbf{g}_i^{\top}\mathbf{d}},~i\neq j.
\end{equation}
By conducting the auxiliary matrix 
\begin{equation}
	\mathbf{S}=\left[
	\begin{matrix}
		\frac{d_1}{d_1}	&\cdots	&\frac{d_1}{d_K}\\
		\cdots	&\ddots	&\cdots\\
		\frac{d_K}{d_1}	&\cdots	&\frac{d_K}{d_K}
	\end{matrix}
	\right],
\end{equation}
the proportion of each $d_i$ can be calculated by 
\begin{align}
	\frac{d_j}{\sum_{i=1}^{K}d_i}=\left(\mathbf{S}_j^{\top}\mathbf{1}_K\right)^{-1}
	&=\left(\frac{d_1}{d_j}+\cdots+\frac{d_K}{d_j}\right)^{-1}
	=\left(\frac{\mathbf{g}_j^{\top}\mathbf{d}}{\mathbf{g}_1^{\top}\mathbf{d}}+\cdots+\frac{\mathbf{g}_j^{\top}\mathbf{d}}{\mathbf{g}_K^{\top}\mathbf{d}}\right)^{-1}
	=\frac{1}{\mathbf{g}_j^{\top}\mathbf{d}}\left(\sum_{i=1}^{K}\frac{1}{\mathbf{g}_i^{\top}\mathbf{d}}\right)^{-1}.
\end{align}
Since $\left(\sum_{i=1}^{K}\frac{1}{\mathbf{g}_i^{\top}\mathbf{d}}\right)^{-1}$ is a constant only corresponding to $\mathbf{G}$, we derive that
\begin{align}
	\mathbf{d}_{\mathrm{opt}}
	&\propto\left[\frac{1}{\mathbf{g}_j^{\top}\mathbf{d}}\right]_{j\in\{1,\cdots,K\}}.
\end{align}
Thus, regarding the direction of $\mathbf{d}$, the equation system for $\mathbf{d}$ optimization is given by
\begin{align}\label{Eq. fixed A optimization on direction}
	&\left\{
	\begin{aligned}
		&\mathbf{G}\mathbf{d}\propto\mathbf{1}_K\oslash\mathbf{d};\\
		&\prod_{i=1}^{K}d_i=1.
	\end{aligned}
	\right.
\end{align}

To solve Eq. (\ref{Eq. fixed A optimization on direction}), we can recognize it as a special case of matrix balancing problem. Here we have the following theorem about this equivalence.
\begin{theorem}\label{The. unilateral matrix balancing problem equivalence}
	The constraint $\prod_{i=1}^{K}d_i=1$ can be expressed by the scaling function
	\begin{equation}
		\Gamma(\mathbf{x})=\vartheta(\mathbf{x})\mathbf{x},~\vartheta(\mathbf{x})\triangleq\frac{1}{\prod_{i=1}^{K}x_i^{\frac{1}{K}}},~\mathbf{x}=[x_1,\cdots,x_K]^{\top}.
	\end{equation}
	Let $\mathbf{G}$ as an M matrix \cite{berman1994nonnegative}. The problem in fixed $\mathbf{A}$ is equivalent to an unilateral matrix balancing problem with row sum constraints. 
	
	That is, we seek a diagonal matrix $\mathbf{C}$ such that the matrix $\mathbf{B}=\vartheta(\mathbf{G}^{-1}\mathbf{C}\cdot\mathbf{1}_K)\mathbf{C}\mathbf{G}^{-1}\mathbf{C}$ is row stochastic, i.e., it satisfy \cite{sayed2014adaptation}:
	\begin{equation}
		\mathbf{B}\cdot\mathbf{1}_K=\mathbf{1}_K.
	\end{equation}
\end{theorem}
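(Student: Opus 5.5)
The proof has two parts: first check that $\Gamma$ realizes the constraint, then turn the fixed-point condition Eq. (\ref{Eq. fixed A optimization on direction}) into a row-sum condition. For the first part, for any $\mathbf{x}>\mathbf{0}$ we have
\begin{equation}
\prod_{i=1}^K \vartheta(\mathbf{x})x_i=\vartheta(\mathbf{x})^K\prod_{i=1}^K x_i=1 .
\end{equation}
So $\Gamma$ maps every ray in $\mathcal{D}^{+}$ onto $\varOmega$. By Lemma \ref{Lem. Uniqueness of intersection} each such ray meets $\varOmega$ exactly once, so $\Gamma(\mathbf{x})$ is precisely that unique intersection. Hence the constraint $\prod_i d_i=1$ can be dropped: we solve the direction equation $\mathbf{G}\mathbf{d}\propto\mathbf{1}_K\oslash\mathbf{d}$ and then normalize by $\Gamma$.

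For the second part, let $\mathbf{C}=\mathrm{diag}(\mathbf{c})$ with $\mathbf{c}>\mathbf{0}$. Because $\mathbf{G}$ is assumed to be an M-matrix, $\mathbf{G}^{-1}$ exists and is entrywise nonnegative. By Lemma \ref{Lem. extreme existence and uniqueness}, $\mathbf{G}$ is also positive definite, so its diagonal is positive. Rewrite the stationarity condition $\mathbf{G}\mathbf{d}=\lambda(\mathbf{1}_K\oslash\mathbf{d})$, $\lambda=-\alpha/2>0$, as
\begin{equation}
\mathbf{d}=\lambda\,\mathbf{G}^{-1}(\mathbf{1}_K\oslash\mathbf{d}).
\end{equation}
Now set $\mathbf{c}=\mathbf{1}_K\oslash\mathbf{d}$, so $\mathbf{C}=\mathbf{D}^{-1}$. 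Then $\mathbf{d}=\lambda\mathbf{G}^{-1}\mathbf{C}\mathbf{1}_K$, and multiplying on the left by $\mathbf{C}$ gives
\begin{equation}
\mathbf{1}_K=\mathbf{C}\mathbf{d}=\lambda\,\mathbf{C}\mathbf{G}^{-1}\mathbf{C}\mathbf{1}_K .
\end{equation}
This says the row sums of $\lambda\mathbf{C}\mathbf{G}^{-1}\mathbf{C}$ all equal $1$. The scalar $\lambda$ is not free: it is fixed by the constraint. Write $\mathbf{d}=\lambda\mathbf{G}^{-1}\mathbf{C}\mathbf{1}_K$ and impose $\prod_i d_i=1$; this yields $\lambda=\vartheta(\mathbf{G}^{-1}\mathbf{C}\mathbf{1}_K)$, which is exactly $\mathbf{d}=\Gamma(\mathbf{G}^{-1}\mathbf{C}\mathbf{1}_K)$. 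Substituting gives $\mathbf{B}\mathbf{1}_K=\mathbf{1}_K$ with $\mathbf{B}$ as in the statement.

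The converse runs the same computation backwards. Suppose $\mathbf{B}\mathbf{1}_K=\mathbf{1}_K$ for some positive diagonal $\mathbf{C}$, and put $\mathbf{d}=\Gamma(\mathbf{G}^{-1}\mathbf{C}\mathbf{1}_K)$. Then $\mathbf{C}\mathbf{d}=\mathbf{1}_K$, which forces $\mathbf{C}=\mathbf{D}^{-1}$ and $\mathbf{G}\mathbf{d}=\vartheta\,(\mathbf{1}_K\oslash\mathbf{d})$, with $\vartheta=\vartheta(\mathbf{G}^{-1}\mathbf{C}\mathbf{1}_K)$. This $\mathbf{d}$ lies in $\varOmega$. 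Hence it solves Eq. (\ref{Eq. fixed A optimization on direction}) and therefore Eq. (\ref{Eq. fixed A equation system 1}), with $\alpha=-2\vartheta$. By the convexity in Lemma \ref{Lem. extreme existence and uniqueness} and Remark \ref{Rem. convex problem in fixed A}, this stationary point is the minimizer of Eq. (\ref{Eq. fixed A optimization problem}). This establishes the equivalence.

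The main obstacle is positivity. Every step requires $\mathbf{G}^{-1}\mathbf{C}\mathbf{1}_K>\mathbf{0}$, both so that $\vartheta$ is well defined and so that the recovered $\mathbf{d}$ lies in $\mathcal{D}^{+}$. This is why the M-matrix hypothesis is imposed. An invertible M-matrix has $\mathbf{G}^{-1}\geq 0$, so we need to show that no row of $\mathbf{G}^{-1}$ vanishes. That holds because $\mathbf{G}^{-1}$ is nonsingular, and a nonsingular matrix cannot have a zero row. Consequently $\mathbf{G}^{-1}\mathbf{c}>\mathbf{0}$ for every $\mathbf{c}>\mathbf{0}$, which completes the argument.
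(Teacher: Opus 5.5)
Your proposal is correct and follows the paper's own route. You substitute $\mathbf{C}=\mathbf{D}^{-1}$ into $\mathbf{d}=\lambda\mathbf{G}^{-1}(\mathbf{1}_K\oslash\mathbf{d})$, multiply by $\mathbf{C}$, and fix $\lambda=\vartheta(\mathbf{G}^{-1}\mathbf{C}\mathbf{1}_K)$ from $\prod_i d_i=1$. Beyond that, you add the converse direction and the strict positivity of $\mathbf{G}^{-1}\mathbf{C}\mathbf{1}_K$, both of which the paper leaves implicit. (The paper only notes $\mathbf{B}\geq\mathbf{0}$ to call it row stochastic, which follows at once from your $\mathbf{G}^{-1}\geq\mathbf{0}$.)

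One small caveat concerns your final ``stationary point is the minimizer'' step. The convex reformulation that actually works is $\prod_i d_i\geq 1$: this is a convex set, and the constraint is active at the optimum because the objective is homogeneous of degree two. The paper's Remark writes $\prod_i d_i\leq 1$ instead, which is neither convex nor tight, so justify that step directly rather than by citing the Remark.
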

\begin{proof}
	Due to the constrain $\prod_{i=1}^{K}d_i=1$, Eq. (\ref{Eq. fixed A optimization on direction}) can be merged into
	\begin{equation}
		\mathbf{d}=\Gamma(\mathbf{G}^{-1}(\mathbf{1}_K\oslash\mathbf{d}))=\vartheta(\mathbf{G}^{-1}(\mathbf{1}_K\oslash\mathbf{d}))\mathbf{G}^{-1}\left(\mathbf{1}_K\oslash\mathbf{d}\right).
	\end{equation}
	The above equation can be reformulated as 
	\begin{align}
		&\mathbf{d}=\vartheta(\mathbf{G}^{-1}(\mathbf{1}_K\oslash\mathbf{d}))\mathbf{G}^{-1}\left(\mathbf{1}_K\oslash\mathbf{d}\right)\notag\\
		\Rightarrow&\mathbf{D}\cdot\mathbf{1}_K=\vartheta(\mathbf{G}^{-1}\mathbf{D}^{-1}\cdot\mathbf{1}_K)\mathbf{G}^{-1}\mathbf{D}^{-1}\cdot\mathbf{1}_K\notag\\
		\Rightarrow&\vartheta(\mathbf{G}^{-1}\mathbf{D}^{-1}\cdot\mathbf{1}_K)\mathbf{D}^{-1}\mathbf{G}^{-1}\mathbf{D}^{-1}\cdot\mathbf{1}_K=\mathbf{1}_K.
	\end{align}
	Considering $\mathbf{C}=\mathbf{D}^{-1}$, we have 
	\begin{equation}\label{Eq. unilateral matrix balancing problem equivalence 1}
		\vartheta(\mathbf{G}^{-1}\mathbf{C}\cdot\mathbf{1}_K)\mathbf{C}\mathbf{G}^{-1}\mathbf{C}\cdot\mathbf{1}_K=\mathbf{1}_K.
	\end{equation}
	Since $\mathbf{G}$ is an M matrix and is PDM in Lemma \ref{Lem. extreme existence and uniqueness}, $\mathbf{G}^{-1}$ must be a non-negative matrix according to \cite{berman1994nonnegative}. Considering the matrix $\mathbf{B}=\vartheta(\mathbf{G}^{-1}\mathbf{C}\cdot\mathbf{1}_K)\mathbf{C}\mathbf{G}^{-1}\mathbf{C}$, Eq. (\ref{Eq. unilateral matrix balancing problem equivalence 1}) can be reformulated as $\mathbf{B}\cdot\mathbf{1}_K=\mathbf{1}_K$. As the non-negative characteristic of $\mathbf{G}^{-1}$ and $\mathbf{D}$, $\mathbf{B}$ is a non-negative matrix, which also be a row stochastic matrix because the definition in \cite{sayed2014adaptation}. 
\end{proof}
With Theorem \ref{The. unilateral matrix balancing problem equivalence}, we learn that the $\mathbf{D}$ optimization in fixed $\mathbf{A}$ can be regarded as an unilateral matrix balancing problem with row sum constraints. Inspired by the Sinkhorn iteration \cite{NIPS2017_491442df}, the function is formulated as
\begin{equation}\label{Eq. iterative function}
	\mathbf{d}^{(t+1)}=\Gamma\left(\mathbf{G}^{-1}\left(\mathbf{1}_K\oslash\mathbf{d}^{(t)}\right)\right),
\end{equation}
where $\Gamma(\mathbf{x})=\frac{\mathbf{x}}{\prod_{i=1}^{K}x_i^{\frac{1}{K}}}$, $\mathbf{x}=[x_1,\cdots,x_K]^{\top}$ is a normalized function to satisfy the constraint $\prod_{i=1}^{K}d_i=1$ and $t$ denotes the index of iteration. The convergence of Eq. (\ref{Eq. iterative function}) is proved in the next theorem. 
\begin{theorem}\label{The. Convergation of proposed method}
	Towards the $p$-PAM symbols in Rayleigh channel, if $\mathbf{A}\mathbf{A}^{\top}$ is non-negative, Eq. (\ref{Eq. iterative function}) is converged in direction, i.e.,
	\begin{equation}\label{Eq. iterative terminate}
		\mathbf{d}^{(t)}\overset{t\rightarrow\infty}{\propto}\Gamma\left(\mathbf{G}^{-1}\left(\mathbf{1}_K\oslash\mathbf{d}^{(t)}\right)\right). 
	\end{equation}
	The aforementioned conclusion holds true in RIF, and in DIF when $N\geq K$.
\end{theorem}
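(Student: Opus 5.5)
The plan is to view the iteration in Eq. (\ref{Eq. iterative function}) as a self-map of the open positive cone $\mathcal{D}^{+}$ and show it is a strict contraction in the Hilbert metric $d_H$. Since $d_H$ is invariant under positive scaling (property v)), the normalization $\Gamma$ is irrelevant for the directional statement Eq. (\ref{Eq. iterative terminate}). It therefore suffices to study the map
\[
F(\mathbf{d})=\mathbf{G}^{-1}\left(\mathbf{1}_K\oslash\mathbf{d}\right)
\]
on rays, and Lemma \ref{Lem. Uniqueness of intersection} then identifies each ray with a unique point of $\varOmega$. The map $F$ is the composition of the coordinatewise inversion $\mathbf{x}\mapsto\mathbf{1}_K\oslash\mathbf{x}$ with the linear map $\mathbf{G}^{-1}$.

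For the first factor, note that $(\mathbf{1}_K\oslash\mathbf{x})\oslash(\mathbf{1}_K\oslash\mathbf{y})=\mathbf{y}\oslash\mathbf{x}$. Hence the max/min ratio is unchanged and $d_H$ is preserved exactly, so inversion is an isometry of $(\mathcal{D}^{+}/\mathbb{R}_+,d_H)$.

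For the second factor I need $\mathbf{G}^{-1}$ to be a positive (or at least primitive, nonnegative) matrix, so that Birkhoff–Hopf applies with
\[
\kappa(\mathbf{G}^{-1})=\tanh\!\big(\Delta(\mathbf{G}^{-1})/4\big)<1.
\]
I would establish this in three steps.
\begin{itemize}
\item[(a)] By Lemma \ref{Lem. extreme existence and uniqueness}, under RIF, or under DIF with $N\geq K$, $\mathbf{M}$ is positive definite. Combined with the full rank of $\mathbf{A}$ and the Schur product theorem, $\mathbf{G}=(\mathbf{A}\mathbf{A}^{\top})\circ\mathbf{M}$ is positive definite, so $\mathbf{G}^{-1}$ exists. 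This is where the RIF/DIF case split enters.
\item[(b)] Using the hypothesis that $\mathbf{A}\mathbf{A}^{\top}$ is non-negative together with the M-matrix assumption already invoked in Theorem \ref{The. unilateral matrix balancing problem equivalence}, conclude that $\mathbf{G}^{-1}\geq 0$ entrywise. For a Rayleigh channel, $\mathbf{M}$ is almost surely irreducible (generically no zero pattern), so the inverse of an irreducible nonsingular M-matrix is entrywise strictly positive (Berman–Plemmons).
\item[(c)] A strictly positive matrix maps the whole cone into a compact subcone. Its projective diameter is therefore finite, with
\[
\Delta(\mathbf{G}^{-1})=\max_{i,j,k,l}\log\frac{[\mathbf{G}^{-1}]_{ik}[\mathbf{G}^{-1}]_{jl}}{[\mathbf{G}^{-1}]_{il}[\mathbf{G}^{-1}]_{jk}}<\infty,
\]
and so $\kappa<1$.
\end{itemize}

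Combining the two factors gives
\[
d_H\big(F(\mathbf{x}),F(\mathbf{y})\big)\leq\kappa\, d_H(\mathbf{x},\mathbf{y}).
\]
The space of rays in $\mathcal{D}^{+}$ with $d_H$ (equivalently $\varOmega$, via Lemma \ref{Lem. Uniqueness of intersection}) is complete. Banach's fixed point theorem then yields a unique fixed ray $\mathbf{d}^{\star}$ with $F(\mathbf{d}^{\star})\propto\mathbf{d}^{\star}$, and geometric convergence
\[
d_H\big(\mathbf{d}^{(t)},\mathbf{d}^{\star}\big)\leq\kappa^{t}\, d_H\big(\mathbf{d}^{(0)},\mathbf{d}^{\star}\big).
\]
Applying $\Gamma$ selects the representative in $\varOmega$, which gives Eq. (\ref{Eq. iterative terminate}). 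The fixed point satisfies $\mathbf{G}\mathbf{d}\propto\mathbf{1}_K\oslash\mathbf{d}$, i.e. Eq. (\ref{Eq. fixed A optimization on direction}). By the convexity of Lemma \ref{Lem. extreme existence and uniqueness} and Remark \ref{Rem. convex problem in fixed A}, it is the optimum of Eq. (\ref{Eq. fixed A optimization problem}).

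The main obstacle is step (b): guaranteeing that $\mathbf{G}^{-1}$ is entrywise positive rather than merely invertible. Positive definiteness alone does not give this, so the argument leans on the non-negativity hypothesis on $\mathbf{A}\mathbf{A}^{\top}$ and the M-matrix structure. I would first try to show that $\mathbf{G}$ has nonpositive off-diagonal entries, or else state the M-matrix property as part of the hypothesis, as Theorem \ref{The. unilateral matrix balancing problem equivalence} does. Irreducibility would then be justified by the almost-sure absence of zero entries for a Rayleigh-distributed $\mathbf{H}$. If only nonnegativity and irreducibility hold, I would instead apply the contraction to a power $F^{m}$ with $\mathbf{G}^{-m}>0$ (primitivity) and conclude convergence along that subsequence, then for the full iterates.
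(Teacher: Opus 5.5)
\paragraph{Overall.} Your skeleton is the paper's own argument. Its Appendix proof shows that coordinatewise inversion preserves $d_H$ (Lemma \ref{Lem. Hilbert metric for vector inverse}). It then asserts a Birkhoff--Hopf bound $d_H(\mathbf{G}^{-1}\mathbf{x},\mathbf{G}^{-1}\mathbf{y})\leq\gamma\, d_H(\mathbf{x},\mathbf{y})$ with $\gamma\in(0,1)$, and discards $\Gamma$ by scale invariance. The problem is your step (b): it cannot be closed from the stated hypotheses, and the paper does not actually close it either.

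\paragraph{The sign pattern of $\mathbf{G}$.} The paper reads ``Rayleigh'' as $\mathbf{H}\geq 0$ entrywise. It then claims that the inverse of the nonnegative positive definite matrix $\mathbf{H}\mathbf{H}^{\top}$ (or $\frac{K}{\rho}\mathbf{I}+\mathbf{H}\mathbf{H}^{\top}$) is an M-matrix, so that $\mathbf{A}\mathbf{A}^{\top}\geq 0$ makes $\mathbf{G}$ a Z-matrix. That claim is false once $K\geq 3$. For $\mathbf{H}=\begin{pmatrix}1&1&0&0\\0&1&1&0\\0&0&1&1\end{pmatrix}$ one gets $\mathbf{H}\mathbf{H}^{\top}=\begin{pmatrix}2&1&0\\1&2&1\\0&1&2\end{pmatrix}$, whose inverse has $(1,3)$ entry $1/4>0$. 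Adding $\frac{K}{\rho}\mathbf{I}$ keeps that entry positive. So the M-matrix property of $\mathbf{G}$ really must be an extra hypothesis, as you suspected.

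\paragraph{Strict positivity (the more serious gap).} What must be irreducible is $\mathbf{G}=(\mathbf{A}\mathbf{A}^{\top})\circ\mathbf{M}$, not $\mathbf{M}$. Since $\mathbf{G}$ inherits every zero of $\mathbf{A}\mathbf{A}^{\top}$, the almost-sure density of $\mathbf{M}$ under a Rayleigh channel does not help.

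Take $\mathbf{A}=\mathbf{I}$. It satisfies the hypothesis and even makes $\mathbf{G}=\mathrm{diag}(m_{1,1},\dots,m_{K,K})$ an M-matrix. Then $d^{(t+1)}_i\propto 1/(m_{i,i}d^{(t)}_i)$, so $\mathbf{d}^{(t+2)}=\mathbf{d}^{(t)}$ exactly. Unless $\mathbf{d}^{(0)}\propto[m_{i,i}^{-1/2}]_i$, the iterates oscillate with period two, and $d_H(\mathbf{d}^{(t+1)},\mathbf{d}^{(t)})$ stays constant and positive. The same period-two behaviour appears between blocks whenever $\mathbf{G}$ is reducible.

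So the statement as written is false. The paper's proof has the same hole: it passes from ``$\mathbf{G}^{-1}$ nonnegative'' directly to $\gamma<1$. That step fails exactly when $\mathbf{G}^{-1}$ has zero entries, since then $\Delta(\mathbf{G}^{-1})=\infty$ and $\kappa=1$.

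\paragraph{Your primitivity fallback does not repair this.} Because the inversions interleave, $F^{m}$ is not $\mathbf{G}^{-m}$ composed with an isometry. The only bound you get is $\kappa(\mathbf{G}^{-1})^{m}=1$. Moreover, for a symmetric nonsingular M-matrix, $\mathbf{G}^{-1}$ is either strictly positive or permutation-similar to a block-diagonal matrix. So the ``primitive but not positive'' case never arises.

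\paragraph{What is provable.} Suppose $\mathbf{G}$ is an irreducible nonsingular M-matrix, or equivalently $\mathbf{G}^{-1}>0$. Then $\kappa(\mathbf{G}^{-1})<1$ and your contraction argument goes through verbatim. That condition has to be assumed; it does not follow from non-negativity of $\mathbf{A}\mathbf{A}^{\top}$ plus a Rayleigh channel.
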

\begin{proof}
	See Appendix \ref{App. Convergation of proposed method proof}.
\end{proof}
Besides the convergence, Theorem \ref{The. unilateral matrix balancing problem equivalence} and Theorem \ref{The. Convergation of proposed method} reveal some features of Eq. (\ref{Eq. iterative function}):
\begin{itemize}
	\item[i)] When $\mathbf{G}$ is a $M$ matrix, its inverse $\mathbf{G}^{-1}$ is a non-negative matrix, ensuring that the $\mathbf{d}^{(t)}$ in $t$-th iteration is belong to $\varOmega$ without any modifying.
	\item[ii)] When $\mathbf{G}$ is not a $M$ matrix, according to the Perron-Frobenius Theory with Hilbert metric, the constriction factor $\gamma$ enlarges and, in the last iteration, remains $1$ rather than any values greater than $1$, which fixes the ray in two near optimal directions. This feature ensures the effectiveness of optimization.
\end{itemize}

\begin{algorithm}[t]
	\caption{Reciprocal Approximation (RA)}
	\label{Alg. Reciprocal Approximation}
	
	\textbf{Input:} 
	Channel coefficient matrix $\mathbf{H}$, SNR $\rho$, integer coefficient matrix $\mathbf{A}$, tolerance $\epsilon$ \\
	\textbf{Output:} 
	Optimal power allocation matrix $\mathbf{D}$
	
	\begin{algorithmic}[1]
		\State $[N, K] \gets \text{size}(\mathbf{H})$
		\State $\mathbf{M} \gets \left(\frac{K}{\rho}\mathbf{I} + \mathbf{H}\mathbf{H}^{\top}\right)^{-1}$ 
		\State $\mathbf{G} \gets (\mathbf{A}\mathbf{A}^{\top}) \circ \mathbf{M}$ 
		\State $\mathbf{D} \gets \text{Init}(\mathbf{G})$ \Comment{Initialize diagonal matrix}
		\State $\text{ang} \gets +\infty$		
		\While{$\text{ang} > \epsilon$} 
		\State $\mathbf{d}_{\text{old}} \gets \mathbf{D}\mathbf{1}_K$ 
		\State $\mathbf{d}_{\text{new}} \gets \Gamma(\mathbf{G}^{-1}(\mathbf{1}_K \oslash \mathbf{d}_{\text{old}}))$ 
		\State $\mathbf{D} \gets \text{diag}(\mathbf{d}_{\text{new}})$ 
		\State $\text{ang} \gets \log\left(\dfrac{\max(\mathbf{d}_{\text{new}} \oslash \mathbf{d}_{\text{old}})}{\min(\mathbf{d}_{\text{new}} \oslash \mathbf{d}_{\text{old}})}\right)$ 
		\EndWhile
	\end{algorithmic}
\end{algorithm}

The optimization under fixed $\mathbf{A}$ can be summarized by Algorithm \ref{Alg. Reciprocal Approximation}, referred to as Reciprocal Approximation (RA). In this algorithm, Eq. (\ref{Eq. iterative function}) is applied iteratively to transform the reciprocal space into $\mathcal{D}$, continuously contracting the solution toward a fixed point within $\mathcal{D}$. The process completes when the Hilbert metric between successive iterates becomes smaller than a given threshold, and the optimal power allocation matrix $\mathbf{D}$ is returned.

Regarding Eq. (\ref{Eq. iterative function}), based on Theorem \ref{The. unilateral matrix balancing problem equivalence}, there are several conditions to ensure the effectiveness of non-negative matrix constraint for matrix balancing problem: 

\noindent \textbf{i) channel hardening effect happening: } When $N\gg K$, according to random matrix theory \cite{Tao2012}, the Gram matrix $\mathbf{H}\mathbf{H}^{\top}$ can be regarded as a diagonal matrix. Since the positive definiteness proved in Lemma \ref{Lem. extreme existence and uniqueness}, the main diagonal elements of $\mathbf{M}=\left[m_{i,j}\right]_{i,j\in\{1,\cdots,K\}}$ is positive and the others are zero, which is represented by 
	\begin{equation}\label{Eq. M elements}
			m_{i,j}
		=\left\{
		\begin{aligned}
			&m_{i,j}>0,&i=j;\\
			&0,&i\neq j.
		\end{aligned}
		\right.
	\end{equation}
	Then, since $\mathbf{A}\mathbf{A}^{\top}$ is SPDM, we have
	\begin{align}\label{Eq. G elements}
		g_{i,j}
		&=\left\{
		\begin{aligned}
			&a^{'}_{i,j}\times m_{i,j},&i=j;\\
			&0,&i\neq j.
		\end{aligned}
		\right.
	\end{align}
	where $a^{'}_{i,j}$ denotes the $\{i,j\}$ elements of $\mathbf{A}\mathbf{A}^{\top}$ and $g_{i,j}$ the $\{i,j\}$ elements of $\mathbf{G}$. 
	Eq. (\ref{Eq. G elements}) reflects that $\mathbf{G}$ is a M matrix \cite{berman1994nonnegative} whose main diagonal elements is non-negative and the others non-positive.
	Combining the properties of M Matrices \cite{berman1994nonnegative} and $\mathbf{G}$'s positive definiteness proved in Lemma \ref{Lem. extreme existence and uniqueness}, we learn that $\mathbf{G}^{-1}$ is a positive transformation and conforms to the famous Perron Theorem \cite{kohlberg1982contraction}.  

We have the following theorem for the approximate solution of Eq. (\ref{Eq. fixed A optimization on direction}) when $\mathbf{M}$ is a diagonal matrix.
\begin{theorem}\label{The. approximate solution for diagonal matrix}
	When $\mathbf{M}$ is a diagonal matrix, a solution of Eq. (\ref{Eq. fixed A optimization on direction}) is given by
	\begin{align}
		\mathbf{D}_{\mathrm{opt}}=\frac{\mathbf{L}^{\frac{1}{2}}}{\det(\mathbf{L})^{\frac{1}{2K}}},
	\end{align}
	where
	\begin{equation}
		\mathbf{L}=\mathrm{diag}\left(\left[\left(\mathbf{A}\mathbf{A}^{\top}\right)\circ\mathbf{M}\right]^{-1}\cdot\mathbf{1}_K\right).
	\end{equation}
\end{theorem}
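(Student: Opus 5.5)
The plan is to verify directly that the proposed $\mathbf{D}_{\mathrm{opt}}$ satisfies both equations of Eq. (\ref{Eq. fixed A optimization on direction}). The key observation is that a diagonal $\mathbf{M}$ forces $\mathbf{G}$ to be diagonal, so the system decouples coordinate by coordinate.

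\emph{Step 1: show $\mathbf{G}$ is diagonal.} Since the Hadamard product with a diagonal matrix kills all off-diagonal entries, $\mathbf{G}=(\mathbf{A}\mathbf{A}^{\top})\circ\mathbf{M}=\mathrm{diag}(g_{1,1},\cdots,g_{K,K})$. Here $g_{i,i}=\|\mathbf{a}_i^{\top}\|^2 m_{i,i}$, with $\mathbf{a}_i^{\top}$ the $i$-th row of $\mathbf{A}$.

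\emph{Step 2: check positivity.} Because $\mathbf{A}$ is full rank, no row of $\mathbf{A}$ vanishes, so $\|\mathbf{a}_i^{\top}\|^2\geq 1$. By Lemma \ref{Lem. extreme existence and uniqueness}, $\mathbf{M}$ is positive definite (in RIF, and in DIF with $N\geq K$), so $m_{i,i}>0$. Hence every $g_{i,i}>0$. Consequently $\mathbf{G}^{-1}=\mathrm{diag}(1/g_{1,1},\cdots,1/g_{K,K})$, and
\begin{equation*}
\mathbf{L}=\mathrm{diag}(\mathbf{G}^{-1}\mathbf{1}_K)=\mathrm{diag}(1/g_{1,1},\cdots,1/g_{K,K})
\end{equation*}
is a positive diagonal matrix. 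This makes $\mathbf{L}^{\frac12}$ and $\det(\mathbf{L})^{\frac{1}{2K}}$ well defined.

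\emph{Step 3: solve the decoupled proportionality condition.} The first equation, $\mathbf{G}\mathbf{d}\propto\mathbf{1}_K\oslash\mathbf{d}$, now reads $g_{i,i}d_i=c/d_i$ for a common constant $c>0$ and all $i$. Equivalently $d_i^2= c\,/g_{i,i}= c\,[\mathbf{L}]_{i,i}$. Taking positive roots, which is required since $\mathbf{d}\in\varOmega$, gives $\mathbf{d}=\sqrt{c}\,\mathbf{L}^{\frac12}\mathbf{1}_K$. So the direction of $\mathbf{d}$ is fixed, and only the scale remains to be determined.

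\emph{Step 4: fix the scale.} By Lemma \ref{Lem. Uniqueness of intersection}, the ray in this direction meets $\varOmega$ exactly once. We find that point by imposing $\prod_i d_i=1$, which gives $c^{K/2}\det(\mathbf{L})^{\frac12}=1$, i.e. $\sqrt{c}=\det(\mathbf{L})^{-\frac{1}{2K}}$. Substituting yields $\mathbf{D}_{\mathrm{opt}}=\mathbf{L}^{\frac12}/\det(\mathbf{L})^{\frac{1}{2K}}$, as claimed.

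\emph{Step 5 (optional): optimality and uniqueness.} By Lemma \ref{Lem. extreme existence and uniqueness} and Remark \ref{Rem. convex problem in fixed A}, the fixed-$\mathbf{A}$ problem is convex, so this stationary point is the minimizer. One can also see it directly: by AM--GM, $\sum_i g_{i,i}d_i^2\geq K\big(\prod_i g_{i,i}\big)^{1/K}$ under $\prod_i d_i=1$, with equality exactly when all $g_{i,i}d_i^2$ are equal, which is the condition found in Step 3.

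The only point needing care is Step 2, namely that every diagonal entry of $\mathbf{G}$ is strictly positive. This is what makes $\mathbf{G}^{-1}$ and $\mathbf{L}^{\frac12}$ meaningful, and it rests on the full rank of $\mathbf{A}$ together with the positive definiteness of $\mathbf{M}$. The remaining steps are routine algebra.
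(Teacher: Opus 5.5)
Your proposal is correct and follows the paper's proof. A diagonal $\mathbf{M}$ makes $\mathbf{G}$ diagonal, the condition $\mathbf{G}\mathbf{d}\propto\mathbf{1}_K\oslash\mathbf{d}$ decouples into $g_{i,i}d_i^2=c$, and $\prod_i d_i=1$ fixes the scale; your positivity check and AM--GM optimality remark are extras the paper omits, and your normalization $\sqrt{c}=\det(\mathbf{L})^{-\frac{1}{2K}}$ is right (the paper's intermediate expression $d_i=\big(\prod_j g_{j,j}^{-\frac{1}{2K}}\big)g_{i,i}^{-\frac12}$ has the exponent sign flipped, though its final formula agrees with yours).
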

\begin{proof}
	Since $\mathbf{M}$ is a diagonal matrix, it follows Eq. (\ref{Eq. M elements}). By substituting Eq. (\ref{Eq. G elements}) into Eq. (\ref{Eq. fixed A optimization on direction}), we have 
	\begin{equation}\label{Eq. Special case equations system}
		\left\{
		\begin{aligned}
			\left[
			\begin{matrix}
				d_1g_{1,1}&	\cdots&	0\\
				\cdots&	\ddots&	\cdots\\
				0&	\cdots&	d_Kg_{K,K}
			\end{matrix}
			\right]\mathbf{1}_K&\propto
			\left[
			\begin{matrix}
				d_1^{-1}&	\cdots&	0\\
				\cdots&	\ddots&	\cdots\\
				0&	\cdots&	d_K^{-1}
			\end{matrix}
			\right]\mathbf{1}_K;\\
			\prod_{i=1}^{K}d_i=1&,~i\in\{1,\cdots,K\}.
		\end{aligned}
		\right.
	\end{equation}
	After solving Eq. (\ref{Eq. Special case equations system}), we can derive that
	\begin{align}
		&\left\{
		\begin{aligned}
			d_1&=\left(\prod_{i=1}^{K}g_{i,i}^{-\frac{1}{2K}}\right)g_{1,1}^{-\frac{1}{2}};\\
			&\cdots\\
			d_1&=\left(\prod_{i=1}^{K}g_{i,i}^{-\frac{1}{2K}}\right)g_{K,K}^{-\frac{1}{2}},
		\end{aligned}
		\right.
		\Rightarrow
		\mathbf{D}_{\mathrm{opt}}=\frac{\left(\mathrm{diag}\left(\mathbf{G}^{-1}\cdot\mathbf{1}_K\right)\right)^{\frac{1}{2}}}{\det(\mathrm{diag}\left(\mathbf{G}^{-1}\cdot\mathbf{1}_K)\right)^{\frac{1}{2K}}}.
	\end{align}
	Since $\mathbf{G}=\left(\mathbf{A}\mathbf{A}^{\top}\right)\circ\mathbf{M}$, the theorem is proved.
\end{proof}

\noindent \textbf{ii) DIF with LLL algorithm: }When $N\geq K$, DIF employs $\mathbf{M}=(\mathbf{H}\mathbf{H}^{\top})^{-1}$ which is a M matrix according to Appendix \ref{App. Convergation of proposed method proof}. Since $\mathbf{H}$ is with Rayleigh distribution, $\mathbf{M}$ is not a sparse matrix which causes period-2 oscillation in iterative. According to Lemma \ref{Lem. LLL mu in M matrix}, the transforming matrix to reduce the basis $\mathbf{M}^{\frac{1}{2}}\mathbf{D}$ in size reduction condition is non-negative. Since the swapping matrix in Lov$\acute{a}$sz condition is non-negative too, the unimodular matrix $\mathbf{A}$ outputted by LLL algorithm in this situation is non-negative, which meets to the condition that $\mathbf{A}\mathbf{A}^{\top}$ is non-negative in Theorem \ref{The. Convergation of proposed method}.
\begin{lemma}\label{Lem. LLL mu in M matrix}
	In LLL algorithm, the coefficients $\mu_{i,j}$ of size reduction is always non-positive in iteration when the inputted basis $\mathbf{B}=\left[\mathbf{b}_1,\cdots,\mathbf{b}_K\right]$ satisfies that $\mathbf{B}^{\top}\mathbf{B}$ is a dense M matrix. 
\end{lemma}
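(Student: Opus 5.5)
The approach is to convert the sign condition on the Gram matrix into a sign condition on the upper–triangular factor $\mathbf{R}$ used by Alg.~\ref{Alg. LLL}, and then show by induction over the LLL steps that this sign pattern survives every operation that actually influences the next rounding. Since $\mu_{k,i}=\lfloor r_{k,i}/r_{k,k}\rceil$ with $r_{k,k}>0$, it suffices to show that, each time a coefficient is computed, $r_{k,i}\le 0$. Rounding a non-positive real then gives a non-positive integer.

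\textbf{Step 1 (initial basis).} Write $\mathbf{B}^{\top}\mathbf{B}=\mathbf{R}^{\top}\mathbf{R}$, which is the Cholesky factorization. I will invoke the standard fact from \cite{berman1994nonnegative} that the triangular factors of a nonsingular M matrix are again M matrices. Concretely, I would prove this by induction on $K$ through the Schur complement. The first row is $r_{1,j}=(\mathbf{b}_1^{\top}\mathbf{b}_j)/\|\mathbf{b}_1\|\le 0$. The Schur complement $\mathbf{S}=\mathbf{G}_{22}-\mathbf{g}\mathbf{g}^{\top}/g_{11}$ keeps non-positive off-diagonal entries, because $\mathbf{g}\mathbf{g}^{\top}$ has non-negative entries and is subtracted. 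It is also positive definite, so it is again an M matrix. Density makes the inequalities strict, which rules out degenerate zero pivots and also rules out a split into independent blocks. Hence all off-diagonal entries of the initial $\mathbf{R}$ are non-positive.

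\textbf{Step 2 (size reduction within column $i$).} The loop runs over $k=i-1,\dots,1$. Updating $\mathbf{r}_i\gets\mathbf{r}_i-\mu_{k,i}\mathbf{r}_k$ with $\mu_{k,i}\le 0$ changes the entries $r_{j,i}$ with $j<k$ by $|\mu_{k,i}|\,r_{j,k}$. This increment is $\le 0$ because column $k$ has non-positive off-diagonal entries by the invariant. Entries with $j<k$ therefore remain non-positive, and they are exactly the entries used later in the same loop. This shows that every coefficient computed in one pass is non-positive. The entry $r_{k,i}$ itself may become positive, but only with $|r_{k,i}|\le r_{k,k}/2$, and it is never rounded again unless a swap reintroduces it.

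\textbf{Step 3 (swaps, the main obstacle).} Permuting the columns of $\mathbf{B}$ preserves the Z-matrix, hence M matrix, pattern of $\mathbf{B}^{\top}\mathbf{B}$. The difficulty is that after size reduction the current basis is $\mathbf{B}\mathbf{U}$, where $\mathbf{U}$ is unimodular with non-negative entries. Its Gram matrix need no longer be a Z-matrix: an entry such as $\mathbf{b}_k^{\top}\mathbf{b}_i+|\mu|\|\mathbf{b}_k\|^2$ can turn positive. My first attempt will be to carry a weaker invariant than ``Gram is an M matrix'', namely the following. Every entry $r_{j,i}$ that will be rounded again is either non-positive or has $|r_{j,i}/r_{j,j}|\le\frac12$, where the latter rounds to $0$ (a zero coefficient is allowed by the claim).

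I will then check this invariant through the $2\times2$ Givens update of Alg.~\ref{Alg. LLL}. The new row entries are $\alpha r_{i-1,\cdot}+\beta r_{i,\cdot}$ and $-\beta r_{i-1,\cdot}+\alpha r_{i,\cdot}$ with $\alpha,\beta$ as defined there. The key issue is controlling the sign of $\alpha$, since $\alpha$ inherits the sign of $r_{i-1,i}$, and bounding the mixed term. If this case analysis fails in general, I fall back on the density assumption. I would argue that the Lov\'asz swap condition $\delta r_{i-1,i-1}^2>r_{i,i}^2+r_{i-1,i}^2$ together with strict negativity forces the swapped pair to be sign-consistent. This step is where I expect the real work, and possibly the need for an extra hypothesis, to lie.

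Combining Steps 1–3 by induction over the iterations of Alg.~\ref{Alg. LLL} gives $\mu_{i,j}\le 0$ throughout, which is the statement of Lemma~\ref{Lem. LLL mu in M matrix}.
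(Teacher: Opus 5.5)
Your Step~1 is correct, and it is the entire content of the paper's proof. The paper shows by induction that $\mathbf{b}_i^{\top}\mathbf{b}_j^{*}=\mathbf{b}_i^{\top}\mathbf{b}_j-\sum_{p<j}\mu_{j,p}\,\mathbf{b}_i^{\top}\mathbf{b}_p^{*}\le\mathbf{b}_i^{\top}\mathbf{b}_j\le0$, i.e.\ that the Gram--Schmidt coefficients of the \emph{input} basis are non-positive. This is the same statement as yours that the Cholesky factor of this M matrix has non-positive off-diagonal entries. Your Schur-complement route and the paper's direct recursion are equivalent, and neither needs density. The paper never follows the coefficients through size reductions or swaps.

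The genuine gap is in your Step~2, and it cannot be repaired. When column $i>k$ is reduced against column $k$, you use that column $k$ has non-positive off-diagonal entries. But column $k$ was itself size-reduced earlier, and, as you note yourself, that can make $r_{j,k}>0$. You only observe that such an entry is not rounded again. It is, however, added with weight $|\mu_{k,i}|$ into $r_{j,i}$, which is rounded next. This already produces a positive coefficient in the first sweep, with no swap at all. Take $K=3$ and a basis whose R factor and Gram matrix are
\[
\mathbf{R}=\begin{bmatrix}1&-0.7&-0.1\\0&1&-3\\0&0&1\end{bmatrix},\qquad
\mathbf{R}^{\top}\mathbf{R}=\begin{bmatrix}1&-0.7&-0.1\\-0.7&1.49&-2.93\\-0.1&-2.93&10.01\end{bmatrix}.
\]
The Gram matrix is a dense positive definite Z-matrix, hence a nonsingular M matrix. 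The LLL run proceeds as follows:
\begin{itemize}
\item $\mu_{1,2}=\lfloor-0.7\rceil=-1$, so $\mathbf{r}_2$ becomes $[0.3,1,0]^{\top}$.
\item There is no swap, since $\delta r_{1,1}^2\le1<1.09=r_{2,2}^2+r_{1,2}^2$.
\item $\mu_{2,3}=\lfloor-3\rceil=-3$ turns $\mathbf{r}_3$ into $[0.8,0,1]^{\top}$.
\item $\mu_{1,3}=\lfloor0.8\rceil=1>0$.
\end{itemize}

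So the obstacle is not the swaps you flagged in Step~3 but the sequential size reduction itself. No case analysis of the Givens update and no extra use of density can prove the statement in the ``every iteration'' reading you adopted, because in that reading it is false. What your argument and the paper's actually establish is the restricted claim: the Gram--Schmidt coefficients $r_{j,i}/r_{j,j}$ of the input basis, and hence their roundings, are non-positive. You should stop after Step~1 and state the lemma in that form.
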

\begin{proof}
	We proof this lemma by induction. Since $\mathbf{B}\mathbf{B}^{\top}$ is a dense M matrix, its non-major diagonal elements are non-positive, which means
	\begin{equation}
		\mathbf{b}_i^{\top}\mathbf{b}_j\leq0,~~i\neq j.
	\end{equation} 
	After Gram-Schmidt Orthogonalization step in LLL algorithm, we obtain the orthogonal vector $\mathbf{b}^{*}_1,\cdots,\mathbf{b}^{*}_K$, and $\mathbf{b}^{*}_1=\mathbf{b}_1$. When $i=2$, $\mu_{2,1}$ is given by
	\begin{equation}
		\mu_{2,1}=\frac{\mathbf{b}_2^{\top}\mathbf{b}^{*}_1}{(\mathbf{b}^{*}_1)^{\top}\mathbf{b}^{*}_1}
		=\frac{\mathbf{b}_2^{\top}\mathbf{b}_1}{\mathbf{b}_1^{\top}\mathbf{b}_1}\leq 0.
	\end{equation}
	By assuming $\forall j<p<i,~\mu_{p,j}\leq 0$, for $i>j$, we have
	\begin{equation}
		\mathbf{b}_i^{\top}\mathbf{b}_j^{*}=\mathbf{b}_i^{\top}\mathbf{b}_j-\sum_{p=1}^{j-1}\mu_{p,j}\left(\mathbf{b}_i^{\top}\mathbf{b}_p^{*}\right).
	\end{equation}
	Since $\mu_{p,j}\leq 0$ and $\mathbf{b}_i^{\top}\mathbf{b}_p^{*}\leq0$, $\sum_{p=1}^{j-1}\mu_{p,j}\left(\mathbf{b}_i^{\top}\mathbf{b}_p^{*}\right)\geq0$, so $\mathbf{b}_i^{\top}\mathbf{b}_j^{*}\leq\mathbf{b}_i^{\top}\mathbf{b}_j\leq0$. Thus, we have
	\begin{equation}
		\mu_{i,j}=\frac{\mathbf{b}_i^{\top}\mathbf{b}^{*}_j}{(\mathbf{b}^{*}_j)^{\top}\mathbf{b}^{*}_j}\leq0.
	\end{equation}
	The lemma is proved.
\end{proof}

\begin{figure*}[t!]
	\centering
	\subfigure[SNR=$0$dB]{\includegraphics[width=.4\textwidth]{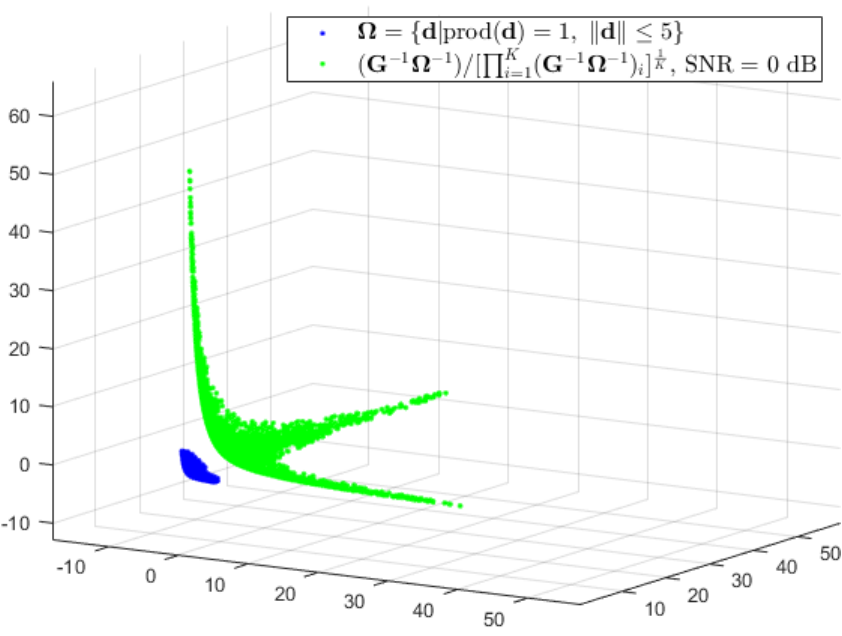}}\quad
	\subfigure[SNR=$5$dB]{\includegraphics[width=.4\textwidth]{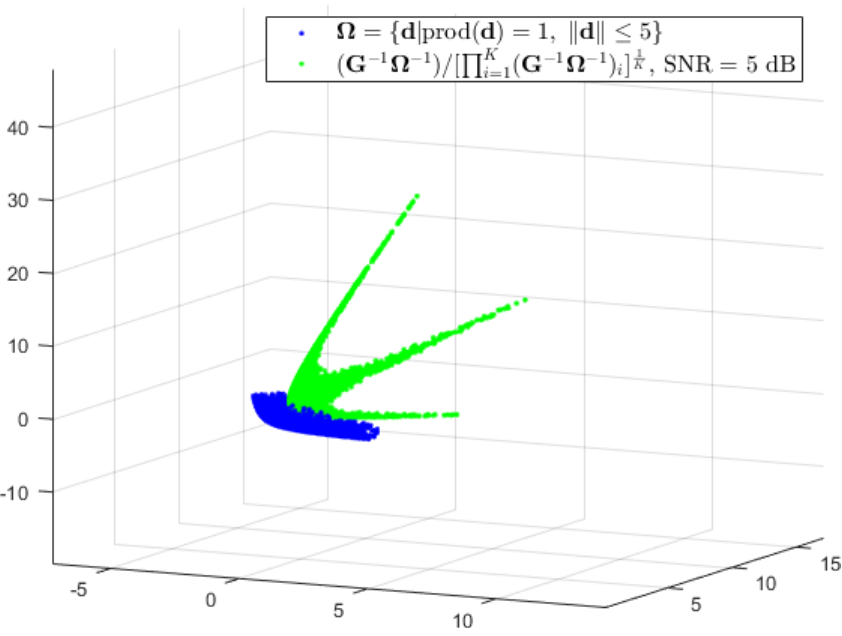}}
	\subfigure[SNR=$10$dB]{\includegraphics[width=.4\textwidth]{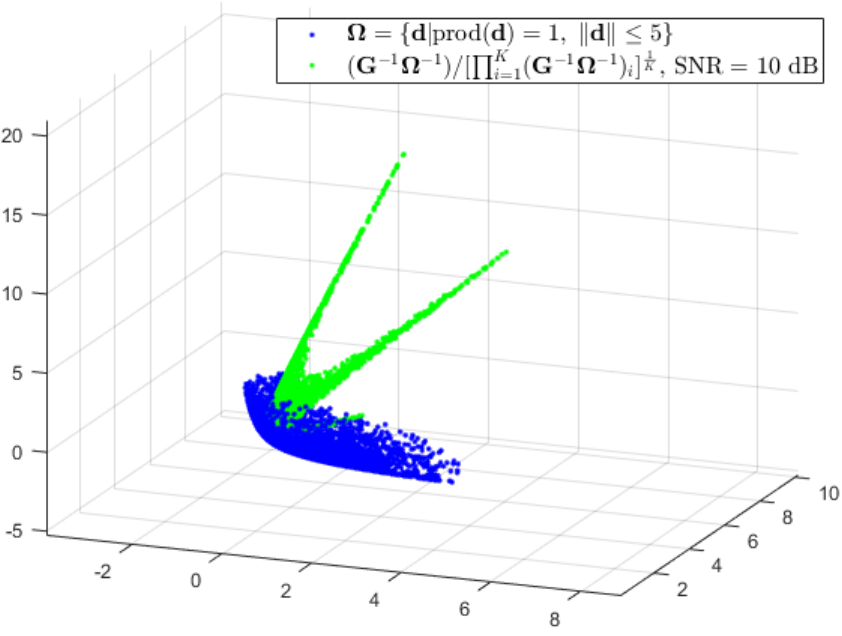}}\quad
	\subfigure[SNR=$15$dB]{\includegraphics[width=.4\textwidth]{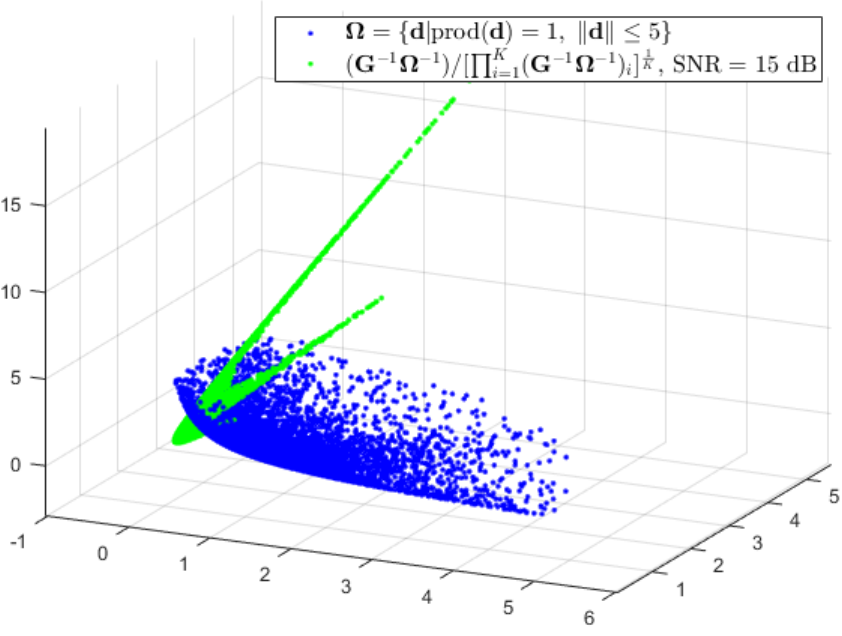}}
	\caption{$3$-dimensional sample for the convergence of Eq. (\ref{Eq. iterative function}) versus SNR}
	\label{Fig. SNR2converage}
\end{figure*}

\noindent \textbf{iii) RIF with LLL algorithm in high SNR: }When SNR is higher enough, $\frac{K}{\rho}\mathbf{I}$ is too small to be ignored, which reveals
\begin{equation}
	\mathbf{M}=\left(\frac{K}{\rho}\mathbf{I}+\mathbf{H}\mathbf{H}^{\top}\right)^{-1}
	\overset{\rho\rightarrow+\infty}{\rightarrow}\left(\mathbf{H}\mathbf{H}^{\top}\right)^{-1}.
\end{equation}
As the same reason in case ii, we learn that the unimodular matrix $\mathbf{A}$ outputted by LLL algorithm in this situation is non-negative, which meets to the condition in Theorem \ref{The. Convergation of proposed method}. To explore how SNR affects the convergence, we present a $3$-dimensional sample in
\begin{equation}
	\mathbf{H}=\left[
	\begin{matrix}
		0.5472&	0.1643&	0.4058\\
		1.2850&	0.6481&	1.1394\\
		0.5685&	1.1173&	0.7982
	\end{matrix}
	\right].\notag
\end{equation} 
The observations on Fig. \ref{Fig. SNR2converage} reveal the fact that the point distribution affected by Eq. (\ref{Eq. iterative function}) shrinks during the SNR enlarging, and Eq. (\ref{Eq. iterative function}) mostly converge in direction when SNR$\geq5$dB.
\subsection{Alternating Optimization for $(\mathbf{A},\mathbf{D})$}

Having addressed the optimization of $\mathbf{D}$ with a fixed $\mathbf{A}$, we now consider the joint optimization of $(\mathbf{A}, \mathbf{D})$. While an optimal $\mathbf{d}{\text{opt}}$ can be found for a given $\mathbf{A}$, it is important to note that not all such $\mathbf{d}_{\text{opt}}$ lies in the dependent set $\mathcal{P}(\mathbf{A})$. A standard approach to find a $\mathbf{d}_{\text{opt}} \in \mathcal{P}(\mathbf{A})$ is to compute the integer matrix $\mathbf{A}'$ associated with $\mathbf{d}{\text{opt}}$ via an SIVP solver and check if $\mathbf{A}'$ matches the original $\mathbf{A}$. A key challenge in this joint optimization arises when the current $\mathbf{d}_{\text{opt}} \notin \mathcal{P}(\mathbf{A})$: determining the next starting point for the search.

\begin{algorithm}[t]
	\caption{Alternating Optimization (AO)}
	\label{Alg. Alternating Optimization}
	\hspace{0.02in} {\bf Input:}
	Initial power allocation matrix $\mathbf{D}_{\mathrm{init}}$, channel matrix $\mathbf{H}$, SNR $\rho$. \\
	\hspace{0.02in} {\bf Output:}
	Sub-optimal power allocation matrix $\mathbf{D}$, Sub-optimal integer matrix $\mathbf{A}$.
	\begin{algorithmic}[1]
		\State $\mathbf{A}{\mathrm{tmp}} \gets \mathbf{I}$
		\State $\mathbf{M} \gets \left(\frac{K}{\rho}\mathbf{I} + \mathbf{H}\mathbf{H}^{\top}\right)^{-1}$
		\State Obtain $\mathbf{A}$ from $\mathbf{D}{\mathrm{init}}$ via Eq. (\ref{Eq. d2A mapping})
		\State $\text{iter} \gets 0$
		\While{$\mathbf{A}{\mathrm{tmp}} == \mathbf{A}$}
		\State $\mathbf{D} \gets \mathrm{RA}(\mathbf{H}, \mathbf{A}, \rho)$ \Comment{Solve for $\mathbf{D}$ with fixed $\mathbf{A}$}
		\State $\mathbf{A}{\mathrm{tmp}} \gets \mathbf{A}$
		\State Obtain $\mathbf{A}$ from $\mathbf{D}$ via Eq. (\ref{Eq. d2A mapping})
		\State $\text{iter} \gets \text{iter} + 1$
		\If{$\text{iter} > \text{max\_iter}$}
		\State \textbf{break}
		\EndIf
		\EndWhile
	\end{algorithmic}
\end{algorithm}


We tackle this challenge using an alternating optimization (AO) framework. The algorithm is designed with a greedy, proximity-based strategy: the integer matrix $\mathbf{A}$ obtained in one iteration directly initializes the next. Each AO iteration executes two consecutive steps: (S1) Given $\mathbf{D}$, update $\mathbf{A}$ optimally via the SIVP solver in Eq. (\ref{Eq. d2A mapping}); (S2) Given this $\mathbf{A}$, update $\mathbf{D}$ optimally via Alg. \ref{Alg. Reciprocal Approximation}. This alternating process converges to a stationary point $(\mathbf{A}, \mathbf{D})$. Furthermore, because each step is optimal conditional on the other variable, the achievable rate $R$ satisfies the following non-decreasing property at iteration $t$:
\begin{align}
	R\left(\mathbf{A}^{(t)}, \mathbf{D}^{(t)}\right)
	&\leq R\left(\mathbf{A}^{(t)}, \mathbf{D}{\mathrm{opt}}^{(t)}\right) 
	\leq R\left(\mathbf{A}{\mathrm{opt}}^{(t)}, \mathbf{D}_{\mathrm{opt}}^{(t)}\right) 
	= R\left(\mathbf{A}^{(t+1)}, \mathbf{D}^{(t+1)}\right).
\end{align}
The iteration terminates when
\begin{equation}\label{Eq. terminate of AO}
	\mathbf{A}^{(t)} = \left(\mathbf{M}^{\frac{1}{2}}\mathbf{D}^{(t)}\right)^{-1} \Psi\left(\mathbf{M}^{\frac{1}{2}}\mathbf{D}^{(t)}\right).
\end{equation}
The overall procedure is summarized in Alg. \ref{Alg. Alternating Optimization}.

\subsection{Multi-Cone Nested Stochastic Pattern Search (MCN-SPS)}
\begin{figure*}[t!]
	\centering
	\includegraphics[width=.9\textwidth]{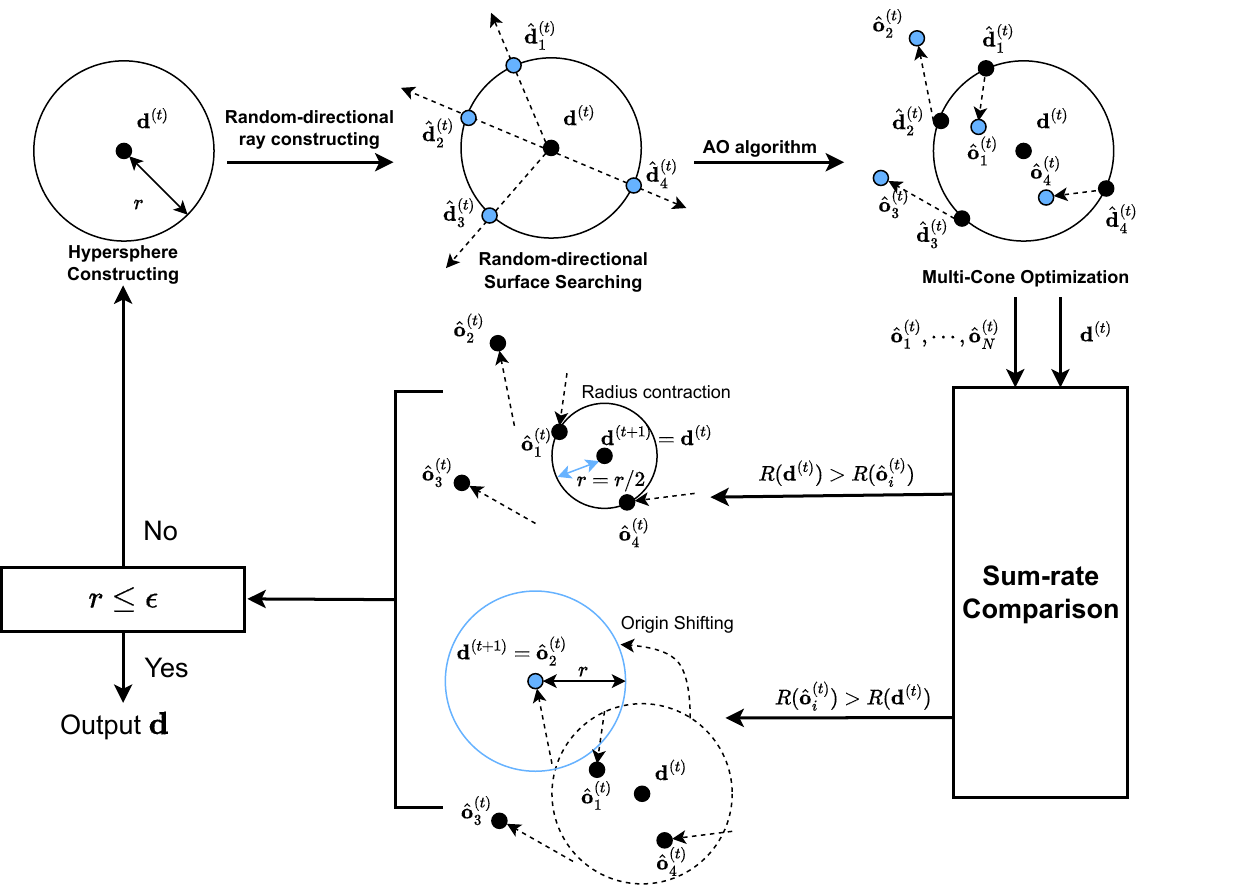}
	\caption{Flowchart of the MCN-SPS method}
	\label{Fig. MCNSPS Flowchart}
\end{figure*}

If Alg. \ref{Alg. Alternating Optimization} consistently converged to the unique optimal point $\mathbf{d}_{\text{opt}} \in \mathcal{P}(\mathbf{A})$ within $\varOmega$, then the optimal pair $(\mathbf{A},\mathbf{D})$ could be readily obtained. However, as indicated by Property \ref{Pro. solution set}, this condition is not always met, since multiple disjoint regions may contain such optimal points. To ensure convergence to a globally optimal pair $(\mathbf{A},\mathbf{D})$ among these candidate solutions, we propose a method termed the Multi-Cone Nested Stochastic Pattern Search (MCN-SPS). Fig. \ref{Fig. MCNSPS Flowchart} illustrates the flowchart of the proposed method, which consists of three core processes: search preparation, search and optimization execution, and result integration and adjustment. These three processes are described in detail below.

\noindent \textbf{i) Search preparation: }By centering at a $\mathbf{d}^{(t)}\in\varOmega$ in the $t$-th iteration, a hypersphere $\mathcal{B}(\mathbf{d}^{(t)},r)$ is constructed with a setting radius $r$. Starting from $\mathbf{d}^{(t)}$, $q$ rays are emitted in stochastic direction, and cross the surface of $\mathcal{B}(\mathbf{d}^{(t)},r)$ at $\hat{\mathbf{d}}^{(t)}_1,\cdots,\hat{\mathbf{d}}^{(t)}_Q$. The $i$-th intersection $\hat{\mathbf{d}}^{(t)}_i$ can be calculated by
\begin{equation}
	\hat{\mathbf{d}}^{(t)}_i=\mathbf{d}^{(t)}+r\hat{\mathbf{v}}_i,
\end{equation}
where $\hat{\mathbf{v}}_i$ represents the unit vector in the direction of $i$-th ray. After the movement of $r\hat{\mathbf{v}}_i$, some $\hat{\mathbf{d}}^{(t)}_i$ may leave the  holopositive region of $\mathcal{D}$. To limit in $d_i>0, i\in\{1,\cdots,K\}$, the escaped $\hat{\mathbf{d}}^{(t)}_i$ will be abandoned, and we choose the novel stochastic rays in loop until all vector are limited in the  holopositive region of $\mathcal{D}$.

\noindent \textbf{ii) Search and Optimization execution: }After the search space constructing, we correct the intersections $\hat{\mathbf{d}}^{(t)}_1,\cdots,\hat{\mathbf{d}}^{(t)}_Q$ to satisfy the constraint by $\Gamma(\cdot)$. Then, these corrected nodes are inputted into Alg. \ref{Alg. Alternating Optimization} in parallel to obtain their nearby sub-optimal nodes in $\varOmega$ which satisfies Eq. (\ref{Eq. terminate of AO}). These nodes, represented as $\hat{\mathbf{o}}^{(t)}_i, i\in{1,\cdots,Q}$, are the nearby or coincide nodes of $\mathbf{d}^{(t)}$ satisfied $\mathbf{d}_{\text{opt}} \in \mathcal{P}(\mathbf{A})$. To reduce complexity for comparison, we merge the coincide nodes to one node.

\noindent \textbf{iii) Result integration and adjustment: }In this operation, we compare the sum rate between $\mathbf{d}^{(t)}$ and its nearby $\hat{\mathbf{o}}^{(t)}_i, i\in{1,\cdots,Q}$, and adjust the new center in next iteration consisted of two cases:
\begin{itemize}
	\item[i)] If any $\hat{\mathbf{o}}^{(t)}_i, i\in{1,\cdots,Q}$ has the max sum rate, the optimal node is probably located at the vicinity of this nodes. Thus it become the origin of next iteration without other adjustment.
	\item[ii)] If $\mathbf{d}^{(t)}$ has the max sum rate compared to $\hat{\mathbf{o}}^{(t)}_i, i\in{1,\cdots,Q}$, the optimal node seems locate at a region more closer to $\mathbf{d}^{(t)}$. In this case, the radius $r$ is too large to search the nearer region of $\mathbf{d}^{(t)}$, so we contract the radius to $r/2$ and put it into the next iteration.
\end{itemize}

\begin{algorithm}[t]
	\caption{MCN-SPS method}
	\label{Alg. MCN-SPS}
	\hspace{0.02in} {\bf Input:}
	Initial power allocation matrix $c$, channel matrix $\mathbf{H}$, SNR $\rho$, Initial step size $r_0$ \\
	\hspace{0.02in} {\bf Output:}
	Optimal power allocation matrix $\mathbf{D}$, Optimal integer matrix $\mathbf{A}$.
	\begin{algorithmic}[1]
		\State $\left[N,K\right]\gets\mathrm{size}(\mathbf{H})$
		\State $\mathbf{D}\gets\mathrm{AO}(\mathbf{H},\mathbf{D}_{\mathrm{init}},\rho)$
		\State $r\gets r_0$
		\State Obtain $\mathbf{A}$ from $\mathbf{D}$ via Eq. (\ref{Eq. d2A mapping})
		\State $\mathbf{A}_{\mathrm{tmp}}\gets\mathbf{A}$
		\While{$r>\epsilon$}
		\State $\mathbf{D} \gets \mathrm{AO}(\mathbf{H},\mathbf{D},\rho)$
		\State $\mathbf{D}_{\mathrm{tmp}}\gets\mathbf{D}$
		\For{$i=1:\text{max\_router}$} \Comment{parallel processing}
		\State $\mathbf{v}\gets\mathrm{randn}(1,K)$
		\State $\mathbf{v}\gets\frac{\mathbf{v}}{\|\mathbf{v}\|}$
		\State $\hat{\mathbf{D}}_i\gets\mathrm{diag}(\mathrm{diag}+r\mathbf{v})$
		\State $\hat{\mathbf{O}}_i\gets\mathrm{AO}(\mathbf{H},\hat{\mathbf{D}}_i,\rho)$
		\State Obtain $\hat{\mathbf{A}}_i$ from $\hat{\mathbf{O}}_i$ via Eq. (\ref{Eq. d2A mapping})
		\If{$R(\hat{\mathbf{A}}_i,\hat{\mathbf{O}}_i)>R(\mathbf{A},\mathbf{D})$} 
		\State $\mathbf{D}_{\mathrm{tmp}}\gets\hat{\mathbf{O}}_i$
		\State $\mathbf{A}_{\mathrm{tmp}}\gets\hat{\mathbf{A}}_i$
		\EndIf
		\EndFor
		\If{$\mathbf{D}_{\mathrm{tmp}}==\mathbf{D}$}
		\State $r\gets\frac{r}{2}$
		\EndIf
		\State $\mathbf{D}\gets\mathbf{D}_{\mathrm{tmp}}$
		\State $\mathbf{A}\gets\mathbf{A}_{\mathrm{tmp}}$
		\EndWhile
	\end{algorithmic}
\end{algorithm}

The iteration is completed when the radius is smaller than a predefined threshold $\epsilon$. The overall procedure is summarized in Alg. \ref{Alg. MCN-SPS}. Throughout the entire method, the comparisons are implemented in the nodes satisfied $\mathbf{d}_{\text{opt}} \in \mathcal{P}(\mathbf{A})$ which is a finite nonempty discrete set in $\mathcal{D}$ according to Property \ref{Pro. solution set}, the search space in the proposed method is smaller than the whole continuous space which is employed in the conventional heuristic algorithm. Thus we achieve a lower complexity with the proposed method, which is demonstrated in the following section.

 \subsection{Joint Optimization in Imperfect CSI}
According to \cite[Theorem 1]{silva2017integer}, DIF precoding achieves maximum spatial multiplexing gain if and only if
\begin{equation}
	\mathbf{H}\mathbf{P}=\mathbf{D}\mathbf{A}.
\end{equation}
Considering the channel estimation in Eq. (\ref{Eq. MMSE model}) and Eq. (\ref{Eq. ML model}), this principle becomes to 
\begin{equation}
	(\hat{\mathbf{H}}\pm\mathbf{E})\mathbf{P}=\mathbf{D}\mathbf{A}.
\end{equation}
We can derive the estimation-error-corresponded precoder 
\begin{equation}
	\mathbf{P}=(\hat{\mathbf{H}}\pm\mathbf{E})^{\dag}\mathbf{D}\mathbf{A},
\end{equation}
where the term '$\dag$' represents pseudo inverse. However, since $\mathbf{E}\sim\mathcal{N}(0,\sigma_{e}^2\mathbf{I})$ is random and $\sigma_{e}^2$ is known, we can estimate a practical precoder under the known $\sigma_{e}^2$ via the MMSE principle defined as
\begin{align}\label{Eq. MMSE to DIF}
	&\underset{\mathbf{P}}{\min}~\mathbb{E}\left[\|\mathbf{H}\mathbf{P}-\mathbf{D}\mathbf{A}\|_{F}^2\right]
	\Rightarrow\underset{\mathbf{P}}{\min}~\mathbb{E}\left[\mathrm{Tr}\left(\left(\mathbf{H}\mathbf{P}-\mathbf{D}\mathbf{A}\right)^{\top}\left(\mathbf{H}\mathbf{P}-\mathbf{D}\mathbf{A}\right)\right)\right].
\end{align}

Under the MMSE and ML channel estimation, we elaborate the precoder design as following.

\noindent \textbf{i) MMSE:} Since there is only one random value $\mathbf{E}\sim\mathcal{N}(0,\sigma_{e}^2\mathbf{I})$ in MMSE channel estimation, the target function under MMSE principle is
\begin{equation}
	\underset{\mathbf{P}}{\min}~\mathbb{E}_{\mathbf{E}}\left[\|\mathbf{H}\mathbf{P}-\mathbf{D}\mathbf{A}\|_{F}^2\right].
\end{equation}
Thus, we have a theorem for estimation-error-considered precoder in DIF and RIF.
\begin{theorem}\label{The. MMSE precoder}
	In the MMSE channel estimation when $N\geq K$, considering the channel estimation error $\mathbf{E}\sim\mathcal{N}(0,\sigma_{e}^2\mathbf{I})$ , the estimation-error-considered precoder $\mathbf{P}^{\mathrm{MMSE}}_{\mathrm{DIF}}$ in DIF is 
	\begin{equation}
		\mathbf{P}^{\mathrm{MMSE}}_{\mathrm{DIF}}=\frac{1}{\eta}\hat{\mathbf{H}}^{\top}\left(\hat{\mathbf{H}}\hat{\mathbf{H}}^{\top}+K\sigma_{e}^2\mathbf{I}\right)^{-1}\mathbf{D}\mathbf{A}.
	\end{equation}
	In RIF, the precoder is shifted to 
		\begin{equation}
		\mathbf{P}^{\mathrm{MMSE}}_{\mathrm{RIF}}=\frac{1}{\eta}\hat{\mathbf{H}}^{\top}\left(\hat{\mathbf{H}}\hat{\mathbf{H}}^{\top}+K\sigma_{e}^2\mathbf{I}+\frac{K}{\rho}\mathbf{I}\right)^{-1}\mathbf{D}\mathbf{A}.
	\end{equation}
\end{theorem}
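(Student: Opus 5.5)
The plan is to treat the design as the unconstrained quadratic problem in (\ref{Eq. MMSE to DIF}), with $\mathbf{H}=\hat{\mathbf{H}}+\mathbf{E}$ and $\hat{\mathbf{H}}$ independent of $\mathbf{E}$. Because the objective is convex in $\mathbf{P}$, I will set its gradient to zero and solve for $\mathbf{P}$; the scalar $\eta$ is then reinstated, exactly as in (\ref{Eq. DIF}) and (\ref{Eq. RIF}), to meet the power constraint (\ref{Eq. power constraint}).

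First I expand the objective as
\begin{equation}
\mathrm{Tr}(\mathbf{P}^{\top}\mathbf{H}^{\top}\mathbf{H}\mathbf{P})-2\mathrm{Tr}(\mathbf{A}^{\top}\mathbf{D}\mathbf{H}\mathbf{P})+\mathrm{Tr}(\mathbf{A}^{\top}\mathbf{D}^2\mathbf{A}).
\end{equation}
I then take the expectation over $\mathbf{E}$ term by term. Since $\mathbb{E}[\mathbf{E}]=\mathbf{0}$, the cross term gives $\mathbb{E}[\mathbf{H}]=\hat{\mathbf{H}}$. The quadratic term gives $\mathbb{E}[\mathbf{H}^{\top}\mathbf{H}]=\hat{\mathbf{H}}^{\top}\hat{\mathbf{H}}+\mathbb{E}[\mathbf{E}^{\top}\mathbf{E}]=\hat{\mathbf{H}}^{\top}\hat{\mathbf{H}}+K\sigma_e^2\mathbf{I}_N$, because each entry of the $N\times N$ matrix $\mathbf{E}^{\top}\mathbf{E}$ sums over the $K$ i.i.d. rows of $\mathbf{E}$.

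Setting the derivative with respect to $\mathbf{P}$ to zero yields
\begin{equation}
(\hat{\mathbf{H}}^{\top}\hat{\mathbf{H}}+K\sigma_e^2\mathbf{I}_N)\mathbf{P}=\hat{\mathbf{H}}^{\top}\mathbf{D}\mathbf{A}.
\end{equation}
The matrix on the left is positive definite because $\sigma_e^2>0$, so the solution is unique. To reach the stated form, I use the push-through identity
\begin{equation}
(\hat{\mathbf{H}}^{\top}\hat{\mathbf{H}}+c\mathbf{I}_N)^{-1}\hat{\mathbf{H}}^{\top}=\hat{\mathbf{H}}^{\top}(\hat{\mathbf{H}}\hat{\mathbf{H}}^{\top}+c\mathbf{I}_K)^{-1},
\end{equation}
which follows from $\hat{\mathbf{H}}^{\top}(\hat{\mathbf{H}}\hat{\mathbf{H}}^{\top}+c\mathbf{I})=(\hat{\mathbf{H}}^{\top}\hat{\mathbf{H}}+c\mathbf{I})\hat{\mathbf{H}}^{\top}$. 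With $c=K\sigma_e^2$ this gives $\mathbf{P}^{\mathrm{MMSE}}_{\mathrm{DIF}}$ up to the factor $1/\eta$. The hypothesis $N\ge K$ is used only to keep the DIF interpretation: when $\sigma_e\to0$, the formula must reduce to (\ref{Eq. DIF}), and that requires $\hat{\mathbf{H}}\hat{\mathbf{H}}^{\top}$ to be invertible.

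For RIF I add the regularization term $\frac{K}{\rho}\mathbb{E}\|\mathbf{P}\|_F^2$ to the same objective. This is the standard way the RZF/RIF regularizer in (\ref{Eq. RIF}) arises from an MMSE criterion with unit noise variance. The added term contributes $\frac{K}{\rho}\mathbf{I}$ to the normal equations, so applying the push-through identity with $c=K\sigma_e^2+K/\rho$ gives $\mathbf{P}^{\mathrm{MMSE}}_{\mathrm{RIF}}$.

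I expect the main obstacle to be justifying this RIF step. The statement as printed does not specify which augmented objective defines the RIF design, so I would first show that the unregularized MMSE objective with $\sigma_e=0$ does \emph{not} recover (\ref{Eq. RIF}), and that adding $\frac{K}{\rho}\|\mathbf{P}\|_F^2$ does. That motivates the augmentation, after which the error term enters additively exactly as in the DIF case. A secondary point is the scaling: $\eta$ is applied after the optimization, so it does not affect the direction of the optimizer.
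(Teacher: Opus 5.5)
Your proposal is correct and follows essentially the same route as the paper's proof: expand $\mathbb{E}_{\mathbf{E}}\|\hat{\mathbf{H}}\mathbf{P}-\mathbf{D}\mathbf{A}+\mathbf{E}\mathbf{P}\|_F^2$, drop the cross term using $\mathbb{E}[\mathbf{E}]=\mathbf{0}$, solve the normal equations, apply the push-through (matrix inversion) identity, and rescale by $1/\eta$. For RIF the paper adds a Lagrange term $\lambda(\mathrm{Tr}(\mathbf{P}^{\top}\mathbf{P})-\rho)$ and then simply sets $\lambda=K/\rho$ by appeal to duality, which amounts to your direct addition of $\frac{K}{\rho}\|\mathbf{P}\|_F^2$; your explicit computation $\mathbb{E}[\mathbf{E}^{\top}\mathbf{E}]=K\sigma_e^2\mathbf{I}_N$ is in fact the correct justification of the factor $K$, which the paper states inconsistently as $\sigma_e^2\mathbf{I}$.
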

\begin{proof}
	See Appendix \ref{App. MMSE precoder proof}.
\end{proof}

\noindent \textbf{ii) ML:} Since $\mathbf{H}\sim \mathcal{N}(0,\sigma_{h}^2)\mathbf{I}$ is independent to $\mathbf{E}\sim\mathcal{N}(\mathbf{0}_K,\sigma_{e}^2\mathbf{I})$, according to \cite{li2016new}, we derive the conditional distribution that
\begin{equation}
	\mathbf{H}|\hat{\mathbf{H}}\sim\mathcal{N}\left(\beta\hat{\mathbf{H}},\sigma_{\xi}^2\mathbf{I}\right),
\end{equation}
where
\begin{equation}\label{Eq. ML characteristic}
	\beta=\frac{\sigma_{h}^2}{\sigma_{h}^2+\sigma_{e}^2},~\sigma_{\xi}^2=\frac{\sigma_{h}^2\sigma_{e}^2}{\sigma_{h}^2+\sigma_{e}^2}.
\end{equation}
Considering $\mathbf{H}=\beta\hat{\mathbf{H}}+\boldsymbol{\Xi},~\boldsymbol{\Xi}|\hat{\mathbf{H}}\sim\mathcal{N}\left(\mathbf{0}_K,\sigma_{\xi}^2\mathbf{I}\right)$ in ML channel estimation, the target function under MMSE principle is
\begin{equation}
	\underset{\mathbf{P}}{\min}~\mathbb{E}_{\boldsymbol{\Xi}|\hat{\mathbf{H}}}\left[\|\mathbf{H}\mathbf{P}-\mathbf{D}\mathbf{A}\|_{F}^2\right].
\end{equation}
Thus, the optimized estimation-error-considered precoder for ML channel estimation is supported in the next theorem. 

\begin{theorem}\label{The. ML precoder}
	In the ML channel estimation when $N\geq K$, considering the channel estimation error $\mathbf{E}\sim\mathcal{N}(0,\sigma_{e}^2\mathbf{I})$ and practical channel $\mathbf{H}\sim\mathcal{N}(0,\sigma_{h}^2\mathbf{I})$, the estimation-error-considered precoder $\mathbf{P}^{\mathrm{MMSE}}_{\mathrm{DIF}}$ in DIF is 
	\begin{equation}
		\mathbf{P}^{\mathrm{ML}}_{\mathrm{DIF}}=\frac{1}{\eta}\hat{\mathbf{H}}^{\top}\left(\frac{\sigma_{h}^2}{\sigma_{h}^2+\sigma_{e}^2}\hat{\mathbf{H}}\hat{\mathbf{H}}^{\top}+K\sigma_{e}^2\mathbf{I}\right)^{-1}\mathbf{D}\mathbf{A}.
	\end{equation}
	In RIF, the precoder is shifted to 
	\begin{align}
		&\mathbf{P}^{\mathrm{ML}}_{\mathrm{RIF}}=
		\frac{1}{\eta}\hat{\mathbf{H}}^{\top}\left(\frac{\sigma_{h}^2}{\sigma_{h}^2+\sigma_{e}^2}\hat{\mathbf{H}}\hat{\mathbf{H}}^{\top}+K\sigma_{e}^2\mathbf{I}+\frac{K(\sigma_{h}^2+\sigma_{e}^2)}{\rho\sigma_{h}^2}\mathbf{I}\right)^{-1}\mathbf{D}\mathbf{A}.
	\end{align}
\end{theorem}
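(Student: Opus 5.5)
The plan mirrors the proof of Theorem \ref{The. MMSE precoder}, with the conditional model $\mathbf{H}=\beta\hat{\mathbf{H}}+\boldsymbol{\Xi}$ in place of $\mathbf{H}=\hat{\mathbf{H}}+\mathbf{E}$. I expand the conditional MMSE objective in closed form, obtain a strictly convex quadratic in $\mathbf{P}$, and solve its first-order condition. I then rewrite the minimizer in the stated form, absorbing scalar factors into the normalization $\eta$.

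\textbf{Step 1: expand the objective.} I substitute $\mathbf{H}=\beta\hat{\mathbf{H}}+\boldsymbol{\Xi}$ and use $\mathbb{E}_{\boldsymbol{\Xi}|\hat{\mathbf{H}}}[\boldsymbol{\Xi}]=\mathbf{0}$, so the cross term $2\,\mathrm{Tr}\big((\beta\hat{\mathbf{H}}\mathbf{P}-\mathbf{D}\mathbf{A})^{\top}\boldsymbol{\Xi}\mathbf{P}\big)$ vanishes in expectation. This gives
\begin{equation}
\mathbb{E}_{\boldsymbol{\Xi}|\hat{\mathbf{H}}}\left[\|\mathbf{H}\mathbf{P}-\mathbf{D}\mathbf{A}\|_F^2\right]=\|\beta\hat{\mathbf{H}}\mathbf{P}-\mathbf{D}\mathbf{A}\|_F^2+\mathrm{Tr}\left(\mathbf{P}^{\top}\mathbb{E}[\boldsymbol{\Xi}^{\top}\boldsymbol{\Xi}]\mathbf{P}\right).\notag
\end{equation}
Since $\boldsymbol{\Xi}\in\mathbb{R}^{K\times N}$ has i.i.d.\ entries of variance $\sigma_{\xi}^2$, we have $\mathbb{E}[\boldsymbol{\Xi}^{\top}\boldsymbol{\Xi}]=K\sigma_{\xi}^2\mathbf{I}_N$. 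The error term is therefore $K\sigma_{\xi}^2\,\mathrm{Tr}(\mathbf{P}^{\top}\mathbf{P})$.

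\textbf{Step 2: solve the DIF case.} Setting the gradient in $\mathbf{P}$ to zero gives
\begin{equation}
\left(\beta^2\hat{\mathbf{H}}^{\top}\hat{\mathbf{H}}+K\sigma_{\xi}^2\mathbf{I}\right)\mathbf{P}=\beta\hat{\mathbf{H}}^{\top}\mathbf{D}\mathbf{A}.\notag
\end{equation}
By the push-through identity $(\beta^2\hat{\mathbf{H}}^{\top}\hat{\mathbf{H}}+c\mathbf{I})^{-1}\hat{\mathbf{H}}^{\top}=\hat{\mathbf{H}}^{\top}(\beta^2\hat{\mathbf{H}}\hat{\mathbf{H}}^{\top}+c\mathbf{I})^{-1}$, the minimizer is
\begin{equation}
\mathbf{P}=\beta\hat{\mathbf{H}}^{\top}\left(\beta^2\hat{\mathbf{H}}\hat{\mathbf{H}}^{\top}+K\sigma_{\xi}^2\mathbf{I}\right)^{-1}\mathbf{D}\mathbf{A}.\notag
\end{equation}
The key algebraic observation is $\sigma_{\xi}^2/\beta=\sigma_e^2$, which follows from Eq. (\ref{Eq. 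ML characteristic}). Hence
\begin{equation}
\beta^2\hat{\mathbf{H}}\hat{\mathbf{H}}^{\top}+K\sigma_{\xi}^2\mathbf{I}=\beta\left(\beta\hat{\mathbf{H}}\hat{\mathbf{H}}^{\top}+K\sigma_e^2\mathbf{I}\right).\notag
\end{equation}
The factors of $\beta$ then cancel, and the remaining positive scalar is absorbed into $\eta$, which is fixed by the power constraint Eq. (\ref{Eq. power constraint}). This yields $\mathbf{P}^{\mathrm{ML}}_{\mathrm{DIF}}$. The condition $N\geq K$ together with $\sigma_e^2>0$ guarantees that the inverse exists and that the quadratic is strictly convex, so the stationary point is the unique minimizer.

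\textbf{Step 3: the RIF case.} I follow the convention that produced Eq. (\ref{Eq. RIF}) and, presumably, the RIF part of Theorem \ref{The. MMSE precoder}. That is, I add the power-regularization term $\frac{K}{\rho}\mathrm{Tr}(\mathbf{P}^{\top}\mathbf{P})$, coming from the Lagrangian of the power constraint together with the unit-variance receiver noise, to the same objective. The normal equation then carries $(K\sigma_{\xi}^2+K/\rho)\mathbf{I}$. Factoring out $\beta$ as before turns $K/\rho$ into $\frac{K}{\rho\beta}=\frac{K(\sigma_h^2+\sigma_e^2)}{\rho\sigma_h^2}$, which is exactly the extra term in $\mathbf{P}^{\mathrm{ML}}_{\mathrm{RIF}}$.

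\textbf{Main obstacle.} The expected sticking point is justifying this regularization weight, namely why $K/\rho$ appears unscaled before division by $\beta$. I would settle it by citing the identical step in the appendix proof of Theorem \ref{The. MMSE precoder}, which is the $\beta=1$ instance of the same argument. I would also check the sanity limits: $\sigma_e^2\to0$ gives $\beta\to1$ and recovers Eq. (\ref{Eq. DIF}) and Eq. (\ref{Eq. RIF}).
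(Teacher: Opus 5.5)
Your proposal is correct and follows the paper's proof essentially step for step. Like the paper, you expand the conditional objective to $\mathrm{Tr}(\mathbf{F}^{\top}\mathbf{F})+K\sigma_{\xi}^2\mathrm{Tr}(\mathbf{P}^{\top}\mathbf{P})$ with $\mathbf{F}=\beta\hat{\mathbf{H}}\mathbf{P}-\mathbf{D}\mathbf{A}$, solve the normal equation, apply the push-through (matrix inversion) identity, and use $\sigma_{\xi}^2/\beta=\sigma_e^2$; the $\beta$ factors cancel exactly, so no leftover scalar needs absorbing into $\eta$. For RIF the paper adds a generic multiplier term $\lambda(\mathrm{Tr}(\mathbf{P}^{\top}\mathbf{P})-\rho)$ and then sets $\lambda=K/\rho$ by appeal to uplink--downlink duality and equal power allocation, which is the same weight you insert directly, so the obstacle you flag is handled in the paper by that same assumption rather than by a derivation.
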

\begin{proof}
	See Appendix \ref{App. ML precoder proof}.
\end{proof}

\begin{remark}
	Inspired by \cite{silva2017integer}, we can easily adapt Eq. (\ref{Eq. optimization problem}) and Eq. (\ref{Eq. A optimization}) by regarding $\mathbf{P}$ as $\mathbf{P}=\frac{1}{\eta}\hat{\mathbf{H}}^{\top}\mathbf{M}\mathbf{D}\mathbf{A}$ when channel estimation error existing. Thus, to adapt the impacts on imperfect CSI, the inputs $\mathbf{M}$ of MCN-SPS method should be shifted according to Theorem \ref{The. MMSE precoder} and Theorem \ref{The. ML precoder}, which are listed at Tab. \ref{Tab. inputs M of MCN-SPS with ICSI}.
	
%
%
%
\end{remark}

\begin{table*}[t]
	\centering
	\renewcommand\arraystretch{1.5}
	\caption{Inputs $\mathbf{M}$ of MCN-SPS method with Estimation Error}
	\label{Tab. inputs M of MCN-SPS with ICSI}
	\begin{tabular}{ccc}
		\hline
		Method               & MMSE Estimation& ML Estimation  \\
		\hline
		DIF                     &  $\mathbf{M}^{\mathrm{MMSE}}_{\mathrm{DIF}}=\left(\hat{\mathbf{H}}\hat{\mathbf{H}}^{\top}+K\sigma_{e}^2\mathbf{I}\right)^{-1}$      &        $\mathbf{M}^{\mathrm{ML}}_{\mathrm{DIF}}=\left(\frac{\sigma_{h}^2}{\sigma_{h}^2+\sigma_{e}^2}\hat{\mathbf{H}}\hat{\mathbf{H}}^{\top}+K\sigma_{e}^2\mathbf{I}\right)^{-1}$\\
		
		RIF &   $\mathbf{M}^{\mathrm{MMSE}}_{\mathrm{RIF}}=\left(\hat{\mathbf{H}}\hat{\mathbf{H}}^{\top}+K\sigma_{e}^2\mathbf{I}+\frac{K}{\rho}\mathbf{I}\right)^{-1}$     &  $\mathbf{M}^{\mathrm{ML}}_{\mathrm{RIF}}=\left(\frac{\sigma_{h}^2}{\sigma_{h}^2+\sigma_{e}^2}\hat{\mathbf{H}}\hat{\mathbf{H}}^{\top}+K\sigma_{e}^2\mathbf{I}+\frac{K(\sigma_{h}^2+\sigma_{e}^2)}{\rho\sigma_{h}^2}\mathbf{I}\right)^{-1}$  \\

		\hline
	\end{tabular}
\end{table*}

\section{Theoretical Analysis}
After introducing the whole method, in this section, we analyze this method theoretically in its complexity and accuracy. 
\subsection{Computational Complexity Analysis}

In our proposed method, the overall procedure of Alg. \ref{Alg. MCN-SPS} incorporates an SIVP solver denoted as $\Psi(\cdot)$, whose time complexity is represented as $\mathcal{O}(T_{\Psi})$. To analyze the time complexity of the MCN-SPS method, we note that it involves Alg. \ref{Alg. Alternating Optimization}, which itself calls Alg. \ref{Alg. Reciprocal Approximation} in a nested manner. Therefore, we evaluate the total time complexity in a layer-by-layer manner. In this section, we consider the inverse calculation for $\mathbf{M}$ obtaining with complexity $\mathcal{O}(K^3)$ in average, and the $\mathbf{A}$ obtaining with $\mathcal{O}(T_{\Psi})$.

\noindent \textbf{i) MCN-SPS method (Alg. \ref{Alg. MCN-SPS}): }The whole steps of MCN-SPS method consist of the initialization and iterative steps, which is given by
\begin{equation}
	T_{\mathrm{MCN-SPS}}=T_{\mathrm{init},\ref{Alg. MCN-SPS}}+T_{\mathrm{iter},\ref{Alg. MCN-SPS}}.
\end{equation}
In initialization step, the each assignment is with complexity $\mathcal{O}(1)$, except the initial $\mathbf{A}$ obtaining and the initial $\mathbf{D}$ obtaining with $\mathcal{O}(T_{\mathrm{AO}})$. The whole complexity in initialization step is given by 
\begin{equation}
	T_{\mathrm{init},\ref{Alg. MCN-SPS}}=\mathcal{O}(T_{\Psi})+\mathcal{O}(T_{\mathrm{AO}}).
\end{equation}
Regarding the iterative steps, since the radius is reduced doubly, the iterative number of the shrinkage step is approach to $\log_2(r_0)$. And we denote the iterative number of origin shifting as $\mathcal{O}(T_{\mathrm{os}})$ which can be regarded as a constant. Inside the iterations, since the operations of each random-directional ray execute in parallel, we ignore the iterative number of the parallel processing. To Eq. (\ref{Eq. high SNR sum rate}), it consists of a inverse calculation for $\mathbf{M}$ obtaining. Thus, the whole complexity in iterative steps is given by
\begin{equation}
	T_{\mathrm{iter},\ref{Alg. MCN-SPS}}=\left(\mathcal{O}(T_{\mathrm{AO}})+\mathcal{O}(T_{\Psi})+\mathcal{O}(K^3)\right)\left[\log_2(r_0)+\mathcal{O}(T_{\mathrm{os}})\right].
\end{equation}

\noindent \textbf{ii) AO (Alg. \ref{Alg. Alternating Optimization}): }The whole steps of AO algorithm consist of the initialization and iterative steps, which is given by
\begin{equation}
	T_{\mathrm{AO}}=T_{\mathrm{init},\ref{Alg. Alternating Optimization}}+T_{\mathrm{iter},\ref{Alg. Alternating Optimization}}.
\end{equation}
In initialization step, the each assignment is with complexity $\mathcal{O}(1)$, besides once operation for $\mathbf{M}$ and $\mathbf{A}$ obtaining. Thus, we have
\begin{equation}
	T_{\mathrm{init},\ref{Alg. Alternating Optimization}}=\mathcal{O}(K^3)+\mathcal{O}(T_{\Psi}).
\end{equation}
Regarding the iterative steps, the whole iterative number is constrained under a set constant value $\text{max\_iter}$. Inside the iteration, once operation for $\mathbf{D}$ and $\mathbf{A}$ obtaining is implemented on alternating. So $T_{\mathrm{iter},\ref{Alg. Alternating Optimization}}$ can be calculated by
\begin{align}
	T_{\mathrm{iter},\ref{Alg. Alternating Optimization}}&=T_{\ref{Alg. Alternating Optimization}}\left(\mathcal{O}(T_{\mathrm{RA}}+T_{\Psi})\right)\notag\\
	&\leq\text{max\_iter}\times\left(\mathcal{O}(T_{\mathrm{RA}}+T_{\Psi})\right).
\end{align}

\noindent \textbf{iii) RA (Alg. \ref{Alg. Reciprocal Approximation}): }Regarding the initialization step in Alg. \ref{Alg. Reciprocal Approximation}, it contains once operation on $\mathbf{M}$ obtaining and $\mathbf{D}$ initialing, where the $\mathbf{D}$ initialing consist of the inverse calculation with complexity $\mathcal{O}(K^3)$ in average. Thus, the complexity of initialization step can be calculated by
\begin{equation}
	T_{\mathrm{init},\ref{Alg. Reciprocal Approximation}}=2\mathcal{O}(K^3).
\end{equation}
Outside the iterative steps, there is a theorem on the upper bound of iterative number.
\begin{theorem}\label{The. upper bound of RA iterative number}
	Under a extreme value $\epsilon$, a interference coupling matrix $\mathbf{G}$ and a initialization $\mathbf{d}^{(0)}$, the iterative number $t_{\mathrm{RA}}$ of Alg. \ref{Alg. Reciprocal Approximation} satisfies
	\begin{equation}\label{Eq. upper bound of tRA}
		t_{\mathrm{RA}}\leq\frac{\log\epsilon-\log\left(d_{H}(\mathbf{G}^{-1}(\mathbf{1}_K\oslash\mathbf{d}^{(0)}),\mathbf{d}^{(0)})\right)}
		{\log \left(\tanh\left(\frac{\Delta(\mathbf{G}^{-1})}{4}\right)\right)},
	\end{equation}
	where
	\begin{equation}\label{Eq. projective diameter}
		\Delta(\mathbf{G}^{-1})=\log\left(\max_{i,j,p,q}\frac{g^{'}_{p,i}g^{'}_{q,j}}{g^{'}_{p,j}g^{'}_{q,i}}\right),
	\end{equation}
	and $g^{'}_{i,j},~i,j\in\{1,\cdots,K\}$ represents the $\{i,j\}$ element of $\mathbf{G}^{-1}$.
\end{theorem}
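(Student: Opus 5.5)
The bound will come from a geometric-decay estimate in the Hilbert metric, using the Birkhoff--Hopf contraction theorem recalled in the preliminaries. Write the iteration map of Eq. (\ref{Eq. iterative function}) as $F(\mathbf{d})=\Gamma\left(\mathbf{G}^{-1}(\mathbf{1}_K\oslash\mathbf{d})\right)$. By property v) of $d_H$, the scalar normalization $\vartheta(\cdot)$ does not affect Hilbert distances, so only the directions matter. The stopping quantity ``ang'' in Alg. \ref{Alg. Reciprocal Approximation} is exactly $d_H(\mathbf{d}^{(t+1)},\mathbf{d}^{(t)})$.

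The map $F$ is the composition of two pieces. The first is the coordinatewise reciprocal $\mathbf{x}\mapsto\mathbf{1}_K\oslash\mathbf{x}$. This is a Hilbert-metric isometry, since $\max(\mathbf{y}\oslash\mathbf{x})/\min(\mathbf{y}\oslash\mathbf{x})$ is unchanged when $\mathbf{x}$ and $\mathbf{y}$ are inverted. The second is the linear map $\mathbf{G}^{-1}$, which is nonnegative under the conditions of Theorem \ref{The. Convergation of proposed method} and Theorem \ref{The. unilateral matrix balancing problem equivalence}, because $\mathbf{G}$ is an M matrix. Birkhoff's formula then gives
\[
d_H(F(\mathbf{x}),F(\mathbf{y}))\le \kappa\, d_H(\mathbf{x},\mathbf{y}),\qquad \kappa=\tanh\left(\tfrac{\Delta(\mathbf{G}^{-1})}{4}\right)<1 .
\]
For a positive matrix, the projective diameter equals $\log\max_{i,j,p,q}\frac{g'_{p,i}g'_{q,j}}{g'_{p,j}g'_{q,i}}$. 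This is the standard Birkhoff formula, obtained by evaluating $d_H$ on pairs of columns, that is, on images of extreme rays of the positive cone. Citing \cite{birkhoff1957extensions,carroll2004birkhoff} for this yields Eq. (\ref{Eq. projective diameter}).

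Next, apply the contraction to consecutive iterates. Since $d_H(\mathbf{d}^{(t+1)},\mathbf{d}^{(t)})=d_H(F(\mathbf{d}^{(t)}),F(\mathbf{d}^{(t-1)}))\le\kappa\, d_H(\mathbf{d}^{(t)},\mathbf{d}^{(t-1)})$, induction gives
\[
d_H(\mathbf{d}^{(t+1)},\mathbf{d}^{(t)})\le\kappa^{t}\, d_H(\mathbf{d}^{(1)},\mathbf{d}^{(0)}).
\]
By scale invariance, $d_H(\mathbf{d}^{(1)},\mathbf{d}^{(0)})=d_H(\mathbf{G}^{-1}(\mathbf{1}_K\oslash\mathbf{d}^{(0)}),\mathbf{d}^{(0)})$. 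The algorithm stops as soon as this distance falls below $\epsilon$, so it suffices that $\kappa^{t}\, d_H(\mathbf{G}^{-1}(\mathbf{1}_K\oslash\mathbf{d}^{(0)}),\mathbf{d}^{(0)})\le\epsilon$. Taking logarithms, and dividing by $\log\kappa<0$ (which reverses the inequality), gives Eq. (\ref{Eq. upper bound of tRA}).

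The main obstacle is ensuring $\kappa<1$, i.e.\ $\Delta(\mathbf{G}^{-1})<\infty$. This requires $\mathbf{G}^{-1}$ to be strictly positive, not merely nonnegative, because otherwise a zero entry makes the diameter infinite and the bound vacuous. I would secure this from the M-matrix structure in the regimes of the preceding discussion, namely cases i)--iii) and Theorem \ref{The. Convergation of proposed method}: the inverse of an irreducible nonsingular M matrix is entrywise positive \cite{berman1994nonnegative}. If irreducibility fails, I would state that the bound is interpreted as infinite, and argue the positive case only. A secondary point is the off-by-one convention between $t$ and $t+1$, which is absorbed by using $\le$ and the $d^{(1)}$--$d^{(0)}$ starting distance.
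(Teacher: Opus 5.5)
Your proposal is correct and follows the paper's own argument essentially step for step. Both use the reciprocal map as a Hilbert-metric isometry, the Birkhoff--Hopf contraction of $\mathbf{G}^{-1}$ with $\kappa=\tanh(\Delta(\mathbf{G}^{-1})/4)$, geometric decay of $d_H(\mathbf{d}^{(t+1)},\mathbf{d}^{(t)})$ from $d_H(\mathbf{d}^{(1)},\mathbf{d}^{(0)})$, and logarithms with $\log\kappa<0$. Like the paper, this only gives the bound up to rounding $t_{\mathrm{RA}}$ to an integer, so the off-by-one is not fully ``absorbed''. Your extra observation is a genuine sharpening of the paper: $\kappa<1$ needs $\mathbf{G}^{-1}$ entrywise positive (e.g.\ $\mathbf{G}$ irreducible), since any zero entry makes $\Delta(\mathbf{G}^{-1})=\infty$, whereas the paper asserts $\gamma\in(0,1)$ from nonnegativity alone. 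For diagonal $\mathbf{G}$, for instance, the iteration map is a Hilbert-metric isometry and oscillates with period two.
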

\begin{proof}
	See Appendix \ref{App. upper bound of RA iterative number proof}.
\end{proof}
Regarding the iterative steps, the whole process in each iteration consists a inverse calculation with complexity $\mathcal{O}(K^3)$ to apply Eq. (\ref{Eq. iterative function}) and the others with complexity $\mathcal{O}(K)$. Combining with Theorem \ref{The. upper bound of RA iterative number}, the whole complexity of RA algorithm can be calculated by
\begin{align}
	T_{\mathrm{RA}}&=2\mathcal{O}(K^3)+t_{\mathrm{RA}}\left(\mathcal{O}(K^3)+4\mathcal{O}(K)\right).
\end{align}

In combination with the above, since $T_{\ref{Alg. Alternating Optimization}}$ and $T_{\mathrm{os}}$ can be regarded as the constant, we can derive the complexity of our proposed method from
\begin{align}
	&T_{\mathrm{MCN-SPS}}\notag\\
	=&\mathcal{O}\left((T_{\mathrm{AO}}+T_{\Psi}+K^3+1)(\log_2(r_0)+T_{\mathrm{os}})\right)\notag\\
	=&\mathcal{O}\left((T_{\ref{Alg. Alternating Optimization}}(T_{\mathrm{RA}}+T_{\Psi})+T_{\Psi}+K^3+1)(\log_2(r_0)+T_{\mathrm{os}})\right)\notag\\
	=&\mathcal{O}\left(T_{\ref{Alg. Alternating Optimization}}(T_{\Psi}+t_{\mathrm{RA}}K^3)(\log_2(r_0)+T_{\mathrm{os}})\right)\notag\\
	=&\mathcal{O}\left((T_{\Psi}+t_{\mathrm{RA}}K^3)\log_2(r_0)\right).
\end{align}

Considering LLL algorithm for $\Psi(\cdot)$ with complexity $\mathcal{O}\left(K^4\log K\right)$ \cite{nguyen2010lll}, the proposed method is with complexity $\mathcal{O}\left(K^4\log K\log_2(r_0)\right)$. When $K$ enlarge to a tremendously large number such that $K\gg r_0$, $\mathcal{O}\left(\log_2(r_0)\right)$ can be ignored, and the complexity of MCN-SPS method is given by $\mathcal{O}\left(K^4\log K\right)$. In Tab. \ref{Tab. Complexity comparison}, we compare the MCN-SPS method, Venturelli's method \cite{venturelli2020optimization} and PSO-based method \cite{qiu2024lattice} in three cases: the average complexity, average Complexity with LLL algorithm, and $K\gg r_0$ with LLL algorithm. The observations reveal that the MCN-SPS method has a lower complexity in all three cases compared to PSO-based method \cite{qiu2024lattice}, and approach to an equal complexity on Venturelli's method when $K\gg r_0$.


\begin{table*}[t!]
	\centering
	\renewcommand\arraystretch{1.5}
	\caption{Theoretical time complexity in different situation}
	\label{Tab. Complexity comparison}
	\begin{tabular}{cccc}
		\hline
		Method              & Average Complexity & \begin{tabular}[c]{@{}c@{}}Average Complexity\\ with LLL algorithm\end{tabular} & \begin{tabular}[c]{@{}c@{}}$K\gg r_0$\\ with LLL algorithm\end{tabular}  \\
		\hline
		MCN-SPS method      &  $\mathcal{O}\left(\log_2(r_0)T_{\Psi}\right)$                  &  $\mathcal{O}\left(K^4\log K\log_2(r_0)\right)$        &        $\mathcal{O}\left(K^4\log K\right)$\\
		
		Venturelli's method \cite{venturelli2020optimization} &  $\mathcal{O}\left(T_{\Psi}\right)$                  &  $\mathcal{O}\left(K^4\log K\right) $      &  $\mathcal{O}\left(K^4\log K\right) $   \\
		
		PSO-based method \cite{qiu2024lattice}   & $\mathcal{O}\left(K^2T_{\Psi}\right)$ &  $\mathcal{O}\left(K^6\log K\right)$    &   $\mathcal{O}\left(K^6\log K\right)$ \\   
		\hline
	\end{tabular}
\end{table*}

\section{Numerical Validation of Theoretical Findings}
In this section, we evaluate the effectiveness of MCN-SPS method in feasibility, complexity and performance. 
The simulation setups are summarized as follows.

\begin{figure*}[t!]
	\centering
	\subfigure[]{\includegraphics[width=.45\textwidth]{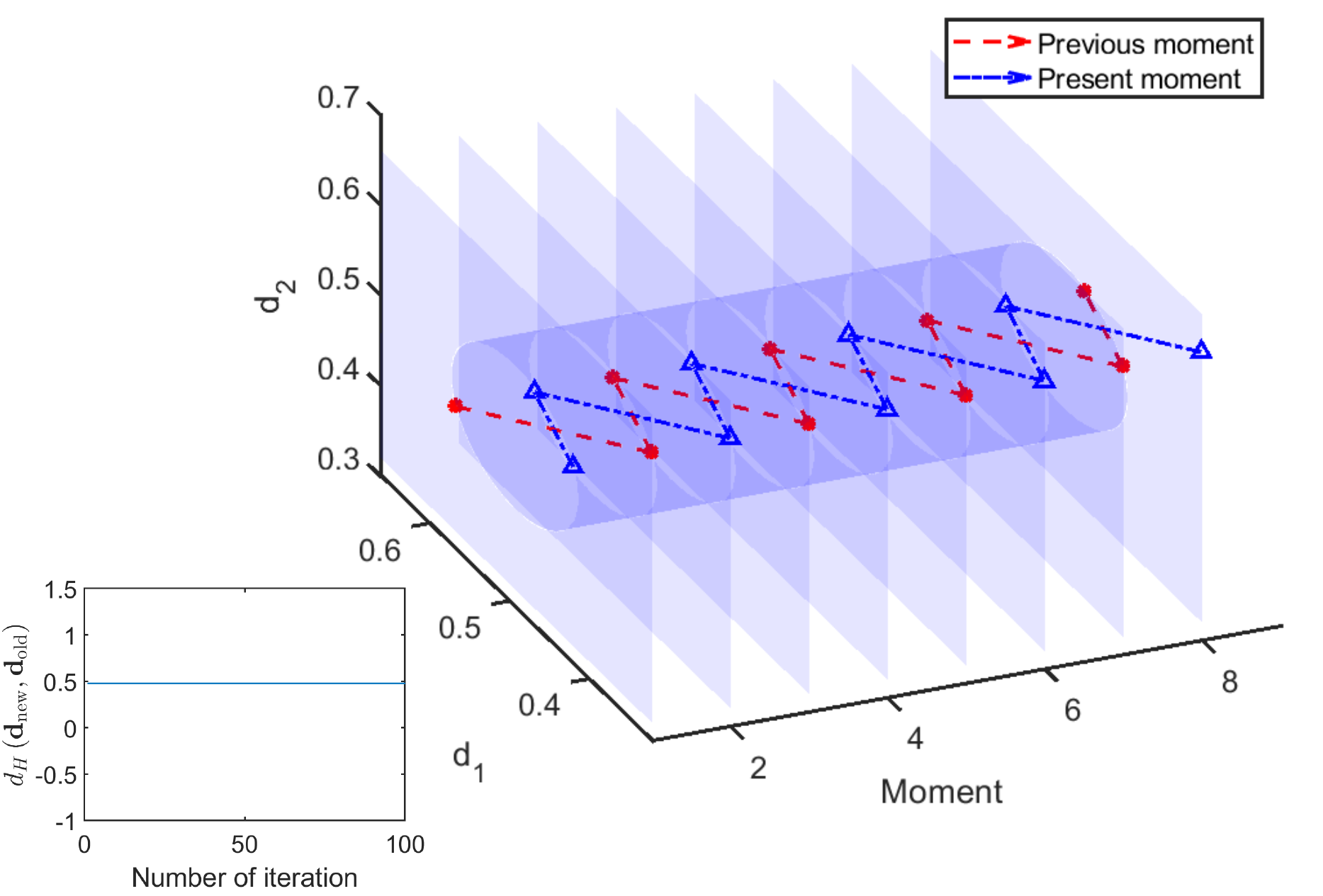}}\quad
	\subfigure[]{\includegraphics[width=.45\textwidth]{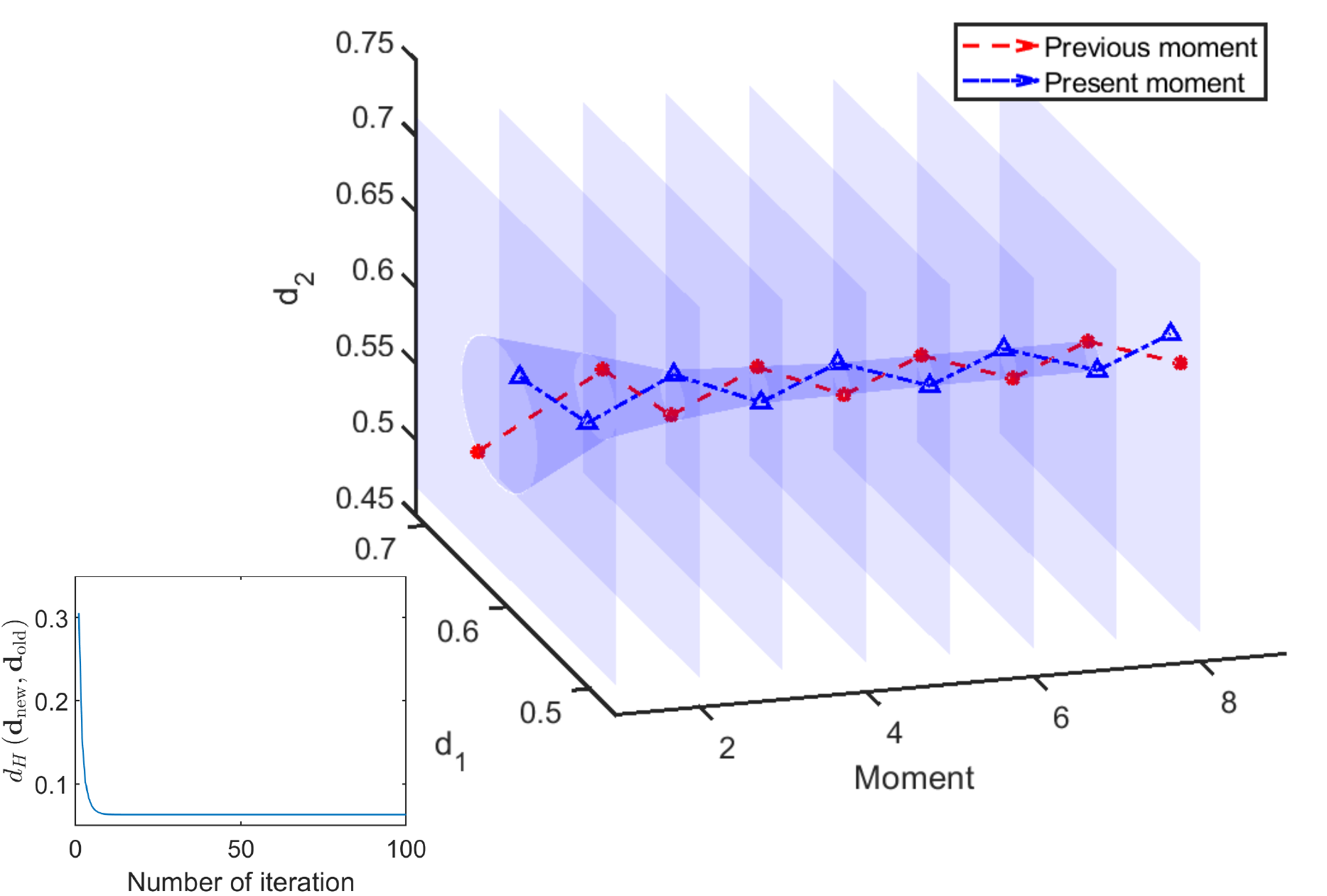}}\quad
	\subfigure[]{\includegraphics[width=.45\textwidth]{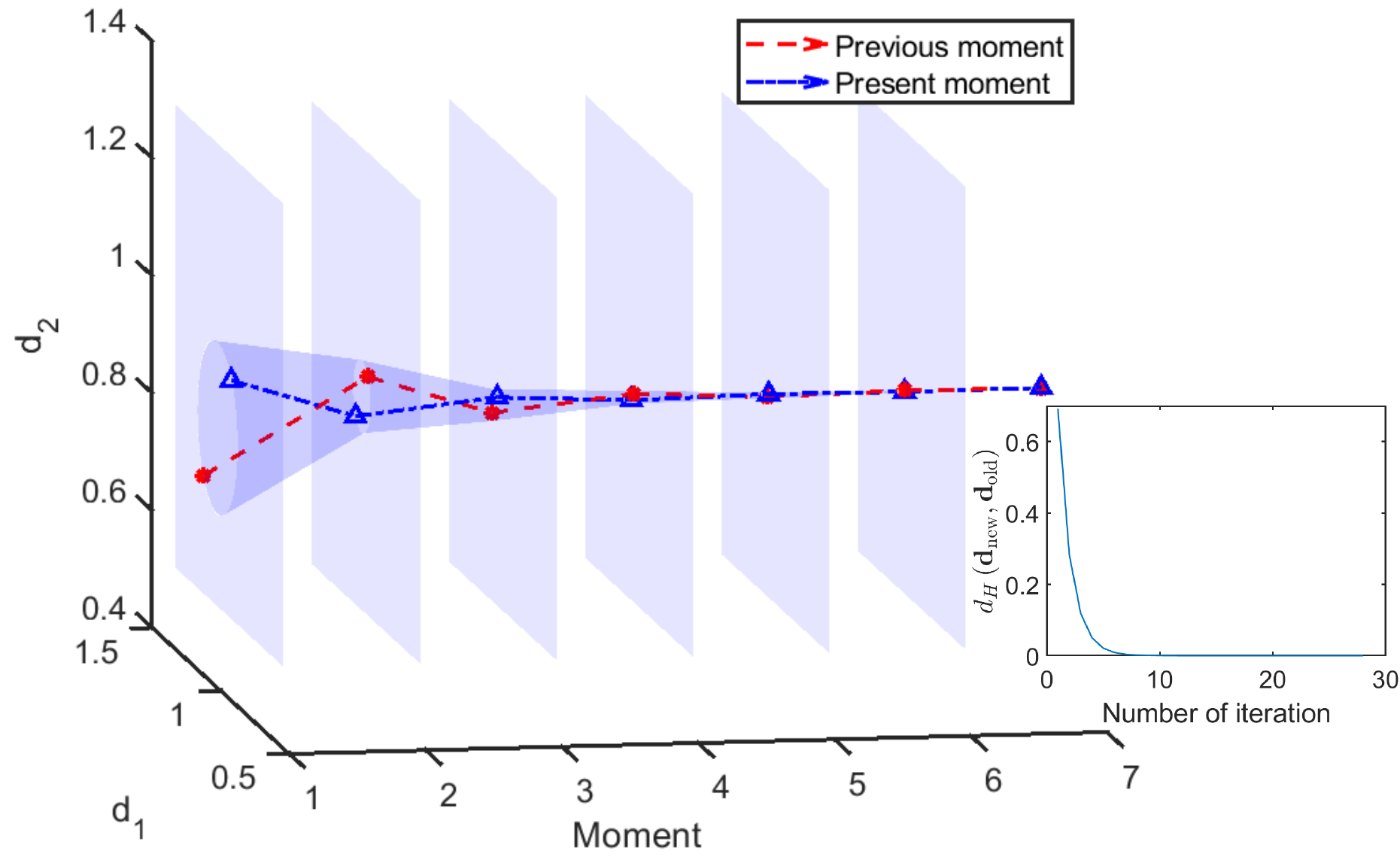}}
	
	\caption{Trajectory in iterations for the sample $\mathbf{H}$ in: (a) SNR$=0$dB, (b) SNR$=5$dB, and (c) SNR$=10$dB. The trend beside trajectory depicts the Hilbert metric changed in iterations.}
	\label{Fig. Convergence verification}
\end{figure*}

\begin{figure*}[htbp]
	\centering
	\subfigure[]{\includegraphics[width=.4\textwidth]{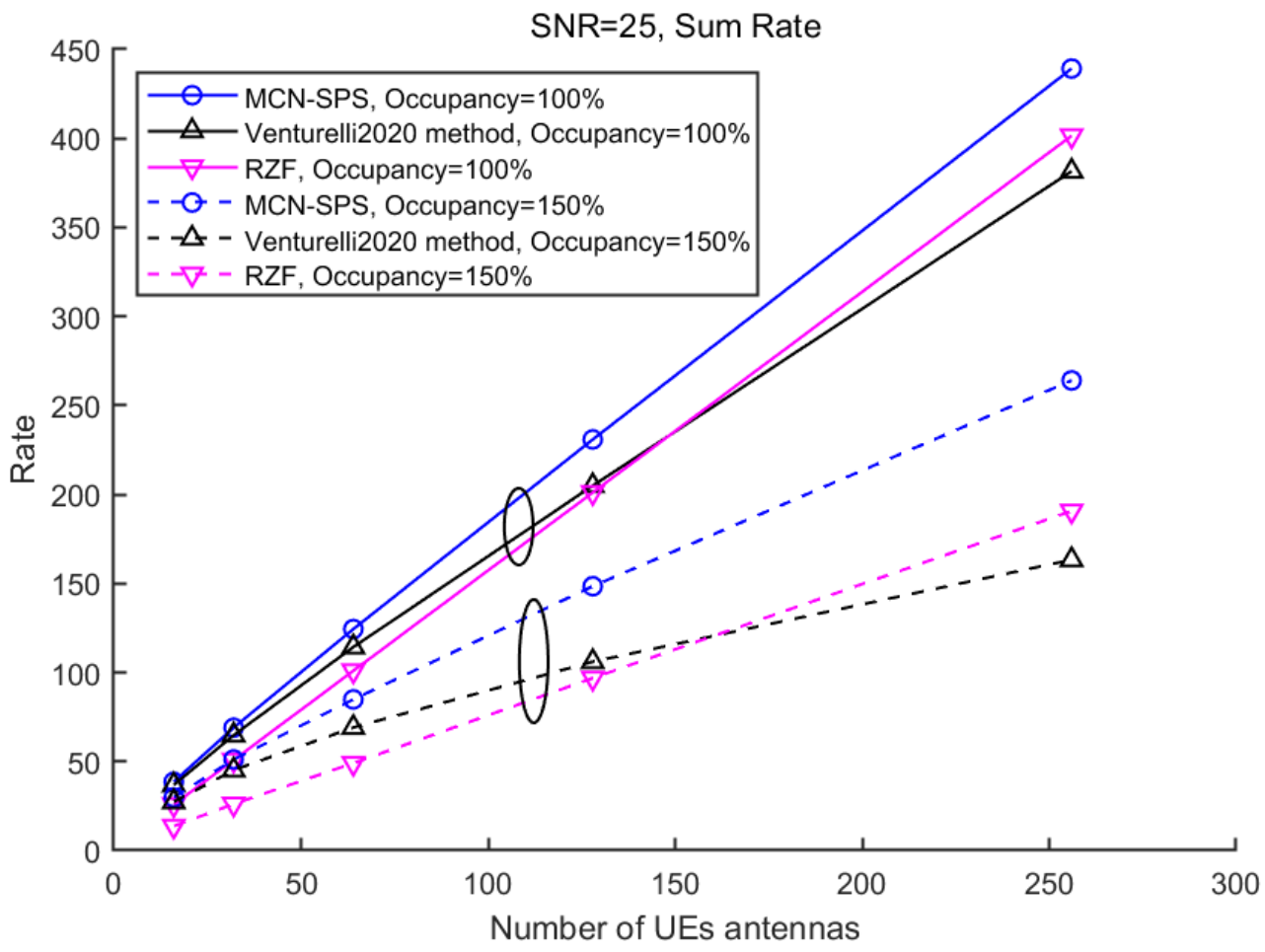}}
	\subfigure[]{\includegraphics[width=.4\textwidth]{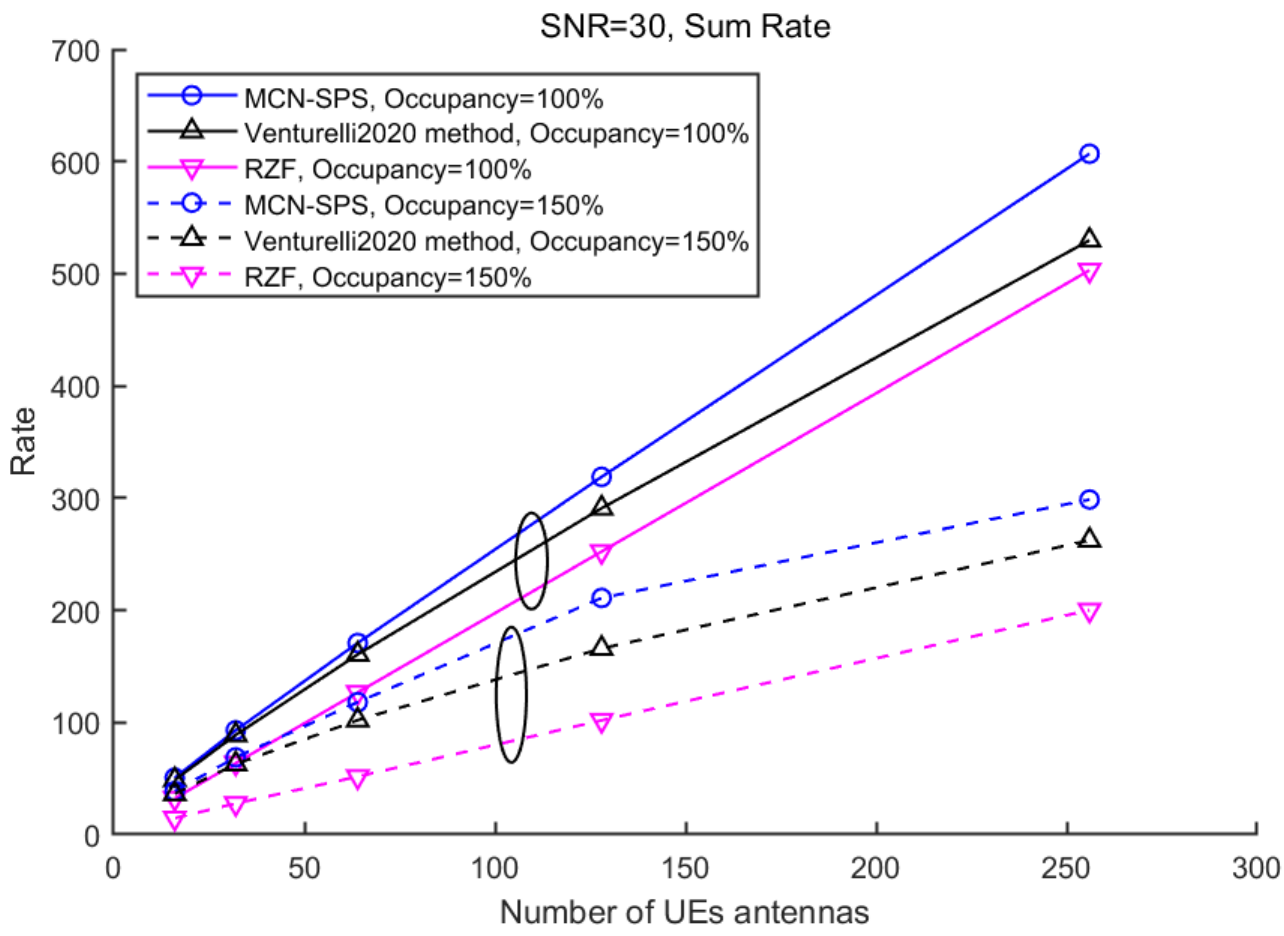}}
	\caption{Sum rate comparisons in different occupancy ($100\%$ and $150\%$) and $K$ ($16$ to $256$) with SNR: (a) $25$dB, (b) $30$dB.}
	\label{Fig. N v.s. Rate_256}
\end{figure*}

\begin{figure*}[htbp]
	\centering
	\subfigure[]{\includegraphics[width=.35\textwidth]{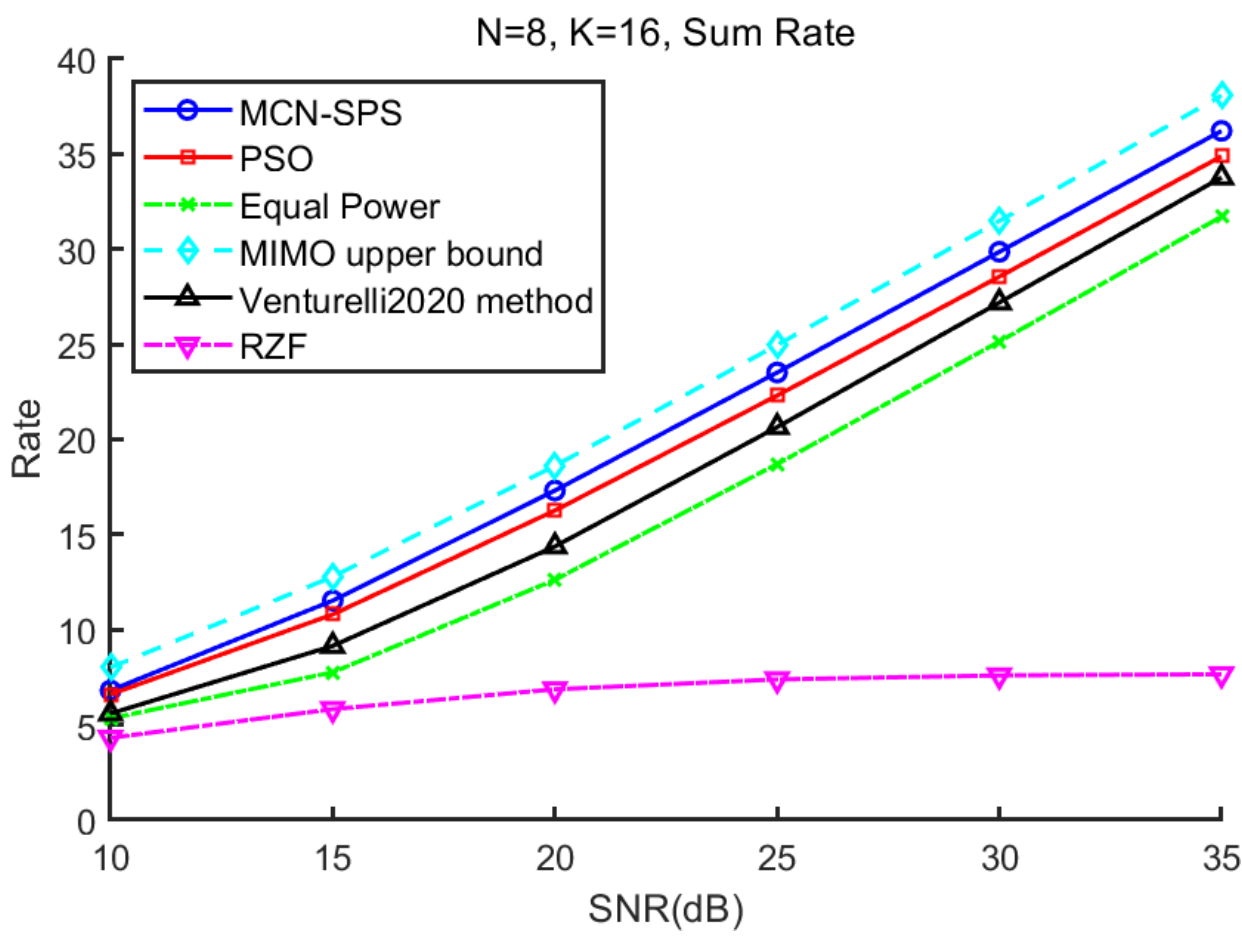}}
	\subfigure[]{\includegraphics[width=.35\textwidth]{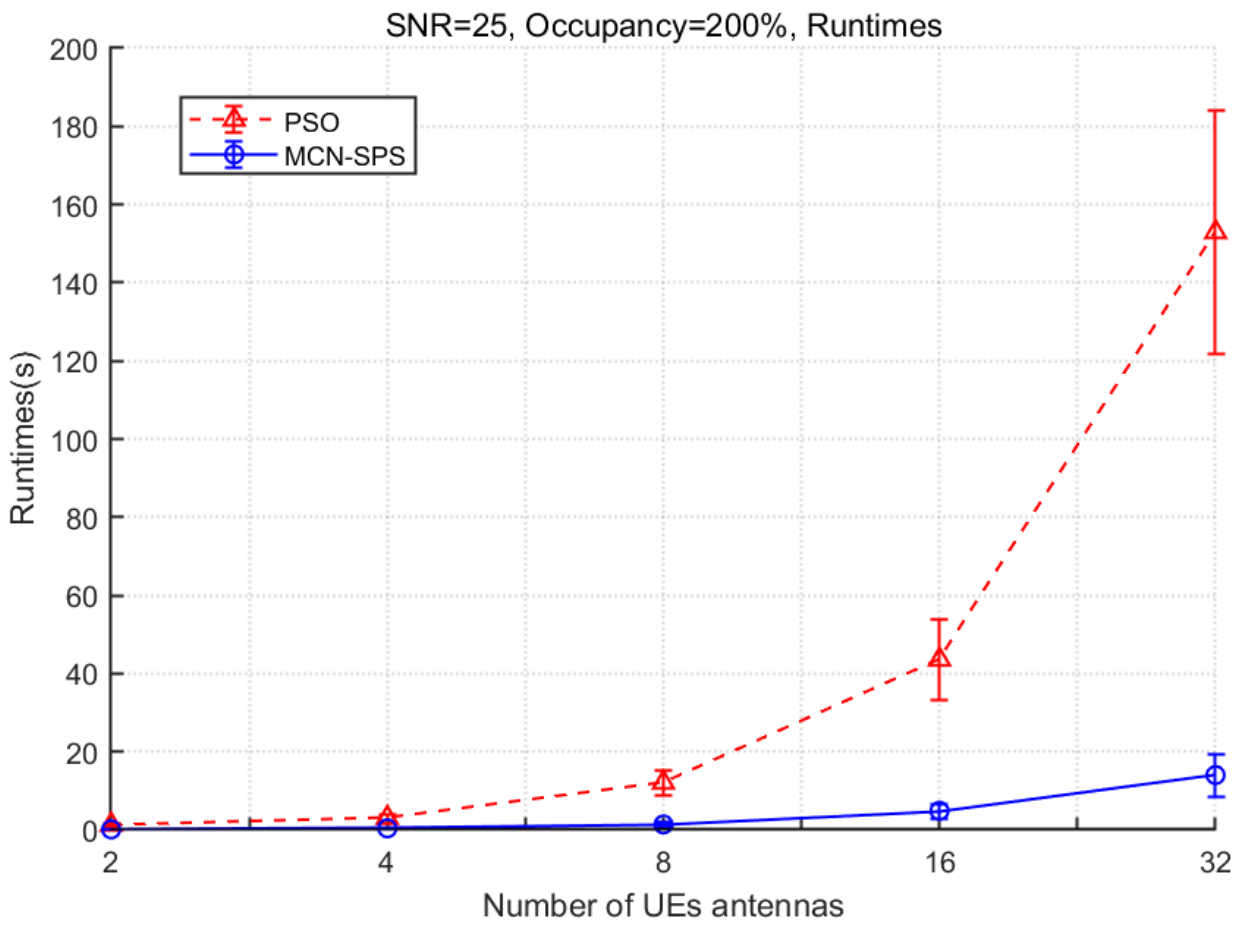}}
	
	
	\subfigure[]{\includegraphics[width=.35\textwidth]{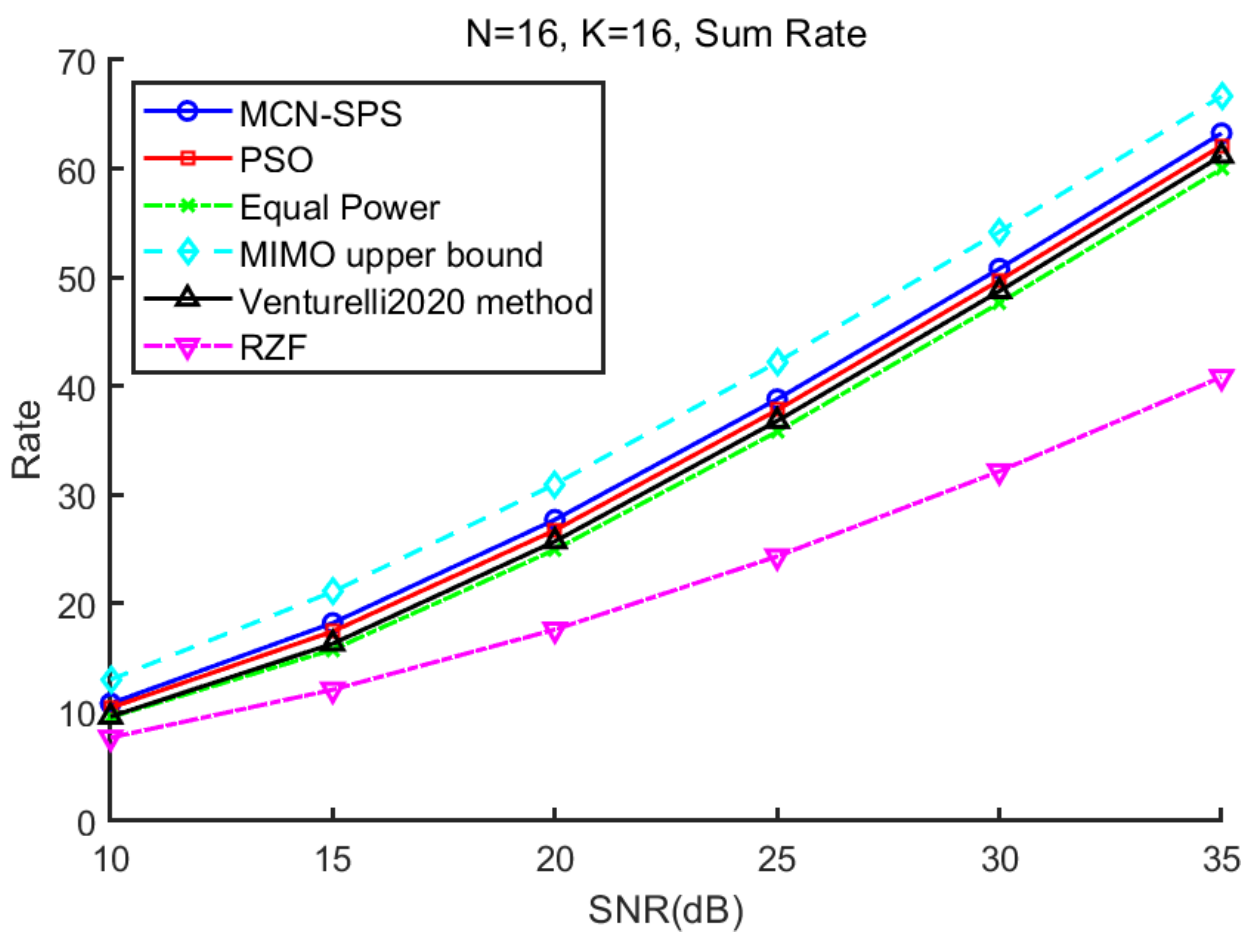}}
	\subfigure[]{\includegraphics[width=.35\textwidth]{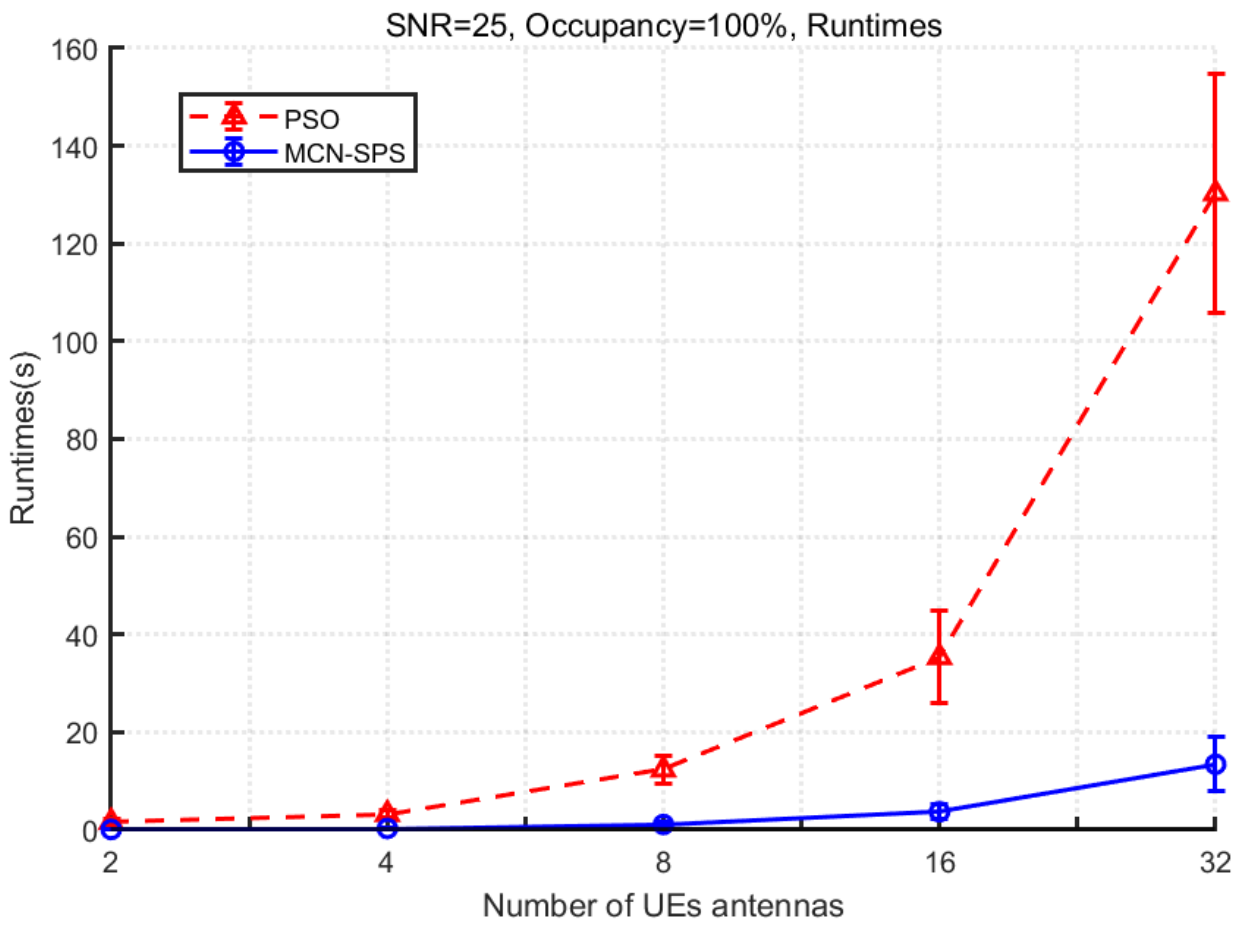}}
	\caption{Sum rate comparisons in different SNR with: (a) $N=8,~K=16$, (c) $N=K=16$, and runtime from $K=2$ to $32$ are shown in (b), (d) respectively.}
	\label{Fig. SNR v.s. Rate}
\end{figure*}

\begin{figure*}[htbp]
	\centering
	\subfigure[]{\includegraphics[width=.4\textwidth]{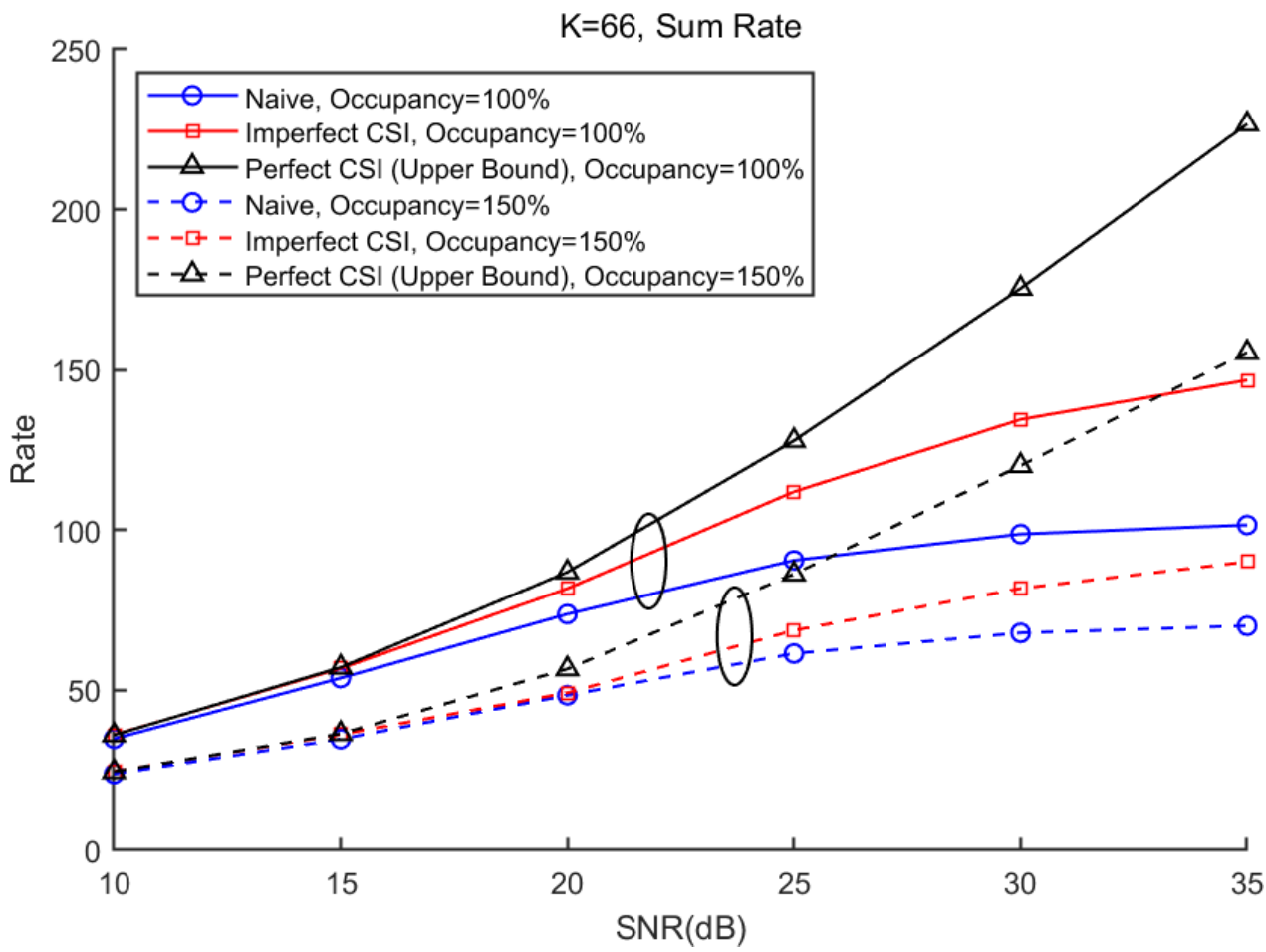}}
	\subfigure[]{\includegraphics[width=.4\textwidth]{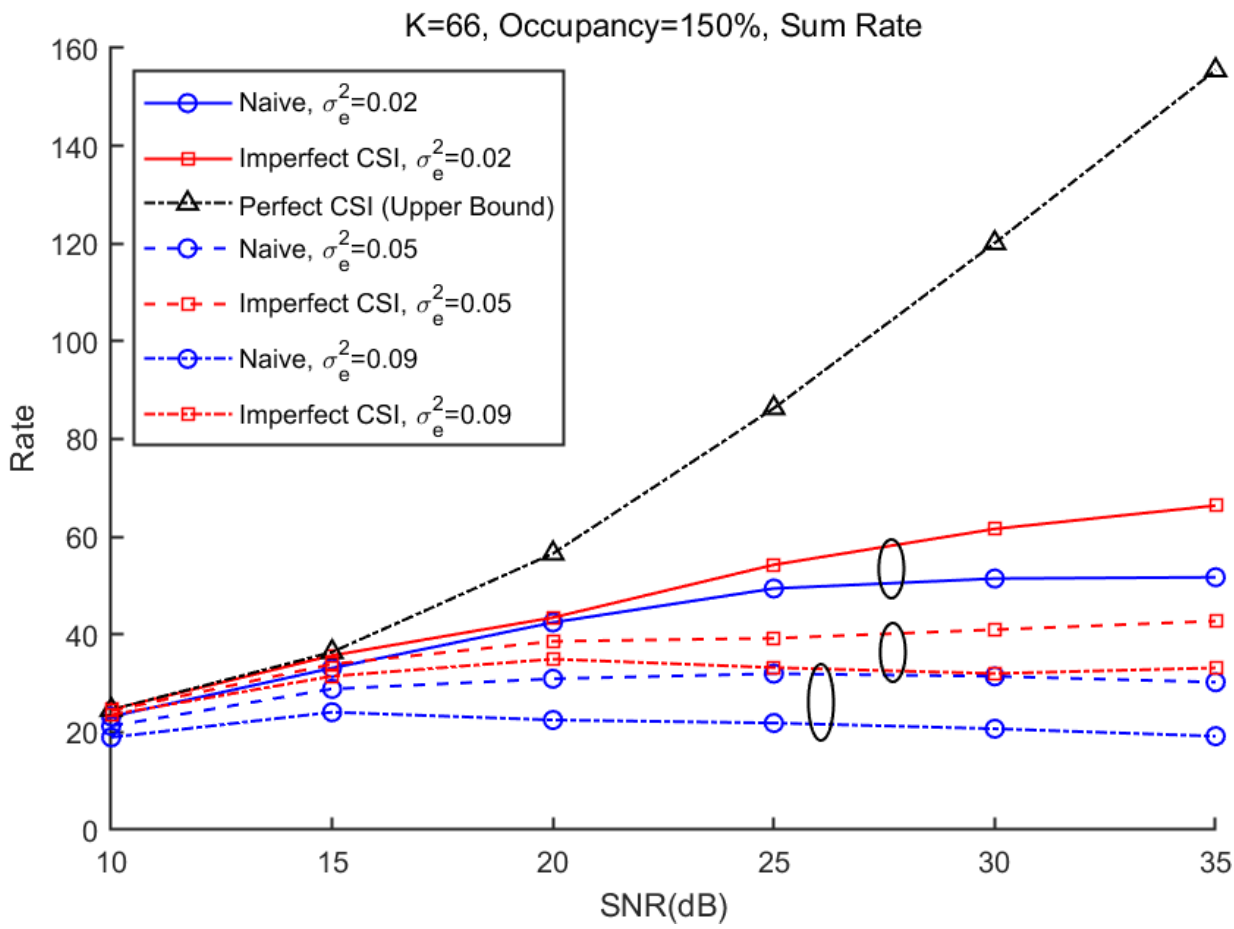}}
	\subfigure[]{\includegraphics[width=.8\textwidth]{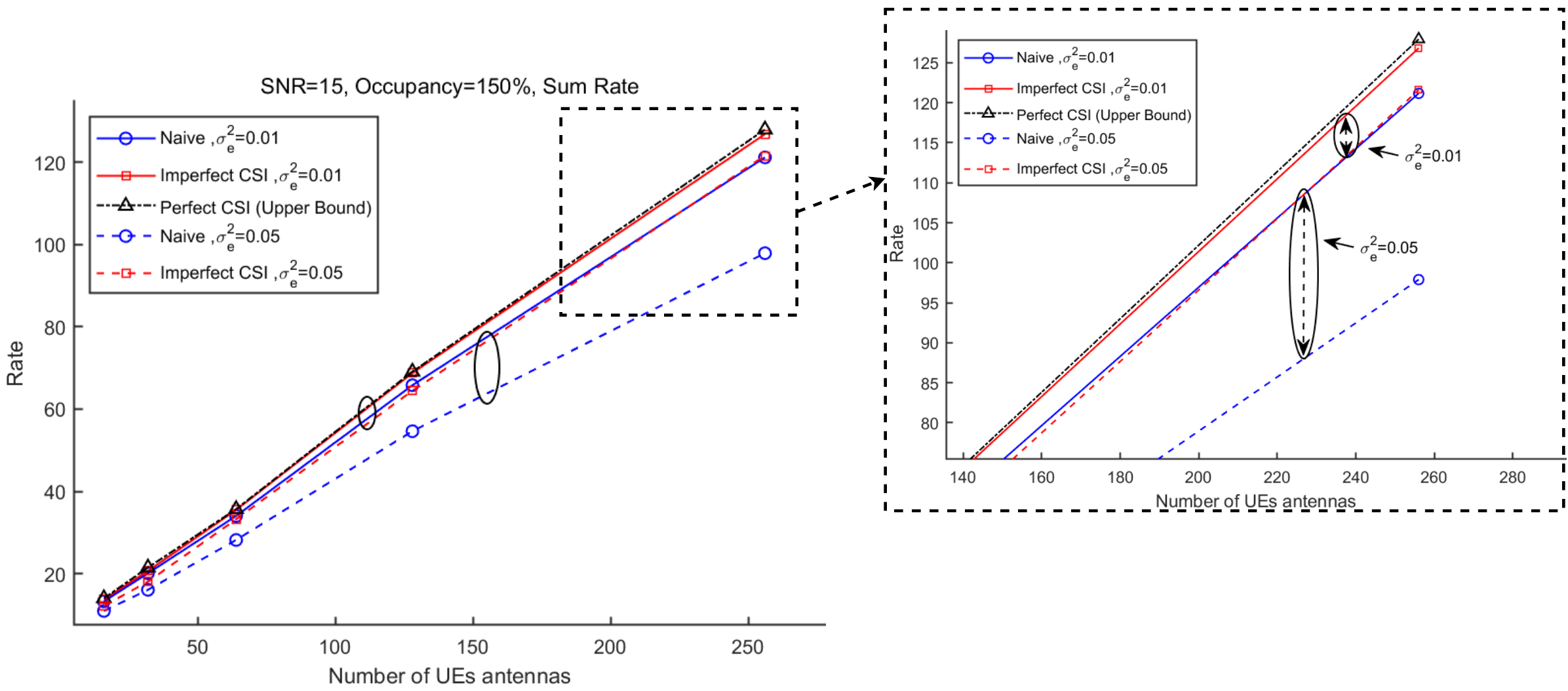}}
	\caption{Sum rate comparisons on the robust precoder designed in Theorem \ref{The. ML precoder} with different: (a) SNR, occupancy, (b) SNR, $\sigma_{e}^2$, (c) $K$, $\sigma_{e}^2$.}
	\label{Fig. ICSI}
\end{figure*}

\noindent \textbf{Metrics: }We employ the sum rate and runtime as the metrics of communication performance and complexity respectively, which are obtained by averaging in $1000$ Monte-Carlo experiments. The sum rate in each experiment is calculated by Eq. (\ref{Eq. high SNR sum rate}). Considering the feasibility verification, we employ the Hilbert metric between this and previous iteration. When the processing is converged into a fixed point, this metric is zero due to the same from input to output. In addition, we define the occupancy of the MIMO system by 
\begin{equation}
	\text{Occupancy}=\frac{K}{N}\times100\%,
\end{equation}
in which the situation of $\text{Occupancy}>100\%$ is referred to as overload MIMO.

\noindent \textbf{Benchmark and setting: }In the comparisons, we adopt PSO-based method \cite{qiu2024lattice}, the \cite{venturelli2020optimization}'s method , the IF precoding without $\mathbf{D}$ optimization, RZF as the benchmarks, referring to as "PSO", "Venturelli", "Equal Power" and "RZF" respectively. To gain the SIVP solution, we employ LLL algorithm in Alg. \ref{Alg. LLL}. In PSO, the maximum of iteration and population size are set as $300$ and $50$ respectively, and the velocity is within $[-10^{-1},10^{-1}]$. Considering Equal Power and RZF, the power allocation matrix $\mathbf{D}$ is both set as $\mathbf{I}$. To MCN-SPS method, we employ RIF scheme, and set $\text{max\_iter}$ in Alg. \ref{Alg. Alternating Optimization} as $100$ and $r_0$ in Alg. \ref{Alg. MCN-SPS} as $10$. We calculate the upper bound of MIMO system by 
\begin{equation}
	R_{\text{ub}}=\frac{1}{2}\log\det\left(\mathbf{I}+\frac{\rho}{K}\mathbf{H}\mathbf{W}\mathbf{H}^{\top}\right),
\end{equation}
where $\mathbf{W}$ represents the $K\times K$ diagonal matrix outputted by the famous water-filling algorithm \cite{tse2005fundamentals}.

\subsection{Convergence verification}
To verify the convergence, Fig. \ref{Fig. Convergence verification} illustrates the trajectory of the moment vector $\mathbf{d}$ in $\varOmega$ under a sample channel matrix $\mathbf{H}$, along with the corresponding trend of the Hilbert metric across iterations, for different SNR values.
At SNR$=0$dB, a period‑2 oscillation is clearly observed in Fig. \ref{Fig. Convergence verification}(a): $\mathbf{d}^{(t)}$ jumps between only two positions in $\varOmega$, while the Hilbert distance $d_{H}(\mathbf{d}^{(t+1)},\mathbf{d}^{(t)})$ remains constant. Geometrically, the cavity in this case appears as a cylinder.
In Fig. \ref{Fig. Convergence verification}(b), a similar oscillatory behavior persists, but the two locations gradually converge toward each other. This results in a cylindro‑conical cavity shape and is reflected in the Hilbert metric, which decreases yet converges to a non‑zero value.
As the SNR increases further, shown in Fig. \ref{Fig. Convergence verification}(c), the cavity takes on a tapered conical form, indicating that the iteration eventually converges. Correspondingly, $d_{H}(\mathbf{d}^{(t+1)},\mathbf{d}^{(t)})$ converges to zero.
Overall, these observations support the correctness of Lemma \ref{Lem. LLL mu in M matrix}.

\subsection{Performance comparison}

To address the massive communication and ubiquitous application scenarios envisioned for 6G, we conduct a series of comparative experiments to validate the effectiveness and advantages of MCN-SPS in large-scale MIMO systems. Fig. \ref{Fig. N v.s. Rate_256} presents the results under different occupancy levels ($100\%$ and $150\%$) and SNR values ranging from $15$dB to $30$dB. Since the PSO method becomes inapplicable when $K\geq32$, we compare the sum-rate performance of MCN-SPS, Venturelli's method \cite{venturelli2020optimization}, and RZF for various numbers of user antennas $K$ (from $16$ to $256$). The following observations can be drawn from Fig. \ref{Fig. N v.s. Rate_256}:
\begin{itemize}
	\item [i)] Across all considered values of $K$ and occupancy levels, MCN-SPS achieves a higher sum-rate compared to all benchmark methods. Moreover, the sum rate gain of MCN-SPS increases as the occupancy level rises, highlighting its applicability in large-scale MIMO systems and its robustness in overloaded MIMO scenarios.
	\item[ii)] As $K$ grows, the performance gap between MCN-SPS and RZF stabilizes, with their curves becoming nearly parallel. In contrast, the performance of Venturelli’s method gradually falls below that of RZF, leading to an intersection between their curves. This indicates that a jointly optimized $(\mathbf{A},\mathbf{D})$ pair is crucial for the effectiveness of IF precoding in large-scale MIMO.
	\item[iii)] With increasing SNR, the performance gap between MCN-SPS and RZF widens further, while the intersection point between Venturelli’s method and RZF shifts toward smaller $K$. These trends suggest that relaxation-based joint optimization methods for IF precoding exhibit reduced applicability in systems with many users or lower SNR. Such limitations are not observed with MCN-SPS, demonstrating the strong applicability of the MCN-SPS-based IF precoder.
\end{itemize}

Regarding the performance in different SNR, we establish two comparisons on sum rate and runtime between the proposed method and benchmarks with $N=8,~k=16$ and $N=K=16$. The results are depicted in Fig. \ref{Fig. SNR v.s. Rate}, and some observations are obtained as follow.
\begin{itemize}
	\item [i)] In overload MIMO depicted in Fig. \ref{Fig. SNR v.s. Rate}(a), RZF fails to work due to the non-eliminated multi-user interference (MUI). But IF based methods (MCN-SPS, PSO, Venturelli, Equal Power) still work, which reveals RIF have the effectiveness to reduce MUI.
	\item[ii)] Compared to PSO, Venturelli, Equal Power and RZF, our proposed method achieve a higher sum rate in all SNR$\geq10$dB in Fig. \ref{Fig. SNR v.s. Rate}(a) and (c), which gain $1$dB when $N=K=16$ and $2$dB when $N=8,~k=16$ compared to PSO method. Meanwhile, the proposed method run in a nearly double lower time compared to PSO method.
	\item[iii)] The gap between the upper bound of MIMO system and the proposed method shrinks during the increasing of occupancy, which reveals the applicability of MCN-SPS method in overload MIMO.
	\item[iv)] From Fig. \ref{Fig. SNR v.s. Rate}(b) and Fig. \ref{Fig. SNR v.s. Rate}(d), MCN-SPS achieves an nearly doubled runtime decreasing compared to PSO, which reflects the lower complexity of MCN-SPS method.
\end{itemize}

Finally, we evaluate the effectiveness of the robust precoder designed in Theorem \ref{The. ML precoder}, which explicitly accounts for channel estimation errors. The results are presented in Fig. \ref{Fig. ICSI}, which compares the proposed robust design with a conventional (“naive”) approach under different SNRs, occupancy levels (Fig. \ref{Fig. ICSI}(a)), and estimation‑error variances $\sigma_e^2$ (Fig. \ref{Fig. ICSI}(b)). In the naive scheme, the precoder in Eq. (\ref{Eq. RIF}) and the inputs in Eq. (\ref{Eq. RIF M}) are implemented directly using the estimated channel $\hat{\mathbf{H}}$, without incorporating error statistics. Some observations can be obtained as follow.
\begin{itemize}
	\item [i)] Across all subfigures in Fig. \ref{Fig. ICSI}, the robust precoder of Theorem \ref{The. ML precoder} achieves a higher sum‑rate than the naive counterpart, confirming its effectiveness under imperfect CSI for varying SNR, occupancy, and $\sigma_e^2$.
	\item[ii)] As shown in Fig. \ref{Fig. ICSI}(a) and (b), the sum‑rate of MCN‑SPS increases with SNR, while it decreases with higher occupancy or larger $\sigma_e^2$.
	\item[iii)] For a fixed $\sigma_e^2 = 0.02$, the performance gap between the robust and naive schemes widens as SNR increases, yet narrows when the occupancy level rises. While this gap remains approximately constant as $\sigma_e^2$ increases across the SNR range, it is more pronounced at lower SNR values and larger numbers of users $K$.
\end{itemize}

\section{Conclusion}
In this paper, a geometry-based generalized optimization model for $(\mathbf{A},\mathbf{D})$ optimization. has been proposed. By the proposed generalized model, we find that the solution of $(\mathbf{A},\mathbf{D})$ optimization problem can be regarded as the searching on several $\mathbf{A}$-represented regions independently divided from the whole $\mathcal{D}$, and each region corresponds the partial of a cone whose points can be represented by the direction of a ray emitted from the origin to $\mathcal{D}$'s holopositive region. Based on this exploration, a mixed optimization method, referring to as Multi-Cone Nested Stochastic Pattern Search (MCN-SPS) method, has been proposed. The theoretical analysis on complexity shows that the MCN-SPS method achieve a lower complexity of $\mathcal{O}\left(K^4\log K\log_2(r_0)\right)$ with LLL algorithm, and reach $\mathcal{O}\left(K^4\log K\right)$ when $K\gg r_0$. The Simulations show that the proposed method features a lower complexity compared with PSO-based method in \cite{qiu2024lattice}, and gain a higher sum rate compared with all benchmarks.

\appendices

\section{Proof on Theorem \ref{The. similar lattice reduction}}\label{App. similar lattice reduction proof}
We prove Theorem \ref{The. similar lattice reduction} by using contradiction. First, we assume the original proposition does not hold under the original background, i.e., when $\|\mathbf{D}_{A}-\mathbf{I}\|_{2}<\epsilon$ for a minimum value $\epsilon>0$, 
\begin{equation}\label{Eq. Appendix B proposition}
	\forall \mathbf{R}_{B}=\mathbf{G}_{B}\mathbf{U}\in \mathcal{S}_B,~\forall\mathbf{R}_{A}=\mathbf{G}_{A}\mathbf{V}\in \mathcal{S}_A,~\text{s.t.},~\mathbf{U}\neq\mathbf{V}.
\end{equation}
Since $\mathbf{R}_{A}$ and $\mathbf{R}_{B}$ are the SIVP solutions of $\mathbf{G}_{A}$ and $\mathbf{G}_{B}$ respectively, based on Definition \ref{Def. Successive Minima}, their basis vectors satisfy
\begin{align}
	\|\mathbf{R}_{A}\mathbf{e}_i\|_2&=\|\mathbf{G}_{A}\mathbf{V}\mathbf{e}_i\|_2\leq\lambda_N(\Lambda_{A}),\\
	\|\mathbf{R}_{B}\mathbf{e}_i\|_2&=\|\mathbf{G}_{B}\mathbf{U}\mathbf{e}_i\|_2\leq\lambda_N(\Lambda_{B}),~i\in\{1,\cdots,K\},
\end{align}
where $\mathbf{e}_i$ denotes the $i$-th column of identity matrix $\mathbf{I}$. Since $\mathbf{U}$ is the unimodular matrix, $\mathbf{G}_{A}\mathbf{U}$ is also a generator matrix of $\Lambda_{A}$. Due to $\mathbf{G}_{B}=\mathbf{G}_{A}\mathbf{D}_{A}$, we have
\begin{align}
	\mathbf{G}_{B}\mathbf{U}\mathbf{e}_i-\mathbf{G}_{A}\mathbf{U}\mathbf{e}_i&=\mathbf{G}_{A}\mathbf{D}_{A}\mathbf{U}\mathbf{e}_i-\mathbf{G}_{A}\mathbf{U}\mathbf{e}_i
	=\mathbf{G}_{A}\left(\mathbf{D}_{A}\mathbf{U}\mathbf{e}_i-\mathbf{U}\mathbf{e}_i\right).
\end{align}
Since $\|\cdot\|_{2}$ is a matrix norm compatible with the vector 2-norm, we have
\begin{align}
	\|\mathbf{G}_{B}\mathbf{U}\mathbf{e}_i-\mathbf{G}_{A}\mathbf{U}\mathbf{e}_i\|_2&=\|\mathbf{G}_{A}\left(\mathbf{D}_{A}\mathbf{U}\mathbf{e}_i-\mathbf{U}\mathbf{e}_i\right)\|_2
	\leq\|\mathbf{G}_{A}\|_{2}\|\mathbf{D}_{A}\mathbf{U}\mathbf{e}_i-\mathbf{U}\mathbf{e}_i\|_2.
\end{align}
Because $\mathbf{G}_{A}$ is known, $\leq\|\mathbf{G}_{A}\|_{2}$ can be regarded as a known value. Meanwhile, since $\|\mathbf{D}_{A}-\mathbf{I}\|_{2}<\epsilon$, we have
\begin{align}
	\|\mathbf{D}_{A}\mathbf{U}\mathbf{e}_i-\mathbf{U}\mathbf{e}_i\|_2&=\|\left(\mathbf{D}_{A}-\mathbf{I}\right)\mathbf{U}\mathbf{e}_i\|_2
	\leq\|\mathbf{D}_{A}-\mathbf{I}\|_{2}\|\mathbf{U}\mathbf{e}_i\|_2
	<\epsilon\|\mathbf{U}\mathbf{e}_i\|_2.
\end{align}
By regarding $\|\mathbf{U}\mathbf{e}_i\|_2$ a constant, $\|\mathbf{D}_{A}\mathbf{U}\mathbf{e}_i-\mathbf{U}\mathbf{e}_i\|_2$ is bounded and approach to zero when $\|\mathbf{D}_{A}-\mathbf{I}\|_{2}<\epsilon$, i.e.,
\begin{align}
	&\forall \epsilon_2>0,~\exists\epsilon>0,
	~s.t.,~\|\mathbf{D}_{A}-\mathbf{I}\|_{2}<\epsilon,~\|\mathbf{D}_{A}\mathbf{U}\mathbf{e}_i-\mathbf{U}\mathbf{e}_i\|_2<\epsilon_2.
\end{align}
Since $\epsilon_2>0$ can be any positive minimum value, we assume that $\epsilon_2=\frac{\epsilon}{\|\mathbf{G}_{A}\|_{2}}$, and we have
\begin{align}
	\|\mathbf{G}_{B}\mathbf{U}\mathbf{e}_i-\mathbf{G}_{A}\mathbf{U}\mathbf{e}_i\|_2&\leq\|\mathbf{G}_{A}\|_{2}\|\mathbf{D}_{A}\mathbf{U}\mathbf{e}_i-\mathbf{U}\mathbf{e}_i\|_2
	<\|\mathbf{G}_{A}\|_{2}\times\frac{\epsilon}{\|\mathbf{G}_{A}\|_{2}}=\epsilon.
\end{align}
According to the trigonometric inequality and Eq. (\ref{Eq. The. 1 condition 2}), we can derive that
\begin{align}\label{Eq. Appendix B inequality}
	\|\mathbf{G}_{A}\mathbf{U}\mathbf{e}_i\|_2&=\|\mathbf{G}_{B}\mathbf{U}\mathbf{e}_i-\left(\mathbf{G}_{B}\mathbf{U}\mathbf{e}_i-\mathbf{G}_{A}\mathbf{U}\mathbf{e}_i\right)\|_2\notag\\
	&\leq\|\mathbf{G}_{B}\mathbf{U}\mathbf{e}_i\|_2+\|\mathbf{G}_{B}\mathbf{U}\mathbf{e}_i-\mathbf{G}_{A}\mathbf{U}\mathbf{e}_i\|_2\notag\\
	&<\lambda_K\left(\Lambda_{B}\right)+\epsilon
	<\lambda_K\left(\Lambda_{A}\right)+2\epsilon.
\end{align}
There are two cases for the relationship of $\|\mathbf{G}_{A}\mathbf{U}\mathbf{e}_i\|_2$ and $\lambda_K\left(\Lambda_{A}\right)$:  $\|\mathbf{G}_{A}\mathbf{U}\mathbf{e}_i\|_2>\lambda_K\left(\Lambda_{A}\right)$ and $\|\mathbf{G}_{A}\mathbf{U}\mathbf{e}_i\|_2\leq\lambda_K\left(\Lambda_{A}\right)$. If $\|\mathbf{G}_{A}\mathbf{U}\mathbf{e}_i\|_2>\lambda_K\left(\Lambda_{A}\right)$, we let 
\begin{equation}
	\exists \mathbf{e}_i,~s.t.,~\epsilon\in\left(0,\frac{\|\mathbf{G}_{A}\mathbf{U}\mathbf{e}_i\|_2-\lambda_K\left(\Lambda_{A}\right)}{2}\right],
\end{equation}
so we have 
\begin{equation}
	\lambda_K\left(\Lambda_{A}\right)+2\epsilon\in\left(\lambda_K\left(\Lambda_{A}\right),\|\mathbf{G}_{A}\mathbf{U}\mathbf{e}_i\|_2\right].
\end{equation}
However, it is contradictory between $\lambda_K\left(\Lambda_{A}\right)+2\epsilon\leq\|\mathbf{G}_{A}\mathbf{U}\mathbf{e}_i\|_2$ and Eq. (\ref{Eq. Appendix B inequality}), $\|\mathbf{G}_{A}\mathbf{U}\mathbf{e}_i\|_2\leq\lambda_K\left(\Lambda_{A}\right)$ hold true for all $\epsilon>0$. According to the Definition \ref{Def. Successive Minima},  $\mathbf{G}_{A}\mathbf{U}$ is one of the SIVP solution of $\Lambda_{A}$. There exists a unimodular matrix $\mathbf{U}$ such that $\mathbf{V}=\mathbf{U}$, which is contradictory to Eq. (\ref{Eq. Appendix B proposition}). Thus, the original proposition in Theorem \ref{The. similar lattice reduction} is true, and Theorem \ref{The. similar lattice reduction} has been proved.

\section{Proof on Theorem \ref{Pro. solution set}}\label{App. the charteristic of solution set}
the characteristic of $\mathcal{S}$ consists of the boundedness, discreteness, finiteness and nonempty property, which we will elaborate them respectively:

\noindent\textbf{i) Boundedness:} 
Since $\mathbf{M}$ is a PDM under the condition in Lemma \ref{Lem. extreme existence and uniqueness}, there must exist a constant $\phi>0$, s.t., 
\begin{equation}
	\mathbf{v}^{\top}\mathbf{M}\mathbf{v}\geq\phi\|\mathbf{v}\|^2,~\forall\mathbf{v}\in\mathbb{R}^K.
\end{equation}
Considering the target function in Eq. (\ref{Eq. optimization problem}), we have 
\begin{equation}
	\mathrm{Tr}\left(\mathbf{A}^{\top}\mathbf{D}^{\top}\mathbf{M}\mathbf{D}\mathbf{A}\right)\geq\phi\sum_{i=1}^{K}\|\mathbf{D}\mathbf{a}_i\|^2,
\end{equation}
where $\mathbf{a}_i$ represents the $i$-th column of $\mathbf{A}$. By the constrain $\prod_{i=1}^{K} d_i=1$, we can derive the following solutions:
\begin{itemize}
	\item[1] when a $d_i\to0$, it must have a $d_j\to\infty$ to remain the constrain.
	\item[2] when a $d_j\to0$, and the $j$-th parameter of corresponded column $\mathbf{a}_i$ is nonzero, we have $\|\mathbf{D}\mathbf{a}_i\|\to\infty$, and $ \mathrm{Tr}\left(\mathbf{A}^{\top}\mathbf{D}^{\top}\mathbf{M}\mathbf{D}\mathbf{A}\right)\to\infty$, which is impossible to as a minimum value of $\mathrm{Tr}\left(\mathbf{A}^{\top}\mathbf{D}^{\top}\mathbf{M}\mathbf{D}\mathbf{A}\right)$. 
\end{itemize} 
Thus, it exists a constant $C$, s.t., $\forall d_i\in[m,M]$, $m>0$, $M<\infty$ when $\mathrm{Tr}\left(\mathbf{A}^{\top}\mathbf{D}^{\top}\mathbf{M}\mathbf{D}\mathbf{A}\right)\leq C$, which reveals the boundness of $\mathcal{S}$.

\noindent\textbf{ii) Discreteness:} 
We prove this property by using contradiction. Firstly, we assume there exists a dissimilarity point sequence $\{(\mathbf{A}_n,\mathbf{d}_n)\}\subset\mathcal{S}$ satisfies
\begin{equation}
	(\mathbf{A}_n,\mathbf{d}_n)\to(\mathbf{A}^*,\mathbf{d}^*).
\end{equation}
Since $\mathbf{A}_n$ is an integer-valued matrix, it must be $\mathbf{A}_n=\mathbf{A}^*$ when $n$ is larger enough, which leads that $\mathbf{d}_n$ is the solution of Eq. (\ref{Eq. fixed A optimization problem}). According to Lemma \ref{Lem. extreme existence and uniqueness} and Remark \ref{Rem. convex problem in fixed A}, Eq. (\ref{Eq. fixed A optimization problem}) can be regarded as a convex problem whose solution represented by $\mathbf{d}_{\mathbf{A}}^*$ is unique \cite{boyd2004convex}. Based on Eq. (\ref{Eq. d2A mapping}) and Remark \ref{Rem. convex problem in fixed A}, any $(\mathbf{A},\mathbf{d})\in\mathcal{S}$ requires consistency, which reveals that $(\mathbf{A}^*,\mathbf{d}_{\mathbf{A}}^*)\in\mathcal{S}$. Thus, when $n$ is large enough, $(\mathbf{A}_n,\mathbf{d}_n)$ becomes constant sequence which is contradictory to the definition of dissimilarity point sequence. So it is impossible existing a dissimilarity point sequence consisted in $\mathcal{S}$, and $\mathcal{S}$ is discrete.

\noindent\textbf{iii) Finiteness:} 
Since $\mathcal{S}$ is a bounded discrete set in finite dimensional Euclidean space, $\mathcal{S}$ is a compact set. According \cite[Theorem 2.4.1]{rudin1976principles}, $\mathcal{S}$ is a finite set.

\noindent\textbf{iv) Nonempty property:} 
We prove this property by using contradiction. Firstly, we assume that none of regions in $\varOmega$ have the extreme. Since $\varOmega$ fulfill the whole $\mathbb{R}_{+}^{K-1}$, the problem in Eq. (\ref{Eq. fixed A optimization problem}) have no solution for every $\mathbf{A}$, which is contradictory to Lemma \ref{Lem. extreme existence and uniqueness}. Thus, there must be at least one region with extreme, and $\mathcal{S}$ is nonempty.

\section{Proof on Theorem \ref{The. Convergation of proposed method}}\label{App. Convergation of proposed method proof}
Before proving Theorem \ref{The. Convergation of proposed method}, there is a lemma about the characteristic of Hilbert metric on the vector $\mathbf{1}_K\oslash\mathbf{d}$
\begin{lemma}\label{Lem. Hilbert metric for vector inverse}
	In the complete metric space with Hilbert metric, we have
	\begin{equation}
		\forall \mathbf{d}_{x},\mathbf{d}_{y}\in\mathcal{D},~s.t.,~d_H\left(\mathbf{1}_K\oslash\mathbf{d}_{x},\mathbf{1}_K\oslash\mathbf{d}_{y}\right)=d_H\left(\mathbf{d}_{x},\mathbf{d}_{y}\right).
	\end{equation}
\end{lemma}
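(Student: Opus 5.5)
The plan is to compute both sides of the claimed identity directly from the definition of the Hilbert metric,
\begin{equation*}
d_H(\mathbf{x},\mathbf{y})=\log\left(\frac{\max(\mathbf{x}\oslash\mathbf{y})}{\min(\mathbf{x}\oslash\mathbf{y})}\right),
\end{equation*}
and to show that the ratio inside the logarithm is unchanged when both arguments are replaced by their componentwise reciprocals. Fix $\mathbf{d}_x,\mathbf{d}_y\in\mathcal{D}$ with strictly positive entries, so every quotient below is well defined and positive.

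\textbf{Step 1.} Observe that
\begin{equation*}
(\mathbf{1}_K\oslash\mathbf{d}_x)\oslash(\mathbf{1}_K\oslash\mathbf{d}_y)=\mathbf{d}_y\oslash\mathbf{d}_x=\mathbf{1}_K\oslash(\mathbf{d}_x\oslash\mathbf{d}_y).
\end{equation*}
Write $\mathbf{w}=\mathbf{d}_x\oslash\mathbf{d}_y$, which has positive entries $w_1,\dots,w_K$.

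\textbf{Step 2.} The map $t\mapsto 1/t$ is strictly decreasing on $(0,\infty)$. Hence
\begin{equation*}
\max_i\frac{1}{w_i}=\frac{1}{\min_i w_i},\qquad \min_i\frac{1}{w_i}=\frac{1}{\max_i w_i}.
\end{equation*}

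\textbf{Step 3.} Combining these,
\begin{equation*}
\frac{\max(\mathbf{1}_K\oslash\mathbf{w})}{\min(\mathbf{1}_K\oslash\mathbf{w})}=\frac{1/\min_i w_i}{1/\max_i w_i}=\frac{\max_i w_i}{\min_i w_i}.
\end{equation*}
Taking logarithms gives $d_H(\mathbf{1}_K\oslash\mathbf{d}_x,\mathbf{1}_K\oslash\mathbf{d}_y)=d_H(\mathbf{d}_x,\mathbf{d}_y)$.

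Alternatively, one could note that the left side equals $d_H(\mathbf{d}_y,\mathbf{d}_x)$ and then invoke symmetry, property ii) of the Hilbert metric. The direct computation above avoids depending on that listed property.

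There is no real obstacle here. The only care needed is to stay within the positive orthant, which the definition of $\mathcal{D}$ (equivalently $\varOmega$ via $\Gamma$) guarantees, so that the reciprocals exist and the monotonicity argument in Step 2 applies.
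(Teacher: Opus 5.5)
Your proof is correct and follows essentially the same argument as the paper. Both rewrite the quotient of reciprocals as $\mathbf{d}_y\oslash\mathbf{d}_x$ and use that $t\mapsto 1/t$ is decreasing, so max and min swap and the ratio inside the logarithm is unchanged.
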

\begin{proof}
	Regarding $\forall \mathbf{d}_{x}=[d_{x,1},\cdots,d_{x,K}]^{\top},\mathbf{d}_{y}=[d_{y,1},\cdots,d_{y,K}]^{\top}\in\mathcal{D}$, since the negative correlation of reciprocal, we have
	\begin{align}
		&d_H\left(\mathbf{1}_K\oslash\mathbf{d}_{x},\mathbf{1}_K\oslash\mathbf{d}_{y}\right)\notag\\=&\log\left(\frac{\max_{i=1}^{K}\left(\frac{d_{y,i}}{d_{x,i}}\right)}{\min_{i=1}^{K}\left(\frac{d_{y,i}}{d_{x,i}}\right)}\right)
		=\log\left(\frac{\frac{1}{\min_{i=1}^{K}\left(\frac{d_{x,i}}{d_{y,i}}\right)}}{\frac{1}{max_{i=1}^{K}\left(\frac{d_{x,i}}{d_{y,i}}\right)}}\right)
		=\log\left(\frac{\max_{i=1}^{K}\left(\frac{d_{x,i}}{d_{y,i}}\right)}{\min_{i=1}^{K}\left(\frac{d_{x,i}}{d_{y,i}}\right)}\right)
		=d_H\left(\mathbf{d}_{x},\mathbf{d}_{y}\right).
	\end{align}
	Thus, the lemma has been proved.
\end{proof}
Then we prove the Theorem \ref{The. Convergation of proposed method}.
Towards the $p$-PAM symbols in Rayleigh channel with domain $[0,+\infty)$, the channel matrix $\mathbf{H}$ remains non-negative. Due to the difference on $\mathbf{M}$ in DIF and RIF, we analyze them respectively. 

\noindent \textbf{DIF:} When $N\geq K$, $\mathbf{H}$ is a non-negative matrix with full rank in row, which means
\begin{equation}\label{Eq. non-negative 1}
	\left[\mathbf{H}\mathbf{H}^{\top}\right]_{i,j}\geq 0,~i,j\in\left\{1,\cdots,K\right\}.
\end{equation}
Meanwhile, Lemma \ref{Lem. extreme existence and uniqueness} reveals that $\mathbf{H}\mathbf{H}^{\top}$ is a PDM when $N\geq K$. Based on the properties of M matrices \cite{berman1994nonnegative}, the inverse of a positive-defined M matrix is the positive-defined non-negative matrix which is a necessary and sufficient condition. So we learn that $\mathbf{M}=\left(\mathbf{H}\mathbf{H}^{\top}\right)^{-1}$ is a M matrix whose elements can be represented by
\begin{equation}
	m_{i,j}=\left[\left(\mathbf{H}\mathbf{H}^{\top}\right)^{-1}\right]_{i,j}=\left\{
	\begin{aligned}
		&\text{non-negative},&i=j;\\
		&\text{non-positive},&i\neq j.
	\end{aligned}
	\right.
\end{equation} 

\noindent \textbf{RIF:} In RIF, since $\frac{K}{\rho}\geq0$ and the situation $\frac{K}{\rho}=0$ happened only when $\rho\rightarrow+\infty$, we learn that $\frac{K}{\rho}\mathbf{I}$ is non-negative. Combining Eq. (\ref{Eq. non-negative 1}), it reveals that
\begin{equation}
	\left[\frac{K}{\rho}\mathbf{I}+\mathbf{H}\mathbf{H}^{\top}\right]_{i,j}=\left[\frac{K}{\rho}\mathbf{I}\right]_{i,j}+\left[\mathbf{H}\mathbf{H}^{\top}\right]_{i,j}\geq 0,
\end{equation}
where $~i,j\in\left\{1,\cdots,K\right\}$. As the same reason in DIF, we learn that $\mathbf{M}=\left(\frac{K}{\rho}\mathbf{I}+\mathbf{H}\mathbf{H}^{\top}\right)^{-1}$ is a M matrix whose elements can be represented by
\begin{equation}
	m_{i,j}=\left[\left(\frac{K}{\rho}\mathbf{I}+\mathbf{H}\mathbf{H}^{\top}\right)^{-1}\right]_{i,j}=\left\{
	\begin{aligned}
		&\text{non-negative},&i=j;\\
		&\text{non-positive},&i\neq j.
	\end{aligned}
	\right.
\end{equation} 

Since $\mathbf{A}\mathbf{A}^{\top}$ is non-negative in Theorem \ref{The. Convergation of proposed method}, we have
\begin{align}
	\left[\mathbf{G}\right]_{i,j}&=\left[\left(\mathbf{A}\mathbf{A}^{\top}\right)\circ\mathbf{M}\right]_{i,j}\notag\\
	&=\left[\mathbf{A}\mathbf{A}^{\top}\right]_{i,j}\times m_{i,j}
	=\left\{
	\begin{aligned}
		&\text{non-negative}\times \text{non-negative},&i=j;\\
		&\text{non-negative}\times \text{non-positive},&i\neq j,
	\end{aligned}
	\right.
	=\left\{
	\begin{aligned}
		&\text{non-negative},&i=j;\\
		&\text{non-positive},&i\neq j.
	\end{aligned}
	\right.
\end{align}
According to Lemma \ref{Lem. extreme existence and uniqueness}, $\mathbf{G}$ is a PDM, and $\mathbf{G}^{-1}$ is non-negative due to the properties of M matrix \cite{berman1994nonnegative}. Thus, $\mathbf{G}^{-1}$ can be regarded as a positive transformation. Based on the Perron-Frobenius theorem in direction and Hilbert metric, we have
\begin{equation}
	\exists \gamma\in\left(0,1\right),~d_H\left(\mathbf{G}^{-1}\mathbf{d}_{x},\mathbf{G}^{-1}\mathbf{d}_{y}\right)\leq\gamma d_H\left(\mathbf{d}_{x},\mathbf{d}_{y}\right)
\end{equation}
for all $\mathbf{d}_{x},\mathbf{d}_{y}\in \varOmega$. Let function $g(\mathbf{d})\triangleq\Gamma\left(\mathbf{G}^{-1}\left(\mathbf{1}_K\oslash\mathbf{d}\right)\right)$, in which $\Gamma(\mathbf{x})$ can be regarded as a scaling operation on $\mathbf{x}$. Combining Lemma \ref{Lem. Hilbert metric for vector inverse}, we have
\begin{align}
	&d_H\left(g(\mathbf{d}_{x}),g(\mathbf{d}_{y})\right)\notag\\
	=&d_H\left(\Gamma\left(\mathbf{G}^{-1}\left(\mathbf{1}_K\oslash\mathbf{d}_{x}\right)\right),\Gamma\left(\mathbf{G}^{-1}\left(\mathbf{1}_K\oslash\mathbf{d}_{y}\right)\right)\right)\notag\\
	=&d_H\left(\mathbf{G}^{-1}\left(\mathbf{1}_K\oslash\mathbf{d}_{x}\right),\mathbf{G}^{-1}\left(\mathbf{1}_K\oslash\mathbf{d}_{y}\right)\right)\notag\\
	\leq& \gamma d_H\left(\mathbf{1}_K\oslash\mathbf{d}_{x},\mathbf{1}_K\oslash\mathbf{d}_{y}\right)
	=\gamma d_H\left(\mathbf{d}_{x},\mathbf{d}_{y}\right).
\end{align}
$d_H\left(g(\mathbf{d}_{x}),g(\mathbf{d}_{y})\right)\leq\gamma d_H\left(\mathbf{d}_{x},\mathbf{d}_{y}\right)$ reflects that the function $g(\mathbf{d})$ is a contraction within $\varOmega$, which means Eq. (\ref{Eq. iterative function}) will converge to a fixed direction satisfied Eq. (\ref{Eq. iterative terminate}). The theorem has been proved.

\section{Proof on Theorem \ref{The. MMSE precoder}}\label{App. MMSE precoder proof}
The proof of Theorem \ref{The. MMSE precoder} consists of the derivations on DIF and RIF. We introduce them as following:

\noindent\textbf{i) DIF:} 
Since $\mathbf{E}\sim\mathcal{N}(0,\sigma_{e}^2\mathbf{I})$, it can derive that $\mathbb{E}_{\mathbf{E}}(\mathbf{E})=\mathbf{0}_K$ and $\mathbb{E}_{\mathbf{E}}\left[\mathbf{E}^{\top}\mathbf{E}\right]=\sigma_{e}^2\mathbf{I}$. Combining Eq. (\ref{Eq. MMSE model}) and the independence between $\mathbf{E}$ and $\hat{\mathbf{H}}$, we have
\begin{align}
	&\mathbb{E}_{\mathbf{E}}\left[\|\mathbf{H}\mathbf{P}-\mathbf{D}\mathbf{A}\|_{F}^2\right]\notag\\
	=&\mathbb{E}_{\mathbf{E}}\left[\|\mathbf{F}+\mathbf{E}\mathbf{P}\|_{F}^2\right]\notag\\
	=&\|\mathbf{F}\|_{F}^2+\mathbb{E}_{\mathbf{E}}\left[\|\mathbf{E}\mathbf{P}\|_{F}^2\right]
	+2\mathrm{Tr}\left(\left(\mathbf{F}\right)^{\top}\mathbb{E}_{\mathbf{E}}\left[\mathbf{E}\right]\mathbf{P}\right)\notag\\
	=&\|\mathbf{F}\|_{F}^2+\mathrm{Tr}\left(\mathbf{P}^{\top}\mathbb{E}_{\mathbf{E}}\left[\mathbf{E}^{\top}\mathbf{E}\right]\mathbf{P}\right)\notag\\
	=&\mathrm{Tr}\left(\mathbf{F}^{\top}\mathbf{F}\right)+K\sigma_{e}^2~\mathrm{Tr}\left(\mathbf{P}^{\top}\mathbf{P}\right),
\end{align}
where $\mathbf{F}=\hat{\mathbf{H}}\mathbf{P}-\mathbf{D}\mathbf{A}$. By assuming $g(\mathbf{P})=\mathrm{Tr}\left(\mathbf{F}^{\top}\mathbf{F}\right)+K\sigma_{e}^2~\mathrm{Tr}\left(\mathbf{P}^{\top}\mathbf{P}\right)$, we can derive the optimal precoder by
\begin{align}
	\frac{\partial g(\mathbf{P})}{\partial\mathbf{P}}=&\hat{\mathbf{H}}^{\top}\left(\hat{\mathbf{H}}\mathbf{P}-\mathbf{D}\mathbf{A}\right)+K\sigma_{e}^2~\mathbf{P}=\mathbf{0}_K\notag\\
	\Rightarrow&\left(\hat{\mathbf{H}}^{\top}\hat{\mathbf{H}}+K\sigma_{e}^2\mathbf{I}\right)\mathbf{P}=\hat{\mathbf{H}}^{\top}\mathbf{D}\mathbf{A}\notag\\
	\Rightarrow&\mathbf{P}=\left(\hat{\mathbf{H}}^{\top}\hat{\mathbf{H}}+K\sigma_{e}^2\mathbf{I}\right)^{-1}\hat{\mathbf{H}}^{\top}\mathbf{D}\mathbf{A}.
\end{align}
According to the celebrated matrix inversion lemma \cite{IF}, we have
\begin{equation}
	\mathbf{P}=\hat{\mathbf{H}}^{\top}\left(\hat{\mathbf{H}}\hat{\mathbf{H}}^{\top}+K\sigma_{e}^2\mathbf{I}\right)^{-1}\mathbf{D}\mathbf{A}.
\end{equation}
Considering the power constraint, the optimal estimation-error-considered precoder in DIF is
\begin{equation}
	\mathbf{P}^{\mathrm{MMSE}}_{\mathrm{DIF}}=\frac{1}{\eta}\hat{\mathbf{H}}^{\top}\left(\hat{\mathbf{H}}\hat{\mathbf{H}}^{\top}+K\sigma_{e}^2\mathbf{I}\right)^{-1}\mathbf{D}\mathbf{A}.
\end{equation}

\noindent\textbf{ii) RIF:} 
By assuming the regularization parameter $\lambda$ for the constraint in Eq. (\ref{Eq. power constraint}), the target function in RIF is 
\begin{equation}
	\underset{\mathbf{P}}{\min}~\mathbb{E}_{\mathbf{E}}\left[\|\mathbf{H}\mathbf{P}-\mathbf{D}\mathbf{A}\|_{F}^2\right]+\lambda\left(\mathrm{Tr}\left(\mathbf{P}^{\top}\mathbf{P}\right)-\rho\right).
\end{equation}  
With the same process in DIF, we have
\begin{equation}
	g(\mathbf{P})=\mathrm{Tr}\left(\mathbf{F}^{\top}\mathbf{F}\right)+\left(\lambda+K\sigma_{e}^2\right)~\mathrm{Tr}\left(\mathbf{P}^{\top}\mathbf{P}\right)-\lambda\rho.
\end{equation}
By calculating $\frac{\partial g(\mathbf{P})}{\partial\mathbf{P}}=0$, the optimal precoder can be derived as
\begin{equation}
	\mathbf{P}=\hat{\mathbf{H}}^{\top}\left(\hat{\mathbf{H}}\hat{\mathbf{H}}^{\top}+\left(K\sigma_{e}^2+\lambda\right)\mathbf{I}\right)^{-1}\mathbf{D}\mathbf{A}.
\end{equation}
Due to the uplink-downlink duality of IF precoding \cite{he2018uplink} and equal power distribution assumption \cite{silva2017integer}, $\lambda$ can be selected as $\lambda=\frac{K}{\rho}$, and the RIF's optimal estimation-error-considered precoder under power constraint is
\begin{equation}
	\mathbf{P}^{\mathrm{MMSE}}_{\mathrm{RIF}}=\frac{1}{\eta}\hat{\mathbf{H}}^{\top}\left(\hat{\mathbf{H}}\hat{\mathbf{H}}^{\top}+\left(K\sigma_{e}^2+\frac{K}{\rho}\right)\mathbf{I}\right)^{-1}\mathbf{D}\mathbf{A}.
\end{equation}

\section{Proof on Theorem \ref{The. ML precoder}}\label{App. ML precoder proof}
We prove Theorem \ref{The. ML precoder} on DIF and RIF, which is introduced respectively as following.

\noindent \textbf{i) DIF:} 
According to the model $\mathbf{H}=\beta\hat{\mathbf{H}}+\boldsymbol{\Xi},~\boldsymbol{\Xi}|\hat{\mathbf{H}}\sim\mathcal{N}\left(\mathbf{0}_K,\sigma_{\xi}^2\mathbf{I}\right)$, we can derive that $\mathbb{E}_{\boldsymbol{\Xi}|\hat{\mathbf{H}}}\left[\boldsymbol{\Xi}\right]=\mathbf{0}_K$ and $\mathbb{E}_{\boldsymbol{\Xi}|\hat{\mathbf{H}}}\left[\boldsymbol{\Xi}^{\top}\boldsymbol{\Xi}\right]=\sigma_{\xi}^2\mathbf{I}$. Thus, we have
\begin{align}
	&\mathbb{E}_{\boldsymbol{\Xi}|\hat{\mathbf{H}}}\left[\|\mathbf{H}\mathbf{P}-\mathbf{D}\mathbf{A}\|_{F}^2\right]\notag\\
	=&\mathbb{E}_{\boldsymbol{\Xi}|\hat{\mathbf{H}}}\left[\|\mathbf{F}+\boldsymbol{\Xi}\mathbf{P}\|_{F}^2\right]\notag\\
	=&\|\mathbf{F}\|_{F}^2+\mathbb{E}_{\boldsymbol{\Xi}|\hat{\mathbf{H}}}\left[\|\boldsymbol{\Xi}\mathbf{P}\|_{F}^2\right]
	+2\mathrm{Tr}\left(\left(\mathbf{F}\right)^{\top}\mathbb{E}_{\boldsymbol{\Xi}|\hat{\mathbf{H}}}\left[\boldsymbol{\Xi}\right]\mathbf{P}\right)\notag\\
	=&\|\mathbf{F}\|_{F}^2+\mathrm{Tr}\left(\mathbf{P}^{\top}\mathbb{E}_{\boldsymbol{\Xi}|\hat{\mathbf{H}}}\left[\boldsymbol{\Xi}^{\top}\boldsymbol{\Xi}\right]\mathbf{P}\right)\notag\\
	=&\mathrm{Tr}\left(\mathbf{F}^{\top}\mathbf{F}\right)+K\sigma_{\xi}^2~\mathrm{Tr}\left(\mathbf{P}^{\top}\mathbf{P}\right),
\end{align}
where $\mathbf{F}=\beta\hat{\mathbf{H}}\mathbf{P}-\mathbf{D}\mathbf{A}$. Assume $g(\mathbf{P})=\mathrm{Tr}\left(\mathbf{F}^{\top}\mathbf{F}\right)+K\sigma_{\xi}^2~\mathrm{Tr}\left(\mathbf{P}^{\top}\mathbf{P}\right)$, the optimal precoder $\mathbf{P}$ can be calculated by
\begin{align}
	\frac{\partial g(\mathbf{P})}{\partial\mathbf{P}}=&\beta\hat{\mathbf{H}}^{\top}\left(\beta\hat{\mathbf{H}}\mathbf{P}-\mathbf{D}\mathbf{A}\right)+K\sigma_{\xi}^2~\mathbf{P}=\mathbf{0}_K\notag\\
	\Rightarrow&\left(\beta^2\hat{\mathbf{H}}^{\top}\hat{\mathbf{H}}+K\sigma_{\xi}^2\mathbf{I}\right)\mathbf{P}=\beta\hat{\mathbf{H}}^{\top}\mathbf{D}\mathbf{A}\notag\\
	\Rightarrow&\mathbf{P}=\left(\beta^2\hat{\mathbf{H}}^{\top}\hat{\mathbf{H}}+K\sigma_{\xi}^2\mathbf{I}\right)^{-1}\beta\hat{\mathbf{H}}^{\top}\mathbf{D}\mathbf{A}.
\end{align}
According to the matrix inversion lemma \cite{IF}, we have
\begin{align}\label{Eq. ML DIF precoder 1}
	\mathbf{P}=&\beta\hat{\mathbf{H}}^{\top}\left(\beta^2\hat{\mathbf{H}}\hat{\mathbf{H}}^{\top}+K\sigma_{\xi}^2\mathbf{I}\right)^{-1}\mathbf{D}\mathbf{A}
	=\hat{\mathbf{H}}^{\top}\left(\beta\hat{\mathbf{H}}\hat{\mathbf{H}}^{\top}+\frac{K\sigma_{\xi}^2}{\beta}\mathbf{I}\right)^{-1}\mathbf{D}\mathbf{A}.
\end{align}
Substituting Eq. (\ref{Eq. ML characteristic}) in Eq. (\ref{Eq. ML DIF precoder 1}), we can derive that
\begin{align}
	\mathbf{P}=&\hat{\mathbf{H}}^{\top}\left(\frac{\sigma_{h}^2}{\sigma_{h}^2+\sigma_{e}^2}\hat{\mathbf{H}}\hat{\mathbf{H}}^{\top}+K\frac{\sigma_{h}^2\sigma_{e}^2}{\sigma_{h}^2+\sigma_{e}^2}\frac{\sigma_{h}^2+\sigma_{e}^2}{\sigma_{h}^2}\mathbf{I}\right)^{-1}\mathbf{D}\mathbf{A}
	=\hat{\mathbf{H}}^{\top}\left(\frac{\sigma_{h}^2}{\sigma_{h}^2+\sigma_{e}^2}\hat{\mathbf{H}}\hat{\mathbf{H}}^{\top}+K\sigma_{e}^2\mathbf{I}\right)^{-1}\mathbf{D}\mathbf{A}.
\end{align} 
Considering the power constraint, the optimal estimation-error-considered precoder in DIF is
\begin{equation}
	\mathbf{P}^{\mathrm{ML}}_{\mathrm{DIF}}=\frac{1}{\eta}\hat{\mathbf{H}}^{\top}\left(\frac{\sigma_{h}^2}{\sigma_{h}^2+\sigma_{e}^2}\hat{\mathbf{H}}\hat{\mathbf{H}}^{\top}+K\sigma_{e}^2\mathbf{I}\right)^{-1}\mathbf{D}\mathbf{A}.
\end{equation}

\noindent \textbf{ii) RIF:} 
By assuming the regularization parameter $\lambda$ for the constraint in Eq. (\ref{Eq. power constraint}), the target function in RIF is 
\begin{equation}
	\underset{\mathbf{P}}{\min}~\mathbb{E}_{\boldsymbol{\Xi}|\hat{\mathbf{H}}}\left[\|\mathbf{H}\mathbf{P}-\mathbf{D}\mathbf{A}\|_{F}^2\right]+\lambda\left(\mathrm{Tr}\left(\mathbf{P}^{\top}\mathbf{P}\right)-\rho\right).
\end{equation}  
With the same process in DIF, we have
\begin{equation}
	g(\mathbf{P})=\mathrm{Tr}\left(\mathbf{F}^{\top}\mathbf{F}\right)+\left(\lambda+K\sigma_{\xi}^2\right)~\mathrm{Tr}\left(\mathbf{P}^{\top}\mathbf{P}\right)-\lambda\rho.
\end{equation}
By calculating $\frac{\partial g(\mathbf{P})}{\partial\mathbf{P}}=0$, the optimal precoder can be derived as
\begin{align}
	\mathbf{P}=&\hat{\mathbf{H}}^{\top}\left(\beta\hat{\mathbf{H}}\hat{\mathbf{H}}^{\top}+\frac{K\sigma_{\xi}^2+\lambda}{\beta}\mathbf{I}\right)^{-1}\mathbf{D}\mathbf{A}
	=\hat{\mathbf{H}}^{\top}\left(\frac{\sigma_{h}^2}{\sigma_{h}^2+\sigma_{e}^2}\hat{\mathbf{H}}\hat{\mathbf{H}}^{\top}+\left(K\sigma_{e}^2+\frac{\lambda(\sigma_{h}^2+\sigma_{e}^2)}{\sigma_{h}^2}\right)\mathbf{I}\right)^{-1}\mathbf{D}\mathbf{A}.
\end{align}
Due to the uplink-downlink duality of IF precoding \cite{he2018uplink} and equal power distribution assumption \cite{silva2017integer}, $\lambda$ can be selected as $\lambda=\frac{K}{\rho}$, and the RIF's optimal estimation-error-considered precoder under power constraint is
\begin{align}
	&\mathbf{P}^{\mathrm{ML}}_{\mathrm{RIF}}=
	\frac{1}{\eta}\hat{\mathbf{H}}^{\top}\left(\frac{\sigma_{h}^2}{\sigma_{h}^2+\sigma_{e}^2}\hat{\mathbf{H}}\hat{\mathbf{H}}^{\top}+\left(K\sigma_{e}^2+\frac{K(\sigma_{h}^2+\sigma_{e}^2)}{\rho\sigma_{h}^2}\right)\mathbf{I}\right)^{-1}\mathbf{D}\mathbf{A}.
\end{align}

\section{Proof on Theorem \ref{The. upper bound of RA iterative number}}\label{App. upper bound of RA iterative number proof}
According to the process in Appendix \ref{App. Convergation of proposed method proof}, when Alg. \ref{Alg. Reciprocal Approximation} converges, we have learn that
\begin{equation}\label{Eq. Convergence on g(x)}
	d_H\left(g(\mathbf{d}_{x}),g(\mathbf{d}_{y})\right)\leq\gamma d_H\left(\mathbf{d}_{x},\mathbf{d}_{y}\right).
\end{equation}
Thus, we have an inference chain denoted by
\begin{align}
	&d_H\left(\mathbf{d}^{(t+1)},\mathbf{d}^{(t)}\right)\notag\\
	=&d_H\left(\Gamma\left(\mathbf{G}^{-1}\left(\mathbf{1}_K\oslash\mathbf{d}^{(t)}\right)\right),\Gamma\left(\mathbf{G}^{-1}\left(\mathbf{1}_K\oslash\mathbf{d}^{(t-1)}\right)\right)\right)\notag\\
	=&d_H\left(\mathbf{G}^{-1}\left(\mathbf{1}_K\oslash\mathbf{d}^{(t)}\right),\mathbf{G}^{-1}\left(\mathbf{1}_K\oslash\mathbf{d}^{(t-1)}\right)\right)\notag\\
	\leq& \gamma d_H\left(\mathbf{1}_K\oslash\mathbf{d}^{(t)},\mathbf{1}_K\oslash\mathbf{d}^{(t-1)}\right)\notag\\
	=&\gamma d_H\left(\mathbf{d}^{(t)},\mathbf{d}^{(t-1)}\right)\notag\\
	\leq&\gamma^2 d_H\left(\mathbf{d}^{(t-1)},\mathbf{d}^{(t-2)}\right)\notag\\
	&\cdots\notag\\
	\leq&\gamma^{t} d_H\left(\mathbf{d}^{(1)},\mathbf{d}^{(0)}\right)\notag\\
	=&\gamma^{t} d_H\left(\mathbf{G}^{-1}(\mathbf{1}_K\oslash\mathbf{d}^{(0)}),\mathbf{d}^{(0)}\right).
\end{align}
In Alg. \ref{Alg. Reciprocal Approximation}, the algorithm ends at $d_H\left(\mathbf{d}^{(t+1)},\mathbf{d}^{(t)}\right)\leq\epsilon$. Let $d_H\left(\mathbf{d}^{(t+1)},\mathbf{d}^{(t)}\right)=\epsilon$, we can derive that
\begin{align}
	t_{\mathrm{RA}}\log\left(\gamma\right)&\geq\log\left(\frac{d_H\left(\mathbf{d}^{(t+1)},\mathbf{d}^{(t)}\right)}{d_H\left(\mathbf{d}^{(1)},\mathbf{d}^{(0)}\right)}\right)
	=\log\epsilon-\log\left(d_H\left(\mathbf{G}^{-1}(\mathbf{1}_K\oslash\mathbf{d}^{(0)}),\mathbf{d}^{(0)}\right)\right).
\end{align}
As $\gamma\in\left(0,1\right)$, $\log\left(\gamma\right)$ is within $\left(-\infty,0\right)$, so
\begin{equation}\label{Eq. upper t}
	t_{\mathrm{RA}}\leq\frac{\log\epsilon-\log\left(d_H\left(\mathbf{G}^{-1}(\mathbf{1}_K\oslash\mathbf{d}^{(0)}),\mathbf{d}^{(0)}\right)\right)}{\log\left(\gamma\right)}.
\end{equation}

Regarding the bounds of $\gamma$, we employ the contraction ratio in Birkhoff-Hopf theorem which can be calculated by
\begin{equation}\label{Eq. contraction ratio}
	\kappa(\mathbf{\mathbf{G}^{-1}})=\tanh\left(\frac{\Delta(\mathbf{G}^{-1})}{4}\right).
\end{equation}
In Eq. (\ref{Eq. contraction ratio}), the projective diameter $\Delta(\mathbf{G}^{-1})$ can be calculated in \cite{lemmens2012nonlinear} by
\begin{align}\label{Eq. gamma bounds}
	\Delta(\mathbf{G}^{-1})&=\max_{1\leq i,j\leq K}d_H\left(\mathbf{G}^{-1}\mathbf{e}_i,\mathbf{G}^{-1}\mathbf{e}_j\right)
	=\log\left(\max_{i,j,p,q}\frac{g^{'}_{p,i}g^{'}_{q,j}}{g^{'}_{p,j}g^{'}_{q,i}}\right),
\end{align}
where $\mathbf{e}_i,i\in\{1,\cdots,K\}$ represents the $i$-th column of $\mathbf{I}$ and $g^{'}_{i,j},~i,j\in\{1,\cdots,K\}$ the $\{i,j\}$ element of $\mathbf{G}^{-1}$. Combining Eq. (\ref{Eq. upper t}) and Eq. (\ref{Eq. gamma bounds}), we can obtain Eq. (\ref{Eq. upper bound of tRA}) and Eq. (\ref{Eq. projective diameter}). The theorem has been proved.

\bibliographystyle{IEEEtran}
\bibliography{ref}

@Article{MMSE,
  author    = {Hien Quoc Ngo and Erik G. Larsson and Thomas L. Marzetta},
  journal   = {{IEEE} Trans. Commun.},
  title     = {Energy and Spectral Efficiency of Very Large Multiuser {MIMO} Systems},
  year      = {2013},
  number    = {4},
  pages     = {1436--1449},
  volume    = {61},
  bibsource = {dblp computer science bibliography, https://dblp.org},
  doi       = {10.1109/TCOMM.2013.020413.110848},
  url       = {https://doi.org/10.1109/TCOMM.2013.020413.110848},
}

@Article{IF,
  author    = {Jiening Zhan and Bobak Nazer and Uri Erez and Michael Gastpar},
  journal   = {{IEEE} Trans. Inf. Theory},
  title     = {Integer-Forcing Linear Receivers},
  year      = {2014},
  number    = {12},
  pages     = {7661--7685},
  volume    = {60},
  doi       = {10.1109/TIT.2014.2361782},
}

@Article{QUANTIZED_FEEDBACK,
  author    = {Amir D. Dabbagh and David James Love},
  journal   = {{IEEE} Trans. Commun.},
  title     = {Multiple antenna {MMSE} based downlink precoding with quantized feedback or channel mismatch},
  year      = {2008},
  number    = {11},
  pages     = {1859--1868},
  volume    = {56},
  doi       = {10.1109/TCOMM.2008.060677},
}

@Article{LLL,
  author  = {Arjen K. Lenstra and Hendrik W. Lenstra and L{\'a}szl{\'o} Mikl{\'o}s Lov{\'a}sz},
  journal = {Mathematische Annalen},
  title   = {Factoring polynomials with rational coefficients},
  year    = {1982},
  pages   = {515-534},
  volume  = {261},
}

@Book{hampton2013MIMO,
  author    = {Hampton, Jerry R.},
  publisher = {Cambridge University Press},
  title     = {Introduction to MIMO Communications},
  year      = {2013},
  place     = {Cambridge},
}

@Article{hassibi2003muchmassivemimo,
  author    = {Babak Hassibi and Bertrand M. Hochwald},
  journal   = {{IEEE} Trans. Inf. Theory},
  title     = {How much training is needed in multiple-antenna wireless links?},
  year      = {2003},
  number    = {4},
  pages     = {951--963},
  volume    = {49},
  doi       = {10.1109/TIT.2003.809594},
}

@Article{wang2023XLMIMO,
  author    = {Zhe Wang and Jiayi Zhang and Hongyang Du and Wei E. I. Sha and Bo Ai and Dusit Niyato and M{\'{e}}rouane Debbah},
  journal   = {{IEEE} Wirel. Commun.},
  title     = {Extremely Large-Scale {MIMO:} Fundamentals, Challenges, Solutions, and Future Directions},
  year      = {2024},
  number    = {3},
  pages     = {117--124},
  volume    = {31},
  doi       = {10.1109/MWC.132.2200443},
}

@Article{silva2017integer,
  author    = {Danilo Silva and Gabriel Fernando Pivaro and Gustavo Fraidenraich and Behnaam Aazhang},
  journal   = {{IEEE} Trans. Wirel. Commun.},
  title     = {On Integer-Forcing Precoding for the Gaussian {MIMO} Broadcast Channel},
  year      = {2017},
  number    = {7},
  pages     = {4476--4488},
  volume    = {16},
  doi       = {10.1109/TWC.2017.2699178},
}

@Article{venturelli2020optimization,
  author    = {Ricardo Bohaczuk Venturelli and Danilo Silva},
  journal   = {{IEEE} Wirel. Commun. Lett.},
  title     = {Optimization of Integer-Forcing Precoding for Multi-User {MIMO} Downlink},
  year      = {2020},
  number    = {11},
  pages     = {1860--1864},
  volume    = {9},
  doi       = {10.1109/LWC.2020.3006393},
}

@Article{he2018uplink,
  author    = {Wenbo He and Bobak Nazer and Shlomo Shamai Shitz},
  journal   = {{IEEE} Trans. Inf. Theory},
  title     = {Uplink-Downlink Duality for Integer-Forcing},
  year      = {2018},
  number    = {3},
  pages     = {1992--2011},
  volume    = {64},
  doi       = {10.1109/TIT.2018.2791589},
}

@Book{nguyen2010lll,
  author    = {Nguyen, Phong Q and Vall{\'e}e, Brigitte},
  publisher = {Springer},
  title     = {The LLL algorithm},
  year      = {2010},
}

@Article{qiu2024lattice,
  author    = {Xinzhe Qiu and Tao Yang and John Thompson},
  journal   = {{IEEE} Trans. Wirel. Commun.},
  title     = {On Lattice-Based Broadcasting for Massive-User {MIMO:} Practical Algorithms and Optimization},
  year      = {2024},
  number    = {11},
  pages     = {16544--16558},
  volume    = {23},
  doi       = {10.1109/TWC.2024.3442787},
}

@Article{kohlberg1982contraction,
  author    = {Elon Kohlberg and John W. Pratt},
  journal   = {Math. Oper. Res.},
  title     = {The Contraction Mapping Approach to the Perron-Frobenius Theory: Why Hilbert's Metric?},
  year      = {1982},
  number    = {2},
  pages     = {198--210},
  volume    = {7},
  doi       = {10.1287/MOOR.7.2.198},
}

@Article{birkhoff1957extensions,
  author    = {Birkhoff, Garrett},
  journal   = {Transactions of the American Mathematical Society},
  title     = {Extensions of Jentzsch's theorem},
  year      = {1957},
  number    = {1},
  pages     = {219--227},
  volume    = {85},
  publisher = {JSTOR},
}

@Article{carroll2004birkhoff,
  author    = {Carroll, Joseph E},
  journal   = {Linear algebra and its applications},
  title     = {Birkhoff's contraction coefficient},
  year      = {2004},
  pages     = {227--234},
  volume    = {389},
  publisher = {Elsevier},
}

@InProceedings{liu2018basing,
  author    = {Tianren Liu},
  booktitle = {Theory of Cryptography - 16th International Conference, {TCC} 2018, Panaji, India, November 11-14, 2018, Proceedings, Part {I}},
  title     = {On Basing Search {SIVP} on NP-Hardness},
  year      = {2018},
  editor    = {Amos Beimel and Stefan Dziembowski},
  pages     = {98--119},
  publisher = {Springer},
  series    = {Lecture Notes in Computer Science},
  volume    = {11239},
  doi       = {10.1007/978-3-030-03807-6\_4},
}

@Article{rajatheva2020white,
  author  = {Rajatheva, Nandana and Atzeni, Italo and Björnson, Emil and Bourdoux, André and Buzzi, Stefano and Doré, Jean-Baptiste and Erkucuk, Serhat and Fuentes, Manuel and Guan, Ke and Hu, Yuzhou and Huang, Xiaojing and Hulkkonen, Jari and Jornet, Josep Miquel and Katz, Marcos1 and Nilsson, Rickard and Panayirci, Erdal and Rabie, Khaled and Rajapaksha, Nuwanthika and Salehi, Mohammad Javad and Sarieddeen, Hadi and Shahabuddin, Shahriar and Svensson, Tommy and Tervo, Oskari and Tölli, Antti and Wu, Qingqing and Xu, Wen},
  journal = {6G Research Visions},
  title   = {White paper on broadband connectivity in 6G},
  volume  = {10},
  editor  = {null},
  place   = {Country unknown/Code not available},
  url     = {https://par.nsf.gov/biblio/10223732},
}

@Article{hochwald2004multiple,
  author    = {Bertrand M. Hochwald and Thomas L. Marzetta and Vahid Tarokh},
  journal   = {{IEEE} Trans. Inf. Theory},
  title     = {Multiple-Antenna Channel Hardening and Its Implications for Rate Feedback and Scheduling},
  year      = {2004},
  number    = {9},
  pages     = {1893--1909},
  volume    = {50},
  doi       = {10.1109/TIT.2004.833345},
}

@Article{yu2004sum,
  author    = {Wei Yu and John M. Cioffi},
  journal   = {{IEEE} Trans. Inf. Theory},
  title     = {Sum Capacity of Gaussian Vector Broadcast Channels},
  year      = {2004},
  number    = {9},
  pages     = {1875--1892},
  volume    = {50},
  doi       = {10.1109/TIT.2004.833336},
}

@Article{hochwald2005vector,
  author    = {Bertrand M. Hochwald and Christian B. Peel and A. Lee Swindlehurst},
  journal   = {{IEEE} Trans. Commun.},
  title     = {A vector-perturbation technique for near-capacity multiantenna multiuser communication-part {II:} perturbation},
  year      = {2005},
  number    = {3},
  pages     = {537--544},
  volume    = {53},
  doi       = {10.1109/TCOMM.2004.841997},
}

@Article{sun2014tomlinson,
  author    = {Liang Sun and Matthew R. McKay},
  journal   = {{IEEE} Trans. Signal Process.},
  title     = {Tomlinson-Harashima Precoding for Multiuser {MIMO} Systems With Quantized {CSI} Feedback and User Scheduling},
  year      = {2014},
  number    = {16},
  pages     = {4077--4090},
  volume    = {62},
  doi       = {10.1109/TSP.2014.2336633},
}

@Article{costa1983writing,
  author    = {Max H. M. Costa},
  journal   = {{IEEE} Trans. Inf. Theory},
  title     = {Writing on dirty paper},
  year      = {1983},
  number    = {3},
  pages     = {439--441},
  volume    = {29},
  doi       = {10.1109/TIT.1983.1056659},
}

@Book{micciancio2002complexity,
  author    = {Daniele Micciancio, Shafi Goldwasser (auth.)},
  publisher = {Springer},
  title     = {Complexity of Lattice Problems: A Cryptographic Perspective},
  year      = {2002},
  edition   = {1},
  isbn      = {9781461352938; 1461352932; 9781461508977; 1461508975},
  series    = {The Springer International Series in Engineering and Computer Science No.671},
}

@Article{vikalo2005sphere,
  author    = {Haris Vikalo and Babak Hassibi},
  journal   = {{IEEE} Trans. Signal Process.},
  title     = {On the sphere-decoding algorithm {II.} Generalizations, second-order statistics, and applications to communications},
  year      = {2005},
  number    = {8-1},
  pages     = {2819--2834},
  volume    = {53},
  doi       = {10.1109/TSP.2005.850350},
}

@Article{zhang2021local,
  author    = {Jiayi Zhang and Jing Zhang and Emil Bj{\"{o}}rnson and Bo Ai},
  journal   = {{IEEE} Trans. Commun.},
  title     = {Local Partial Zero-Forcing Combining for Cell-Free Massive {MIMO} Systems},
  year      = {2021},
  number    = {12},
  pages     = {8459--8473},
  volume    = {69},
  doi       = {10.1109/TCOMM.2021.3110214},
}

@Article{chen2022generalized,
  author    = {Li Chen and Zhiqin Wang and Ying Du and Yunfei Chen and F. Richard Yu},
  journal   = {{IEEE} J. Sel. Areas Commun.},
  title     = {Generalized Transceiver Beamforming for {DFRC} With {MIMO} Radar and {MU-MIMO} Communication},
  year      = {2022},
  number    = {6},
  pages     = {1795--1808},
  volume    = {40},
  doi       = {10.1109/JSAC.2022.3155515},
}

@Article{wang2024efficient,
  author    = {Zheng Wang and Le Liang and Shanxiang Lyu and Yili Xia and Yongming Huang and Derrick Wing Kwan Ng},
  journal   = {{IEEE} Trans. Wirel. Commun.},
  title     = {Efficient Statistical Linear Precoding for Downlink Massive {MIMO} Systems},
  year      = {2024},
  number    = {10},
  pages     = {14805--14818},
  volume    = {23},
  doi       = {10.1109/TWC.2024.3419137},
}

@Article{albreem2021overview,
  author    = {Mahmoud A. M. Albreem and Alaa H. Al Habbash and Ammar M. Abu{-}Hudrouss and Salama S. Ikki},
  journal   = {{IEEE} Access},
  title     = {Overview of Precoding Techniques for Massive {MIMO}},
  year      = {2021},
  pages     = {60764--60801},
  volume    = {9},
  doi       = {10.1109/ACCESS.2021.3073325},
}

@InProceedings{stern2016advanced,
  author    = {Sebastian Stern and Robert F. H. Fischer},
  booktitle = {{IEEE} International Symposium on Information Theory, {ISIT} 2016, Barcelona, Spain, July 10-15, 2016},
  title     = {Advanced factorization strategies for lattice-reduction-aided preequalization},
  year      = {2016},
  pages     = {1471--1475},
  publisher = {{IEEE}},
  doi       = {10.1109/ISIT.2016.7541543},
}

@InProceedings{hong2012reverse,
  author    = {Song{-}Nam Hong and Giuseppe Caire},
  booktitle = {Proceedings of the 2012 {IEEE} International Symposium on Information Theory, {ISIT} 2012, Cambridge, MA, USA, July 1-6, 2012},
  title     = {Reverse compute and forward: {A} low-complexity architecture for downlink distributed antenna systems},
  year      = {2012},
  pages     = {1147--1151},
  publisher = {{IEEE}},
  doi       = {10.1109/ISIT.2012.6283033},
}

@Article{ordentlich2014precoded,
  author    = {Or Ordentlich and Uri Erez},
  journal   = {{IEEE} Trans. Inf. Theory},
  title     = {Precoded Integer-Forcing Universally Achieves the {MIMO} Capacity to Within a Constant Gap},
  year      = {2015},
  number    = {1},
  pages     = {323--340},
  volume    = {61},
  doi       = {10.1109/TIT.2014.2370047},
}

@Book{boyd2004convex,
  author    = {Boyd, Stephen P and Vandenberghe, Lieven},
  publisher = {Cambridge university press},
  title     = {Convex optimization},
  year      = {2004},
}

@Book{Tao2012,
  author    = {Tao, Terence},
  publisher = {American Mathematical Soc.},
  title     = {Topics in random matrix theory},
  year      = {2012},
  volume    = {132},
}

@Article{Lyu2017boost,
  author    = {Shanxiang Lyu and Cong Ling},
  journal   = {{IEEE} Trans. Signal Process.},
  title     = {Boosted {KZ} and {LLL} Algorithms},
  year      = {2017},
  number    = {18},
  pages     = {4784--4796},
  volume    = {65},
  doi       = {10.1109/TSP.2017.2708020},
}

@Book{lemmens2012nonlinear,
  author    = {Lemmens, Bas and Nussbaum, Roger},
  publisher = {Cambridge University Press},
  title     = {Nonlinear Perron-Frobenius Theory},
  year      = {2012},
  volume    = {189},
}

@Book{berman1994nonnegative,
  author    = {Berman, Abraham and Plemmons, Robert J},
  publisher = {SIAM},
  title     = {Nonnegative matrices in the mathematical sciences},
  year      = {1994},
}

@Book{horn2012matrix,
  author    = {Horn, Roger A and Johnson, Charles R},
  publisher = {Cambridge university press},
  title     = {Matrix analysis},
  year      = {2012},
}

@Article{hopf1963inequality,
  author    = {Hopf, Eberhard},
  journal   = {Journal of Mathematics and Mechanics},
  title     = {An inequality for positive linear integral operators},
  year      = {1963},
  number    = {5},
  pages     = {683--692},
  volume    = {12},
  publisher = {JSTOR},
}

@Book{tse2005fundamentals,
  author    = {Tse, David and Viswanath, Pramod},
  publisher = {Cambridge university press},
  title     = {Fundamentals of wireless communication},
  year      = {2005},
}

@Article{avner2015vector,
  author    = {Yuval Avner and Benjamin M. Zaidel and Shlomo Shamai (Shitz)},
  journal   = {{IEEE} Trans. Inf. Theory},
  title     = {On Vector Perturbation Precoding for the {MIMO} Gaussian Broadcast Channel},
  year      = {2015},
  number    = {11},
  pages     = {5999--6027},
  volume    = {61},
  doi       = {10.1109/TIT.2015.2462340},
}

@Article{houssein2021major,
  journal   = {Swarm Evol. Comput.},
  title     = {Major Advances in Particle Swarm Optimization: Theory, Analysis, and Application},
  year      = {2021},
  note      = {Withdrawn.},
  pages     = {100868},
  volume    = {63},
  doi       = {10.1016/J.SWEVO.2021.100868},
}

@Article{trelea2003particle,
  author    = {Ioan Cristian Trelea},
  journal   = {Inf. Process. Lett.},
  title     = {The particle swarm optimization algorithm: convergence analysis and parameter selection},
  year      = {2003},
  number    = {6},
  pages     = {317--325},
  volume    = {85},
  doi       = {10.1016/S0020-0190(02)00447-7},
}

@Article{wen2018efficient,
  author    = {Jinming Wen and Lanping Li and Xiaohu Tang and Wai Ho Mow},
  journal   = {{IEEE} Trans. Commun.},
  title     = {An Efficient Optimal Algorithm for the Successive Minima Problem},
  year      = {2019},
  number    = {2},
  pages     = {1424--1436},
  volume    = {67},
  doi       = {10.1109/TCOMM.2018.2877460},
}

@Article{dirk2011latticereduction,
  author    = {Dirk W{\"{u}}bben and Dominik Seethaler and Joakim Jald{\'{e}}n and Gerald Matz},
  journal   = {{IEEE} Signal Process. Mag.},
  title     = {Lattice Reduction},
  year      = {2011},
  number    = {3},
  pages     = {70--91},
  volume    = {28},
  doi       = {10.1109/MSP.2010.938758},
}

@Article{sayed2014adaptation,
  author    = {Ali H. Sayed},
  journal   = {Found. Trends Mach. Learn.},
  title     = {Adaptation, Learning, and Optimization over Networks},
  year      = {2014},
  number    = {4-5},
  pages     = {311--801},
  volume    = {7},
  doi       = {10.1561/2200000051},
}

@Article{li2025complete,
  author    = {Jianwei Li and Phong Q. Nguyen},
  journal   = {J. Cryptol.},
  title     = {A Complete Analysis of the {BKZ} Lattice Reduction Algorithm},
  year      = {2025},
  number    = {1},
  pages     = {12},
  volume    = {38},
  doi       = {10.1007/S00145-024-09527-0},
}

@Article{rudin1976principles,
  author  = {Rudin, Walter},
  journal = {3rd ed.},
  title   = {Principles of mathematical analysis},
  year    = {1976},
}

@InProceedings{NIPS2017_491442df,
  author    = {Jason M. Altschuler and Jonathan Weed and Philippe Rigollet},
  booktitle = {Advances in Neural Information Processing Systems 30: Annual Conference on Neural Information Processing Systems 2017, December 4-9, 2017, Long Beach, CA, {USA}},
  title     = {Near-linear time approximation algorithms for optimal transport via Sinkhorn iteration},
  year      = {2017},
  editor    = {Isabelle Guyon and Ulrike von Luxburg and Samy Bengio and Hanna M. Wallach and Rob Fergus and S. V. N. Vishwanathan and Roman Garnett},
  pages     = {1964--1974},
}

@Article{jiang2018accurate,
  author    = {Fan Jiang and Cheng Li and Zijun Gong},
  journal   = {{IEEE} Signal Process. Lett.},
  title     = {Accurate Analytical {BER} Performance for {ZF} Receivers Under Imperfect Channel in Low-SNR Region for Large Receiving Antennas},
  year      = {2018},
  number    = {8},
  pages     = {1246--1250},
  volume    = {25},
  doi       = {10.1109/LSP.2018.2849683},
}

@Article{wang2007performance,
  author    = {Cheng Wang and Edward K. S. Au and Ross D. Murch and Wai Ho Mow and Roger S. Cheng and Vincent K. N. Lau},
  journal   = {{IEEE} Trans. Wirel. Commun.},
  title     = {On the Performance of the {MIMO} Zero-Forcing Receiver in the Presence of Channel Estimation Error},
  year      = {2007},
  number    = {3},
  pages     = {805--810},
  volume    = {6},
  doi       = {10.1109/TWC.2007.05384},
}

@Article{li2016new,
  author    = {Cheng Li and Fan Jiang and Chuiyang Meng and Zijun Gong},
  journal   = {{IEEE} Commun. Lett.},
  title     = {A New Turbo Equalizer Conditioned on Estimated Channel for {MIMO} {MMSE} Receiver},
  year      = {2017},
  number    = {4},
  pages     = {957--960},
  volume    = {21},
  doi       = {10.1109/LCOMM.2016.2638823},
}

\end{document}